\definecolor{dark-gray}{gray}{0.3}
\definecolor{dkblue}{rgb}{0,0,0.3}
\definecolor{dkred}{rgb}{0.5,0,0}
\newtheorem{thm}{Theorem}
\newtheorem{cor}[thm]{Corollary}
\newtheorem{lemma}[thm]{Lemma}
\newtheorem{prop}[thm]{Proposition}
\newtheorem{defn}[thm]{Definition}
\theoremstyle{remark}
\numberwithin{equation}{section}
\numberwithin{thm}{section}
\newcommand{\term}{\emph}
\renewcommand{\phi}{\varphi}
\newcommand{\econst}{\mathrm{e}}
\newcommand{\Id}{\mathbf{I}}
\newcommand{\zeromtx}{\bm{0}}
\newcommand{\coll}[1]{\mathscr{#1}}
\newcommand{\R}{\mathbb{R}}
\newcommand{\C}{\mathbb{C}}
\newcommand{\M}{\mathbb{M}}
\newcommand{\Sym}{\mathbb{H}}
\newcommand{\B}{\mathbb{B}}
\newcommand{\diff}[1]{\mathrm{d}{#1}}
\newcommand{\idiff}[1]{\, \diff{#1}}
\newcommand{\Prob}[1]{\mathbb{P}\left\{ {#1} \right\}}
\newcommand{\Expect}{\operatorname{\mathbb{E}}}
\newcommand{\vct}[1]{\bm{#1}}
\newcommand{\mtx}[1]{\bm{#1}}
\newcommand{\adj}{*}
\newcommand{\diag}{\operatorname{diag}}
\newcommand{\trace}{\operatorname{tr}}
\newcommand{\psdle}{\preccurlyeq}
\newcommand{\ip}[2]{\left\langle {#1},\, {#2} \right\rangle}
\newcommand{\norm}[1]{\left\Vert {#1} \right\Vert}
\newcommand{\ntr}{\operatorname{\bar{tr}}}
\newcommand{\condl}{\, \vert \, }
\newcommand{\op}[1]{\mathsf{#1}}
\begin{document}

\title[Subadditivity of Matrix $\phi$-Entropy]
{Subadditivity of Matrix $\phi$-Entropy \\ and Concentration of Random Matrices}
\author{Richard~Y.~Chen and Joel A. Tropp }
\date{13 August 2013}

\maketitle
\begin{abstract}
Matrix concentration inequalities provide a direct way to bound the typical spectral norm of a random matrix.  The methods for establishing these results often parallel classical arguments, such as the Laplace transform method.  This work develops a matrix extension of the entropy method, and it applies these ideas to obtain some matrix concentration inequalities.
\end{abstract}

\section{Introduction}

Matrix concentration inequalities offer a direct way to bound the spectral-norm deviation of a random  matrix from its mean~\cite{Ahlswede2002,Oliveira2009,Tropp2011,Tropp2012,Minsker2012,Mackey2012,Paulin2013}.  Owing to their simplicity and ease of use, these results have already found applications in a wide range of areas, including random graph theory \cite{Oliveira2009, Chaudhuri2012}, randomized linear algebra \cite{Drineas2011,Tro11:Improved-Analysis,Boutsidis2012}, least-squares approximation \cite{Cohen2011}, compressed sensing \cite{Ahmed2012,Tang2012}, and matrix completion \cite{Recht2011, Mackey2011, Koltchinskii2011}.

The most effective methods for developing matrix concentration inequalities parallel familiar scalar arguments.  For example, it is possible to mimic the Laplace transform method of Bernstein to obtain powerful results for sums of independent random matrices~\cite{Ahlswede2002,Oliveira2010,Tropp2012}.
Several other papers have adapted martingale methods to the matrix setting~\cite{Oliveira2009,Tropp2011,Minsker2012}.  A third line of work~\cite{Mackey2012,Paulin2013} contains a matrix extension of Chatterjee's techniques~\cite{Chatterjee2007,Chatterjee2008} for proving concentration inequalities via Stein's method of exchangeable pairs.  See the survey~\cite{Tro12:User-Friendly} for a more complete bibliography.

In spite of these successes, the study of matrix concentration inequalities is by no means complete.  Indeed, one frequently encounters random matrices that do not submit to existing techniques.  The aim of this paper is to explore the prospects for adapting $\phi$-Sobolev inequalities~\cite{Latala2000,Chafai2004,Boucheron2005} to the matrix setting.  By doing so, we hope to obtain concentration inequalities that hold for general matrix-valued functions of independent random variables.

It is indeed possible to obtain matrix analogs of the scalar $\phi$-Sobolev inequalities for product spaces that appear in~\cite{Boucheron2005}.  This theory leads to some interesting concentration inequalities for random matrices.  On the other hand, this method is not as satisfying as some other approaches to matrix concentration because the resulting bounds seem to require artificial assumptions.  Nevertheless, we believe it is worthwhile to document the techniques and to indicate where matrix $\phi$-Sobolev inequalities differ from their scalar counterparts.

\subsection{Notation and Background} \label{subsection:notation}

Before we can discuss our main results, we must instate some notation.  The set $\R_+$ contains the nonnegative real numbers, and $\R_{++}$ consists of all positive real numbers.  We write $\mathbb{M}^d$ for the complex Banach space of $d\times d$ complex matrices, equipped with the usual $\ell_2$ operator norm $\norm{\cdot}$.  The \term{normalized trace} is the function
$$
\ntr \mtx{B} := \frac{1}{d} \sum\nolimits_{j=1}^d b_{jj}
\quad\text{for $\mtx{B} \in \M^d$.}
$$
The theory can be developed using the standard trace, but additional complications arise.

The set $\Sym^d$ refers to the real-linear subspace of $d \times d$ Hermitian matrices in $\M^d$.  For a matrix $\mtx{A} \in \Sym^d$, we write $\lambda_{\min}(\mtx{A})$ and $\lambda_{\max}(\mtx{A})$ for the algebraic minimum and maximum eigenvalues.  For each interval $I \subset \R$, we define the set of Hermitian matrices whose eigenvalues fall in that interval:
\begin{equation*}
\Sym^d(I): = \{\mtx{A} \in \Sym^d: [\lambda_{\min}(\mtx{A}), \lambda_{\max}(\mtx{A})] \subset I \}.
\end{equation*}
We also introduce the set $\Sym^d_+$ of $d \times d$ positive-semidefinite matrices and the set $\Sym^d_{++}$ of $d \times d$ positive-definite matrices.  Curly inequalities refer to the positive-semidefinite order.  For example, $\mtx{A} \psdle \mtx{B}$ means that $\mtx{B} - \mtx{A}$ is positive semidefinite.

Next, let us explain how to extend scalar functions to matrices.  Recall that each Hermitian matrix $\mtx{A} \in \Sym^d$ has a \term{spectral resolution}
\begin{equation} \label{eqn:spectral}
\mtx{A} = \sum\nolimits_{i=1}^d \lambda_i \mtx{P}_i,
\end{equation}
where $\lambda_1, \dots, \lambda_d$ are the eigenvalues of $\mtx{A}$ and the matrices $\mtx{P}_1, \dots, \mtx{P}_d$ are orthogonal projectors that satisfy the orthogonality relations
\begin{equation*}
\mtx{P}_i \mtx{P}_j = \delta_{ij} \mtx{P}_j \quad \text{and}\quad \sum\nolimits_{i=1}^d \mtx{P}_i = \Id,
\end{equation*}
where $\delta_{ij}$ is the Kronecker delta and $\Id$ is the identity matrix.
One obtains a standard matrix function by applying a scalar function to the spectrum of a Hermitian matrix.

\begin{defn}[Standard Matrix Function] \label{defn:matrix-function}
Let $f: I \mapsto \R$ be a function on an interval $I$ of the real line.  Suppose that $\mtx{A} \in \Sym^d(I)$ has the spectral decomposition \eqref{eqn:spectral}. Then
\begin{equation*}
f(\mtx{A}) : = \sum\nolimits_{i=1}^d f(\lambda_i) \mtx{P}_i.
\end{equation*}
\end{defn}

\noindent
We use lowercase Roman and Greek letters to refer to standard matrix functions.  When we apply a familiar real-valued function to an Hermitian matrix, we are referring to the associated standard matrix function.  Bold capital letters such as $\mtx{Y}, \mtx{Z}$ denote general matrix functions that are not necessarily standard.

\subsection{Subadditivity of Matrix Entropies}

In this section, we provide an overview of the theory of matrix $\phi$-entropies.  Our entire approach has a strong parallel with the work of Boucheron et al.~\cite{Boucheron2005}.  In the matrix setting, however, the technical difficulties
are more formidable.

\subsubsection{The Class of Matrix Entropies}

First, we carve out a class of standard matrix functions that we can use to construct matrix entropies with the same subadditivity properties as their scalar counterparts.

\begin{defn} [$\Phi_{d}$ Function Class] \label{defn:function-intro}
Let $d$ be a natural number.  The class $\Phi_d$ contains each function $\phi : \R_+ \to \R$ that is either affine or else satisfies the following three conditions.
\begin{enumerate}
\item $\phi$ is continuous and convex.
\item $\phi$ has two continuous derivatives on $\R_{++}$.
\item Define $\psi(t) = \phi'(t)$ for $t \in \R_{++}$.
The derivative $\op{D}\psi$ of the standard matrix function $\psi : \Sym^d_{++} \to \Sym^d$ is an invertible linear operator on $\Sym_{++}^d$, and the map $\mtx{A} \mapsto [\op{D}\psi(\mtx{A})]^{-1}$ is concave with respect to the semidefinite order on operators.
\end{enumerate}
\end{defn}

\noindent
The technical definitions that support requirement (3) appear in Section~\ref{section:preliminaries}.  For now, we just remark that the scalar equivalent of (3) is the statement that $t \mapsto [\phi''(t)]^{-1}$ is concave on $\R_{++}$.

The class $\Phi_1$ coincides with the $\Phi$ function class considered in~\cite{Boucheron2005}.  It can be shown that $\Phi_{d+1} \subseteq \Phi_{d}$ for each natural number $d$, so it is appropriate to introduce the \term{class of matrix entropies}:
$$
\Phi_{\infty} := \bigcap\nolimits_{d=1}^\infty \Phi_d
$$
This class consists of scalar functions that satisfy the conditions of Definition~\ref{defn:function-intro} for an arbitrary choice of dimension $d$.
Note that $\Phi_{\infty}$ is a convex cone: it contains all positive multiples and all finite sums of its elements.

In contrast to the scalar setting, it is not easy to determine what functions are contained in $\Phi_{\infty}$.  The main technical achievement of this paper is to demonstrate that the standard entropy and certain power functions belong to the matrix entropy class.

\begin{thm}[Elements of the Matrix Entropy Class] \label{thm:membership}
The following functions are members of the $\Phi_{\infty}$ class.
\begin{enumerate}
\item	The standard entropy $t \mapsto t\log t$.

\item	The power function $t \mapsto t^p$ for each $p \in [1,2]$.
\end{enumerate}
\end{thm}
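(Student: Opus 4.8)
The plan is to verify condition (3) of Definition~\ref{defn:function-intro} for each candidate $\phi$, since conditions (1) and (2) are immediate in both cases. Fix a dimension $d$. For $\phi(t) = t\log t$ we have $\psi(t) = \phi'(t) = \log t + 1$, so the relevant standard matrix function is $\psi(\mtx{A}) = \log(\mtx{A}) + \Id$ and $\op{D}\psi(\mtx{A}) = \op{D}\log(\mtx{A})$. For $\phi(t) = t^p$ with $p \in [1,2]$ we have $\psi(t) = p t^{p-1}$ and $\op{D}\psi(\mtx{A}) = p(p-1)\,\op{D}(t^{p-1})(\mtx{A})$. In each case the derivative of a standard matrix function at a point $\mtx{A}$ with spectral resolution $\mtx{A} = \sum_i \lambda_i \mtx{P}_i$ acts on a direction $\mtx{H}$ through the Daleckii--Krein formula: $\op{D}f(\mtx{A})[\mtx{H}] = \sum_{i,j} f^{[1]}(\lambda_i,\lambda_j)\, \mtx{P}_i \mtx{H} \mtx{P}_j$, where $f^{[1]}$ is the divided difference. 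This expresses $\op{D}\psi(\mtx{A})$ as a positive linear operator on $\Sym^d$ (indeed on $\Sym^d_{++}$), invertibility of which follows from strict positivity of the divided differences $\psi^{[1]}(\lambda_i,\lambda_j) > 0$ for $\psi$ strictly increasing on $\R_{++}$. So the substantive task is concavity of $\mtx{A} \mapsto [\op{D}\psi(\mtx{A})]^{-1}$ in the semidefinite order on operators.

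For this I would pass to an integral representation that linearizes the inverse. The key tool is the Schwinger/Frullani-type identity that writes $\op{D}\log(\mtx{A})[\mtx{H}] = \int_0^\infty (\mtx{A}+s\Id)^{-1}\mtx{H}(\mtx{A}+s\Id)^{-1}\idiff{s}$, and more generally power-function analogues for $\op{D}(t^{p-1})$ obtained from $t^{p-1} = c_p \int_0^\infty \tfrac{t}{t+s}\, s^{p-2}\idiff{s}$ for a suitable constant $c_p$ when $p-1 \in (0,1)$ (the case $p=1$ is affine and trivial; the case $p=2$ gives $\psi$ affine in $\mtx{A}$ up to scaling, where $\op{D}\psi$ is constant and the claim is vacuous). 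Under such a representation, $\op{D}\psi(\mtx{A})$ becomes an average over $s$ of the elementary congruence operators $\op{T}_{\mtx{A},s} : \mtx{H} \mapsto \mtx{C}_s \mtx{H} \mtx{C}_s$ where $\mtx{C}_s = (\mtx{A}+s\Id)^{-1/2}$ (or a resolvent), weighted by a positive measure. The inverse of such a congruence operator is again a congruence operator, $\mtx{H}\mapsto \mtx{C}_s^{-1}\mtx{H}\mtx{C}_s^{-1} = (\mtx{A}+s\Id)^{1/2}\mtx{H}(\mtx{A}+s\Id)^{1/2}$, whose dependence on $\mtx{A}$ is affine — hence both concave and convex. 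The difficulty is that the inverse of an average is not the average of the inverses; one needs an operator-convexity/concavity argument to push the inversion through the integral.

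The main obstacle, then, is establishing an operator-Jensen inequality at the level of superoperators: that $\mtx{A} \mapsto [\op{D}\psi(\mtx{A})]^{-1}$ is concave given that $\op{D}\psi(\mtx{A})$ is an average of operators $\op{T}_{\mtx{A},s}$ each of whose inverse is affine in $\mtx{A}$. I would handle this by working on the Hilbert space $\Sym^d$ equipped with the trace inner product, where each $\op{D}\psi(\mtx{A})$ is a positive-definite self-adjoint operator, and invoke the fact that operator inversion $\op{S} \mapsto \op{S}^{-1}$ is operator-convex on positive-definite operators, combined with a joint-concavity statement for the map $(\mtx{A},\mtx{H}) \mapsto \ip{\mtx{H}}{[\op{D}\psi(\mtx{A})]^{-1}\mtx{H}}$ analogous to the Lieb concavity / Ando convexity theorems. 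Concretely, one shows that $\ip{\mtx{H}}{[\op{D}\psi(\mtx{A})]^{-1}\mtx{H}}$ admits a variational (minimization) representation over auxiliary matrices whose value is a concave function of $\mtx{A}$ for each fixed competitor, so the pointwise infimum is concave. Getting the right variational principle — and checking it is compatible with the integral representation of $\op{D}\psi$ for the specific weights $s^{p-2}\idiff{s}$ and $\idiff{s}$ arising from $t\log t$ and $t^p$ — is where the real work lies; the remaining steps (continuity, invertibility, reduction of the $p=1,2$ endpoints) are routine.
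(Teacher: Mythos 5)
The whole theorem reduces to condition (3) of Definition~\ref{defn:function-intro}, the concavity of $\mtx{A} \mapsto [\op{D}\psi(\mtx{A})]^{-1}$, and that is precisely the step your proposal does not prove. Your integral representations express $\op{D}\psi(\mtx{A})$ itself (not its inverse) as an average of congruence operators, and you correctly observe that the inverse of an average is not the average of inverses; but the fix you sketch is not supplied, and the one concrete claim you make in its support is wrong: the inverse of the elementary congruence $\mtx{H}\mapsto(\mtx{A}+s\Id)^{-1}\mtx{H}(\mtx{A}+s\Id)^{-1}$ depends on $\mtx{A}$ through $(\mtx{A}+s\Id)\otimes(\mtx{A}+s\Id)$ (or, in your square-root normalization, $(\mtx{A}+s\Id)^{1/2}\otimes(\mtx{A}+s\Id)^{1/2}$), which is not affine in $\mtx{A}$; proving that such maps are concave is itself a Lieb-concavity-type theorem, i.e., exactly the difficulty at issue. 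The variational route is also only gestured at: the standard variational principle for an inverse quadratic form, $\ip{\mtx{H}}{\op{T}^{-1}(\mtx{H})} = \sup_{\mtx{K}}\{2\real\ip{\mtx{H}}{\mtx{K}} - \ip{\mtx{K}}{\op{T}(\mtx{K})}\}$, is a supremum, and a supremum of concave functions need not be concave, so it does not close the argument; you would need to exhibit and prove a minimization representation with integrands concave in $\mtx{A}$, which you explicitly defer. (Nor does convexity of $\mtx{A}\mapsto\op{D}\psi(\mtx{A})$ combined with order-reversal and convexity of operator inversion help: the inequalities point the wrong way.) So there is a genuine gap at the heart of the proof.

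For comparison, the paper avoids the inversion-of-an-average problem entirely by inverting $\op{D}\psi(\mtx{A})$ exactly: in an eigenbasis of $\mtx{A}$ the Dalecki{\u\i}--Kre{\u\i}n operator is Schur multiplication by $\psi^{[1]}(a_i,a_j)$, so its inverse is Schur multiplication by the reciprocals, and the key observation is that $1/\psi^{[1]}$ is itself a divided difference of the inverse function---of $t\mapsto\econst^t$ when $\psi=\log$, of $t\mapsto t^{1/p}$ when $\psi(t)=t^p$---whose Hermite representation yields the closed-form formulas $[\op{D}\psi(\mtx{A})]^{-1}=\int_0^1\mtx{A}^{\tau}\otimes\mtx{A}^{1-\tau}\idiff{\tau}$ and $[\op{D}\psi(\mtx{A})]^{-1}=\tfrac1p\int_0^1\big(\tau\,\mtx{A}^p\otimes\Id+(1-\tau)\,\Id\otimes\mtx{A}^p\big)^{(1-p)/p}\idiff{\tau}$. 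Concavity of each integrand then follows from the Kubo--Ando concavity of operator means, applied with $f(a)=a^{\tau}$ (the Lieb Concavity Theorem) in the entropy case and with the operator monotone function $f(a)=(\tau a^p+1-\tau)^{1/p}$ (Ando) composed with the operator monotone, operator concave map $t\mapsto t^{1-p}$ in the power case; integration over $\tau$ preserves concavity. To repair your proposal you would need either such an explicit integral representation of the inverse, or a proved joint-concavity/variational statement of the kind you allude to; as written, the central claim remains unestablished.
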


\noindent
The statement about the classical entropy can be obtained from standard results in matrix theory, but the statement for power functions demands some effort.
The proof of Theorem~\ref{thm:membership} appears in Section \ref{section:member}.

\subsubsection{Matrix \texorpdfstring{$\phi$}{phi}-Entropy}

For each function in the matrix entropy class, we can introduce a generalized entropy functional that measures the amount of fluctuation in a random matrix.

\begin{defn}[Matrix $\phi$-Entropy] \label{defn:entropy-intro}
Let $\phi \in \Phi_{\infty}$.  Consider a random matrix $\mtx{Z}$ taking values in $\Sym_+^d$, and assume that $\Expect \norm{ \mtx{Z} } < \infty$ and $\Expect \norm{\phi(\mtx{Z})} < \infty$.  The matrix $\phi$-entropy functional $H_{\phi}$ is
\begin{equation} \label{eqn:entropy-intro}
H_\phi(\mtx{Z}) := \Expect  \ntr \phi(\mtx{Z}) - \ntr \phi( \Expect \mtx{Z}).
\end{equation}
Similarly, the conditional matrix $\phi$-entropy functional is
\begin{equation*}
H_{\phi}(\mtx{Z} \condl \coll{F}) := \Expect\big[ \ntr \phi(\mtx{Z}) \condl \coll{F} \big]
	- \ntr \phi\big(\Expect [\mtx{Z} \condl \coll{F} ] \big),
\end{equation*}
where $\coll{F}$ is a subalgebra of the master sigma algebra.  
\end{defn}

For each convex function $\phi$, the trace function $\ntr \phi : \Sym^d_+ \to \R$ is also convex~\cite[Sec.~2.2]{Carlen2010}.  Therefore, Jensen's inequality implies that the matrix $\phi$-entropy is nonnegative:
\begin{equation*} %
H_{\phi}(\mtx{Z}) \geq 0.
\end{equation*}
For concreteness, here are some basic examples of matrix $\phi$-entropy functionals.
\begin{align*}
&H_{\phi}(\mtx{Z}) = \ntr \big[\Expect (\mtx{Z} \log \mtx{Z} )- (\Expect \mtx{Z}) \log (\Expect \mtx{Z}) \big]
	&&\text{when $\phi(t)= t \log t$.} \\
&H_{\phi}(\mtx{Z}) = \ntr \big[\Expect (\mtx{Z}^p)- (\Expect \mtx{Z})^p \big]
	&&\text{when $\phi(t) = t^p$ for $p \in [1, 2].$} \\
&H_{\phi}(\mtx{Z}) = 0
	&&\text{when $\phi$ is affine.}
\end{align*}

\subsubsection{Subadditivity of Matrix \texorpdfstring{$\phi$}{phi}-Entropy} \label{subsection:subadditivity}

The key fact about matrix $\phi$-entropies is that they satisfy a subadditivity property.
Let $\vct{x} := (X_1, \dots, X_n)$ denote a vector of independent random variables taking values in a Polish space, and write $\vct{x}_{-i}$ for the random vector obtained by deleting the $i$th entry of $\vct{x}$.
\begin{equation*}
\vct{x}_{-i} := (X_1, \dots, X_{i-1}, X_{i+1}, \dots, X_n).
\end{equation*}
Consider a positive-semidefinite random matrix $\mtx{Z}$ that can be expressed as a measurable function of the random vector $\vct{x}$.
$$
\mtx{Z} := \mtx{Z}(X_1, \dots, X_n) \in \Sym_+^d.
$$
We instate the integrability conditions $\Expect \norm{\mtx{Z}} < \infty$ and $\Expect \norm{\phi(\mtx{Z})} < \infty$.

\begin{thm}[Subadditivity of Matrix $\phi$-Entropy] \label{thm:subadditivity-intro}
Fix a function $\phi \in \Phi_{\infty}$.  Under the prevailing assumptions,
\begin{equation} \label{eqn:subadditivity-intro}
H_\phi( \mtx{Z}) \leq  \sum\nolimits_{i=1}^n \Expect \left[  H_{\phi}(\mtx{Z} \condl \vct{x}_{-i}) \right].
\end{equation}
\end{thm}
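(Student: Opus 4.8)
The plan is to import the architecture of the scalar argument of Boucheron et al.~\cite{Boucheron2005}, while isolating the one place where the matrix structure, specifically requirement~(3) of Definition~\ref{defn:function-intro}, carries the load. If $\phi$ is affine then $H_\phi \equiv 0$ and \eqref{eqn:subadditivity-intro} reduces to $0 \le 0$, so I may assume $\phi$ satisfies conditions (1)--(3); in particular $\psi = \phi'$ is defined with $\op{D}\psi$ invertible and $\mtx{A}\mapsto[\op{D}\psi(\mtx{A})]^{-1}$ operator concave.

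First I would reduce \eqref{eqn:subadditivity-intro} to a one-step estimate. It suffices to treat the case $n=2$: if $\vct{x}$ splits into two independent blocks $\vct{y},\vct{z}$, then $H_\phi(\mtx{Z}) \le \Expect[H_\phi(\mtx{Z}\condl\vct{y})] + \Expect[H_\phi(\mtx{Z}\condl\vct{z})]$. The bound for general $n$ follows by induction, taking $\vct{y}=X_1$ and $\vct{z}=\vct{x}_{-1}$, bounding the $\vct{z}$-term by the two-block inequality and applying the inductive hypothesis to the remaining $n-1$ coordinates conditionally on $X_1$. Equivalently, one introduces the Doob filtration $\mathsf{E}_i := \Expect[\cdot \condl X_1,\dots,X_i]$, telescopes $\ntr\phi(\mtx{Z}) - \ntr\phi(\Expect\mtx{Z}) = \sum_i [\ntr\phi(\mathsf{E}_i\mtx{Z}) - \ntr\phi(\mathsf{E}_{i-1}\mtx{Z})]$, takes expectations, and collapses the result onto the conditional entropies using the tower property. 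This bookkeeping is routine.

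The substance is a Legendre-duality representation of the conditional matrix $\phi$-entropy,
\[
H_\phi(\mtx{Z}\condl\coll{F}) \;=\; \sup_{\mtx{T}} \ \Expect\big[\ntr \Psi_\phi(\mtx{Z},\mtx{T}) \condl \coll{F}\big],
\]
the supremum running over random matrices $\mtx{T}\in\Sym_{++}^d$ with suitable integrability, where $\Psi_\phi(\mtx{Z},\mtx{T})$ equals $\phi(\mtx{Z})-\phi(\mtx{T})$ plus a term affine in $\mtx{Z}$ assembled from $\psi(\mtx{T})$ and its $\coll{F}$-conditional mean, normalized so that the choice $\mtx{T}=\Expect[\mtx{Z}\condl\coll{F}]$ reproduces the defining expression \eqref{eqn:entropy-intro}; this gives ``$\ge$'' for free. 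The reverse inequality, that $\mtx{T}=\Expect[\mtx{Z}\condl\coll{F}]$ is optimal, is where condition~(3) enters: writing the gap as the integral remainder of a second-order Taylor expansion of $\ntr\phi$ along the segment joining $\mtx{T}$ to $\Expect[\mtx{Z}\condl\coll{F}]$ leaves a trace inequality whose validity is precisely the operator concavity of $\mtx{A}\mapsto[\op{D}\psi(\mtx{A})]^{-1}$. Carrying out this expansion requires the Fréchet-derivative calculus of Section~\ref{section:preliminaries}, in particular the chain and product rules for standard matrix functions and the identity $\op{D}(\ntr\phi)(\mtx{A})[\mtx{H}] = \ntr(\psi(\mtx{A})\mtx{H})$. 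Granting the representation, the one-step estimate, and hence \eqref{eqn:subadditivity-intro}, follows as in the scalar case: fix a near-optimal $\mtx{T}^\star$ for $H_\phi(\mtx{Z})$, feed its conditional projections $\mathsf{E}_i\mtx{T}^\star$ into the conditional representations, telescope the part of $\Psi_\phi$ that is linear in $\mtx{Z}$, and control the remaining convex part by Jensen; each conditional supremum is at most $H_\phi(\mtx{Z}\condl\vct{x}_{-i})$ by the representation applied to $\coll{F}=\sigma(\vct{x}_{-i})$.

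The main obstacle is the reverse inequality in the duality formula. Because $\mtx{Z}$ and $\mtx{T}$ need not commute one cannot diagonalize them simultaneously, and $\op{D}\psi$ is a genuine linear operator on $\Sym^d$ rather than a multiplication operator, so the scalar manipulation ``the $s$-derivative of the Bregman term is proportional to $\phi''$'' must be replaced by careful computations with $\op{D}\psi$ and its inverse along the segment of integration; this is exactly why requirement~(3) is phrased at the operator level. The remaining points, finiteness of the conditional expectations defining $H_\phi$, legitimacy of differentiating under the integral sign, and attainment of the suprema, are handled routinely via dominated convergence together with the hypotheses $\Expect\norm{\mtx{Z}}<\infty$ and $\Expect\norm{\phi(\mtx{Z})}<\infty$.
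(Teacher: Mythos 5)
Your overall architecture---reduce to a one-coordinate step by conditioning and induction, and power that step with a variational representation of the matrix $\phi$-entropy whose proof is where condition~(3) of Definition~\ref{defn:function-intro} enters---is the same as the paper's. But the central lemma you invoke is misstated in a way that matters. You posit a \emph{supremum} representation of $H_\phi(\mtx{Z} \condl \coll{F})$ whose integrand contains $\phi(\mtx{Z})-\phi(\mtx{T})$ and whose optimizer is $\mtx{T}=\Expect[\mtx{Z}\condl\coll{F}]$. A Bregman-type integrand extremized at the conditional mean is the paper's \emph{infimum} representation (Lemma~\ref{lem:entropy-bound}): by Klein's inequality the conditional mean minimizes that functional over $\coll{F}$-measurable competitors, condition~(3) plays no role there, and over unrestricted random $\mtx{T}$ the functional is in general unbounded above, so the claimed supremum identity fails as stated. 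The representation that actually drives Theorem~\ref{thm:subadditivity-intro} is Lemma~\ref{lem:lower-variation}: the integrand $(\psi(\mtx{T})-\psi(\Expect\mtx{T}))(\mtx{Z}-\mtx{T})+\phi(\mtx{T})-\phi(\Expect\mtx{T})$ is \emph{affine in $\mtx{Z}$} (no $\phi(\mtx{Z})$ term), and the optimizer is $\mtx{T}=\mtx{Z}$, not the mean. That affineness is precisely what allows the supremum and the conditional expectation to be exchanged (the conditional Jensen inequality, Lemma~\ref{lem:jensen}), which is the engine behind the telescoping/induction you sketch; with a residual $\phi(\mtx{Z})$ term in the integrand, ``control the remaining convex part by Jensen'' does not deliver the needed bound $H_\phi(\Expect_1\mtx{Z})\le\Expect_1 H_\phi(\mtx{Z}\condl X_1)$, because what is required there is convexity of the entropy functional as a supremum of affine maps of $\mtx{Z}$, not convexity of $\ntr\phi$ alone.

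Two further points on the hard direction. First, the inequality that every competitor $\mtx{T}$ yields a value at most $H_\phi(\mtx{Z})$ is proved in the paper by interpolating $\mtx{T}_s=(1-s)\mtx{Z}+s\mtx{T}$ and showing the derivative of the resulting functional is nonpositive; the sign rests on the expectation bound $\Expect \ip{\mtx{K}}{\op{D}\psi(\mtx{T}_s)(\mtx{K})} \geq \ip{\Expect\mtx{K}}{\op{D}\psi(\Expect\mtx{T}_s)(\Expect\mtx{K})}$ of Lemma~\ref{lem:technical}. This is not ``precisely'' the concavity of $\mtx{A}\mapsto[\op{D}\psi(\mtx{A})]^{-1}$: to convert that hypothesis into the expectation inequality one also needs the joint convexity of $(\mtx{A},\op{T})\mapsto\ip{\mtx{A}}{\op{T}^{-1}(\mtx{A})}$ and the order-reversal of the operator inverse. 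Second, differentiating under the expectation along the interpolation is legitimate only after truncating so that the spectra of $\mtx{Z}$ and $\mtx{T}$ are bounded and bounded away from zero; removing the truncation is not ``routine dominated convergence'' from $\Expect\norm{\mtx{Z}}<\infty$ and $\Expect\norm{\phi(\mtx{Z})}<\infty$ alone, but requires the limiting argument of Appendix~\ref{app:general} via the generalized Klein inequality. Your reduction to a one-step estimate, the identity $\op{D}(\ntr\phi)(\mtx{A})(\mtx{H})=\ntr(\psi(\mtx{A})\mtx{H})$, and the observation that the affine case is trivial are all fine and agree with the paper.
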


Typically, we apply Theorem~\ref{thm:subadditivity-intro} by way of a corollary.
Let $X'_1, \dots, X'_n$ denote independent copies of $X_1, \dots, X_n$, and form the random matrix
\begin{equation} \label{eqn:Z-i}
\mtx{Z}'_i := \mtx{Z}(X_1, \dots, X_{i-1}, X'_i, X_{i+1}, \dots, X_n) \in \Sym_+^d.
\end{equation}
Then $\mtx{Z}'_i$ and $\mtx{Z}$ are independent and identically distributed, conditional on the sigma algebra generated by $\vct{x}_{-i}$.  In particular, these two random matrices are exchangeable counterparts.

\begin{cor}[Entropy Bounds via Exchangeability] \label{cor:subadditivity-intro}
Fix a function $\phi \in \Phi_\infty$, and write $\psi = \phi'$.
With the prevailing notation,
\begin{equation} \label{eqn:cor-subadditivity-intro}
H_\phi( \mtx{Z}) \leq \frac{1}{2} \sum\nolimits_{i=1}^n \Expect  \ntr\Big[ (\mtx{Z} - \mtx{Z}'_i )(\psi(\mtx{Z}) - \psi(\mtx{Z}'_i)) \Big].
\end{equation}
\end{cor}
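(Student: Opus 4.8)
The plan is to obtain Corollary~\ref{cor:subadditivity-intro} from Theorem~\ref{thm:subadditivity-intro} by bounding each conditional entropy $H_\phi(\mtx{Z}\condl\vct{x}_{-i})$ through a symmetrization argument, in direct analogy with the scalar derivation of Boucheron et al. If $\phi$ is affine, both sides of~\eqref{eqn:cor-subadditivity-intro} vanish, so assume $\phi$ is not affine; in particular $\psi=\phi'$ is defined on $\R_{++}$ and $\ntr\phi:\Sym^d_+\to\R$ is convex. We may also assume the right-hand side of~\eqref{eqn:cor-subadditivity-intro} is finite, since otherwise there is nothing to prove.

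First I would fix an index $i$ and work conditionally on $\vct{x}_{-i}$. As noted just before the corollary, the matrices $\mtx{Z}$ and $\mtx{Z}'_i$ from~\eqref{eqn:Z-i} are i.i.d.\ and exchangeable given $\vct{x}_{-i}$, so it suffices to establish the conditional bound
\begin{equation*}
H_\phi(\mtx{Z}\condl\vct{x}_{-i}) \;\leq\; \tfrac{1}{2}\,\Expect\big[\ntr\big((\mtx{Z}-\mtx{Z}'_i)(\psi(\mtx{Z})-\psi(\mtx{Z}'_i))\big)\condl\vct{x}_{-i}\big],
\end{equation*}
after which taking expectations, summing over $i$, and applying~\eqref{eqn:subadditivity-intro} yields the corollary.

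The heart of the matter is a tangent-line representation of $H_\phi$. By Klein's inequality — equivalently, the first-order convexity inequality for $\ntr\phi$, whose Gâteaux derivative at a point $\mtx{A}\in\Sym^d_{++}$ in direction $\mtx{H}$ is $\ntr[\psi(\mtx{A})\,\mtx{H}]$ — one has $\ntr\phi(\mtx{B})\geq\ntr\phi(\mtx{A})+\ntr[\psi(\mtx{A})(\mtx{B}-\mtx{A})]$ for all $\mtx{A},\mtx{B}\in\Sym^d_{++}$. Applying this with $\mtx{A}=\mtx{U}$ and $\mtx{B}=\Expect[\mtx{Z}\condl\vct{x}_{-i}]$, and subtracting from the definition of the conditional entropy, shows that for every deterministic $\mtx{U}\in\Sym^d_{++}$,
\begin{equation*}
H_\phi(\mtx{Z}\condl\vct{x}_{-i}) \;\leq\; \Expect\big[\ntr\phi(\mtx{Z})-\ntr\phi(\mtx{U})-\ntr\big(\psi(\mtx{U})(\mtx{Z}-\mtx{U})\big)\condl\vct{x}_{-i}\big].
\end{equation*}
Because this holds for each deterministic $\mtx{U}$, it persists after integrating against the conditional law of the (conditionally independent) copy $\mtx{Z}'_i$ — a routine application of Fubini's theorem and the tower property — so I may replace $\mtx{U}$ by $\mtx{Z}'_i$. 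Interchanging $\mtx{Z}$ and $\mtx{Z}'_i$, which is legitimate by exchangeability, yields a second inequality; averaging the two makes the $\ntr\phi$ terms cancel and collapses the remaining cross terms into $\tfrac12\Expect[\ntr((\mtx{Z}-\mtx{Z}'_i)(\psi(\mtx{Z})-\psi(\mtx{Z}'_i)))\condl\vct{x}_{-i}]$, which is exactly the conditional bound above.

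The step I expect to be most delicate is the passage between $\Sym^d_{++}$ and $\Sym^d_+$: both $\psi=\phi'$ and the derivative identity above live on the open positive-definite cone, whereas $\mtx{Z}$ and $\mtx{Z}'_i$ take values in $\Sym^d_+$. I would resolve this by a perturbation argument — carry out the computation with $\mtx{Z}+\eps\Id$ and $\mtx{Z}'_i+\eps\Id$ and let $\eps\downarrow0$, using continuity of $\phi$ on $\R_+$ together with the integrability hypotheses $\Expect\norm{\mtx{Z}}<\infty$ and $\Expect\norm{\phi(\mtx{Z})}<\infty$ — or simply observe that for the functions of interest in Theorem~\ref{thm:membership} the extension to $\R_+$ causes no harm. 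Everything else (conditioning, exchangeability, the tower property) is bookkeeping.
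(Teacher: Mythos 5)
Your argument is correct and is essentially the paper's own proof: the paper packages your tangent-line step as Lemma~\ref{lem:entropy-bound} (Klein's inequality applied with an arbitrary fixed matrix, then the exchangeable copy $\mtx{Z}'_i$ substituted and the two bounds averaged), and then applies that bound conditionally on $\vct{x}_{-i}$ before invoking Theorem~\ref{thm:subadditivity-intro}, exactly as you do. The only difference is organizational — you run the symmetrization directly in the conditional setting rather than stating the unconditional infimum representation first — and your closing remark about passing from $\Sym^d_{++}$ to $\Sym^d_+$ is a reasonable way to handle a boundary issue the paper itself glosses over.
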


Theorem \ref{thm:subadditivity-intro} and Corollary~\ref{cor:subadditivity-intro} are matrix counterparts of the foundational results from Boucheron et al.~\cite[Sec.~3]{Boucheron2005}, which establish that scalar $\phi$-entropies satisfy a similar subadditivity property.  We devote Section \ref{section:subadditivity} to the proof of these results.

\subsection{Some Matrix Concentration Inequalities}

Using Corollary~\ref{cor:subadditivity-intro}, we can derive concentration inequalities for random matrices.  In contrast to some previous approach to matrix concentration, we need to place some significant restrictions on the type of random matrices we consider.

\begin{defn}[Invariance under Signed Permutation] \label{def:signed-permutation}
A random matrix $\mtx{Y} \in \Sym^d$ is invariant under signed permutation if we have the equality of distribution
$$
\mtx{Y} \sim \mtx{\Pi}^* \mtx{Y} \mtx{\Pi}
\quad\text{for each signed permutation $\mtx{\Pi}$.}
$$
A \term{signed permutation} $\mtx{\Pi} \in \M^d$ is a matrix with the properties that (i) each row and each column contains exactly one nonzero entry and (ii) the nonzero entries only take values $+1$ and $-1$.
\end{defn}

\subsubsection{A Bounded Difference Inequality}

We begin with an exponential tail bound for a random matrix whose distribution is invariant under signed permutations.

\begin{thm}[Bounded Differences] \label{thm:tail-bound}
Let $\vct{x} := (X_1, \dots, X_n)$ be a vector of independent random variables, and let $\vct{x}' := (X'_1, \dots, X'_n)$ be an independent copy of $\vct{x}$.
Consider random matrices
\begin{align*}
\mtx{Y} &:= \mtx{Y}(X_1, \dots, X_i, \dots, X_n) \in \Sym^d
\quad\text{and} \\
\mtx{Y}_i' &:= \mtx{Y}(X_1, \dots, X_i', \dots, X_n) \in \Sym^d
\quad\text{for $i = 1, \dots, n$.}
\end{align*}
Assume that $\norm{\mtx{Y}}$ is bounded almost surely.  Introduce the variance measure
\begin{equation} \label{eqn:scalar-variance}
V_{\mtx{Y}} := \sup \norm{ \Expect \left[ \sum\nolimits_{i=1}^n (\mtx{Y} - \mtx{Y}_i')^2 \, \Big\vert \, \vct{x} \right] },
\end{equation}
where the supremum occurs over all possible values of $\vct{x}$.  For each $t \geq 0$,
\begin{align*}
\Prob{\lambda_{\max}(\mtx{Y}- \Expect \mtx{Y}) \geq t} &\leq
	d \cdot \econst^{-t^2/ (2 V_{\mtx{Y}})},
	\quad\text{and} \\
\Prob{\lambda_{\min}(\mtx{Y}- \Expect \mtx{Y}) \leq -t} &\leq
	d \cdot \econst^{-t^2/ (2 V_{\mtx{Y}})}.
\end{align*} 
\end{thm}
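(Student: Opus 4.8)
The plan is to run the matrix Laplace transform method, feeding it a bound on the trace moment generating function that we extract from the subadditivity corollary. Fix $\theta > 0$ and set $\mtx{Z} := \econst^{\theta \mtx{Y}}$, which is positive semidefinite since $\mtx{Y}$ is Hermitian and bounded almost surely; the integrability hypotheses of Definition~\ref{defn:entropy-intro} hold because $\norm{\mtx{Y}}$ is bounded. Choose $\phi(t) = t\log t$, which lies in $\Phi_\infty$ by Theorem~\ref{thm:membership}, so $\psi(t) = \phi'(t) = \log t + 1$. With this choice the matrix $\phi$-entropy of $\mtx{Z}$ becomes a ``trace entropy'' of $\econst^{\theta\mtx{Y}}$, and a short computation shows that for $m(\theta) := \Expect \ntr \econst^{\theta \mtx{Y}}$ one has the differential identity
\begin{equation*}
H_\phi(\econst^{\theta\mtx{Y}}) = \theta m'(\theta) - m(\theta)\log m(\theta),
\end{equation*}
where we use $\ntr[\mtx{Y}\econst^{\theta\mtx{Y}}] $ to identify $m'(\theta)$. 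This is exactly the Herbst-type relation that turns an entropy bound into a differential inequality for $\log m(\theta)$.

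Next I would bound the right-hand side of Corollary~\ref{cor:subadditivity-intro}. Each summand is $\tfrac12 \Expect \ntr[(\mtx{Z}-\mtx{Z}'_i)(\psi(\mtx{Z})-\psi(\mtx{Z}'_i))]$, and with $\psi(t)=\log t + 1$ we have $\psi(\econst^{\theta\mtx{Y}}) = \theta\mtx{Y} + \mtx{I}$, so $\psi(\mtx{Z})-\psi(\mtx{Z}'_i) = \theta(\mtx{Y}-\mtx{Y}'_i)$. The key elementary estimate is the scalar inequality $(\econst^{a}-\econst^{b})(a-b) \le \tfrac12 (a-b)^2 (\econst^a + \econst^b)$ for all reals $a,b$; applied along a simultaneous eigenbasis of the exchangeable pair this gives, for a suitable absolute constant and using that $\mtx{Y}$ and $\mtx{Y}'_i$ are identically distributed conditional on $\vct x_{-i}$,
\begin{equation*}
\ntr\big[(\mtx{Z}-\mtx{Z}'_i)(\psi(\mtx{Z})-\psi(\mtx{Z}'_i))\big]
	\le \frac{\theta^2}{2}\, \ntr\big[(\mtx{Y}-\mtx{Y}'_i)^2(\econst^{\theta\mtx{Y}}+\econst^{\theta\mtx{Y}'_i})\big]
	\le \theta^2\, \ntr\big[(\mtx{Y}-\mtx{Y}'_i)^2\,\econst^{\theta\mtx{Y}}\big]
\end{equation*}
after taking expectations and symmetrizing. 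This is where invariance under signed permutations enters: it forces $\Expect\mtx{Y}$ to be a multiple of the identity and, more importantly, lets us commute the nonnegative matrix $\sum_i \Expect[(\mtx{Y}-\mtx{Y}'_i)^2 \mid \vct x]$ past $\econst^{\theta\mtx{Y}}$ inside the normalized trace up to its operator-norm bound $V_{\mtx Y}$, yielding
\begin{equation*}
\sum_{i=1}^n \Expect\ntr\big[(\mtx{Y}-\mtx{Y}'_i)^2\,\econst^{\theta\mtx{Y}}\big] \le V_{\mtx Y}\, m(\theta).
\end{equation*}
Combining, $H_\phi(\econst^{\theta\mtx{Y}}) \le \tfrac12 \theta^2 V_{\mtx Y}\, m(\theta)$.

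Substituting into the Herbst relation gives $\theta m'(\theta) - m(\theta)\log m(\theta) \le \tfrac12\theta^2 V_{\mtx Y} m(\theta)$, i.e.\ $\frac{\diff{}}{\diff\theta}\big(\tfrac1\theta \log m(\theta)\big) \le \tfrac12 V_{\mtx Y}$. Integrating from $0$, where $\tfrac1\theta\log m(\theta) \to \ntr(\Expect\mtx Y)$, we obtain $\log m(\theta) \le \theta\,\ntr(\Expect\mtx Y) + \tfrac12 V_{\mtx Y}\theta^2$. Now apply the matrix Laplace transform bound $\Prob{\lambda_{\max}(\mtx{Y}-\Expect\mtx Y)\ge t} \le \inf_{\theta>0} \econst^{-\theta t}\, d\cdot \Expect\ntr\econst^{\theta(\mtx Y - \Expect\mtx Y)}$ (using $\trace = d\cdot\ntr$ and that $\Expect\mtx Y$ is a multiple of the identity so it pulls out of the exponential); optimizing $\econst^{-\theta t + \theta^2 V_{\mtx Y}/2}$ at $\theta = t/V_{\mtx Y}$ yields the stated bound $d\cdot\econst^{-t^2/(2V_{\mtx Y})}$. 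The lower-tail bound follows by applying the upper-tail bound to $-\mtx Y$, whose variance measure equals $V_{\mtx Y}$.

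The main obstacle is the passage from the scalar two-point inequality to the matrix trace inequality in the second paragraph: $\mtx{Y}$ and $\mtx{Y}'_i$ do not commute, so one cannot literally diagonalize them together, and one must instead argue with the integral representation of $\psi(\mtx Z) - \psi(\mtx Z'_i)$ (or with the mean-value-type identity $\econst^{\mtx A}-\econst^{\mtx B} = \int_0^1 \econst^{s\mtx A}(\mtx A - \mtx B)\econst^{(1-s)\mtx B}\diff s$) and exploit the cyclicity of the trace to reduce to a genuinely scalar estimate. Checking that the resulting constant is exactly what is needed to land $V_{\mtx Y}$ rather than a larger multiple, and that the signed-permutation symmetry really does license replacing the matrix $\sum_i\Expect[(\mtx Y-\mtx Y'_i)^2\mid\vct x]$ by the scalar $V_{\mtx Y}$ inside $\ntr[\,\cdot\,\econst^{\theta\mtx Y}]$, is the delicate part; everything else is the standard entropy-method bookkeeping.
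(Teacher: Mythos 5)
Your proposal follows essentially the same route as the paper's proof: take $\phi(t)=t\log t$ and $\mtx{Z}=\econst^{\theta\mtx{Y}}$, establish the Herbst identity $H_\phi(\econst^{\theta\mtx{Y}})=\theta m'(\theta)-m(\theta)\log m(\theta)$, bound the entropy through Corollary~\ref{cor:subadditivity-intro} plus a mean-value-type trace estimate, integrate the resulting differential inequality for $\theta^{-1}\log m(\theta)$, and finish with the matrix Laplace transform (lower tail via $-\mtx{Y}$). The step you flag as the main obstacle is not a gap in substance: the inequality you need is exactly the mean value trace inequality, Proposition~\ref{prop:mvt1}, namely $\ntr[(\mtx{A}-\mtx{B})(f(\mtx{A})-f(\mtx{B}))]\leq\tfrac12\ntr[(\mtx{A}-\mtx{B})^2(f'(\mtx{A})+f'(\mtx{B}))]$ for convex $f'$, applied with $f(t)=\econst^{\theta t}$; the divided-difference/double-eigenbasis argument you sketch is the standard proof and yields precisely the constant you need, so your bookkeeping of constants matches the paper.

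The one substantive slip is where the signed-permutation invariance actually does its work. It is not needed for the step $\sum_{i}\Expect\ntr[(\mtx{Y}-\mtx{Y}_i')^2\,\econst^{\theta\mtx{Y}}]\leq V_{\mtx{Y}}\,m(\theta)$: since $\econst^{\theta\mtx{Y}}$ is positive semidefinite and $\zeromtx\psdle\sum_i\Expect[(\mtx{Y}-\mtx{Y}_i')^2\condl\vct{x}]\psdle V_{\mtx{Y}}\Id$, the generic trace bound $\ntr(\mtx{P}\mtx{W})\leq\norm{\mtx{W}}\ntr\mtx{P}$ for positive semidefinite $\mtx{P}$ suffices, with no symmetry hypothesis at all. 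Where invariance is indispensable---and where your ``short computation'' silently uses it---is the identity $H_\phi(\econst^{\theta\mtx{Y}})=\theta m'(\theta)-m(\theta)\log m(\theta)$: one needs $\ntr\big[(\Expect\econst^{\theta\mtx{Y}})\log(\Expect\econst^{\theta\mtx{Y}})\big]=m(\theta)\log m(\theta)$, which holds because Lemma~\ref{lem:invariant}, applied to $t\mapsto\econst^{\theta t}$ (not merely to $t\mapsto t$), forces $\Expect\econst^{\theta\mtx{Y}}=m(\theta)\Id$. Without this, convexity only gives $\ntr\phi(\Expect\mtx{Z})\geq m(\theta)\log m(\theta)$, so the Herbst relation would be an inequality in the useless direction. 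With that attribution corrected, your argument coincides with the paper's (the paper centers $\mtx{Y}$ at the outset rather than carrying $\ntr\Expect\mtx{Y}$ through the integration, which is equivalent because $\Expect\mtx{Y}$ is a multiple of the identity).
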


\noindent
Theorem~\ref{thm:tail-bound} follows from Corollary~\ref{cor:subadditivity-intro} with the choice $\phi(t) = t \log t$.  See Section~\ref{section:concentration} for the proof.  This result can be viewed as a type of matrix bounded difference inequality.  Closely related inequalities already appear in the literature; see~\cite[Cor.~7.5]{Tropp2012},~\cite[Cor.~11.1]{Mackey2012}, and~\cite[Cor.~4.1]{Paulin2013}.  Theorem~\ref{thm:tail-bound} is weaker than some of the earlier bounds because of the restriction to random matrices that are invariant under signed permutation.

\subsubsection{Matrix Moment Bounds}

We can also establish moment inequalities for a random matrix whose distribution is invariant under signed permutation.

\begin{thm}[Matrix Moment Bound]  \label{thm:moment-bound-intro}
Fix a number $q \in \{2, 3, 4, \dots \}$.  Let $\vct{x} := (X_1, \dots, X_n)$ be a vector of independent random variables, and let $\vct{x}' := (X'_1, \dots, X'_n)$ be an independent copy of $\vct{x}$.  Consider positive-semidefinite random matrices
\begin{align*}
\mtx{Y} &:= \mtx{Y}(X_1, \dots, X_i, \dots, X_n) \in \Sym_+^d
\quad\text{and} \\
\mtx{Y}_i' &:= \mtx{Y}(X_1, \dots, X_i', \dots, X_n) \in \Sym_+^d
\quad\text{for $i = 1, \dots, n$.}
\end{align*}
Assume that $\Expect( \norm{\mtx{Y}}^q ) < \infty$.
Suppose that there is a constant $c \geq 0$ with the property
\begin{equation} \label{eqn:matrix-variance}
\mtx{V}_{\mtx{Y}} :=
\Expect \left[ \sum\nolimits_{i=1}^n (\mtx{Y} - \mtx{Y}_i')^2 \, \Big\vert \, \vct{x} \right]
	\psdle c \, \mtx{Y}.
\end{equation}
Then the random matrix $\mtx{Y}$ satisfies the moment inequality
\begin{equation*}
[\Expect \ntr( \mtx{Y}^q ) ]^{1/q} \leq \Expect \ntr \mtx{Y} + \frac{q-1}{2} \cdot c.
\end{equation*}
\end{thm}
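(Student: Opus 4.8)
The plan is to induct on $q$, reducing the claim to the single-step recursion $M_q^{1/q} \le M_{q-1}^{1/(q-1)} + \tfrac{c}{2}$ where $M_k := \Expect \ntr(\mtx{Y}^k)$; telescoping from $q$ down to $1$ then yields $M_q^{1/q} \le M_1 + \tfrac{(q-1)c}{2} = \Expect \ntr \mtx{Y} + \tfrac{q-1}{2}c$. To obtain the recursion I would apply Corollary~\ref{cor:subadditivity-intro} to $\mtx{Z} := \mtx{Y}^{q-1}$ with $\phi(t) = t^{p}$ and exponent $p := q/(q-1) \in (1,2]$. Since $p \in [1,2]$, Theorem~\ref{thm:membership} gives $\phi \in \Phi_\infty$, so the corollary applies. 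The crucial simplification is that $\psi(t) = \phi'(t) = p\,t^{p-1}$ and $(q-1)(p-1) = 1$, so $\psi(\mtx{Z}) = p\,\mtx{Z}^{p-1} = p\,\mtx{Y}$ and $\psi(\mtx{Z}'_i) = p\,\mtx{Y}'_i$. Hence the corollary reads
\begin{equation*}
\Expect \ntr (\mtx{Y}^q) - \ntr\big( (\Expect \mtx{Y}^{q-1})^{p} \big) = H_\phi(\mtx{Z}) \le \frac{p}{2} \sum_{i=1}^n \Expect \ntr\big[ (\mtx{Y} - \mtx{Y}'_i)(\mtx{Y}^{q-1} - (\mtx{Y}'_i)^{q-1}) \big].
\end{equation*}

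Next I would collapse the baseline on the left. Because $\mtx{Y}$ is invariant under signed permutation (Definition~\ref{def:signed-permutation}, the setting of this subsection), so is $\mtx{Y}^{q-1}$; therefore $\Expect \mtx{Y}^{q-1}$ commutes with every signed permutation and must be a scalar matrix, namely $\Expect \mtx{Y}^{q-1} = M_{q-1}\,\Id$. Consequently $\ntr\big((\Expect \mtx{Y}^{q-1})^{p}\big) = M_{q-1}^{q/(q-1)}$, which removes the \emph{wrong-way} Jensen gap between $\ntr(\mtx{A}^p)$ and $(\ntr \mtx{A})^p$ that would otherwise obstruct everything. I would flag here that the invariance is genuinely indispensable: without it the inequality already fails at $q=2$ for a non-scalar \emph{deterministic} $\mtx{Y}$, where $\mtx{V}_{\mtx{Y}} = \zeromtx \psdle 0 \cdot \mtx{Y}$ yet $(\ntr \mtx{Y}^2)^{1/2} > \ntr \mtx{Y}$.

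The technical heart is a matrix mean value trace inequality: for $\mtx{A}, \mtx{B} \in \Sym_+^d$ and integer $q \ge 2$,
\begin{equation*}
\ntr\big[ (\mtx{A}-\mtx{B})(\mtx{A}^{q-1}-\mtx{B}^{q-1}) \big] \le \frac{q-1}{2}\, \ntr\big[ (\mtx{A}-\mtx{B})^2 (\mtx{A}^{q-2} + \mtx{B}^{q-2}) \big].
\end{equation*}
I would prove it from the telescoping identity $\mtx{A}^{q-1} - \mtx{B}^{q-1} = \sum_{k=0}^{q-2} \mtx{A}^k (\mtx{A}-\mtx{B}) \mtx{B}^{q-2-k}$, which turns the left side into $\sum_{k=0}^{q-2} \ntr[\mtx{D}\mtx{A}^k \mtx{D}\mtx{B}^{q-2-k}]$ with $\mtx{D} := \mtx{A}-\mtx{B}$. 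For each $k$ I bound the summand by convexity in the exponent: with $\mtx{A} = \sum_i \alpha_i \mtx{P}_i$ and $\mtx{B} = \sum_j \beta_j \mtx{Q}_j$, the function $s \mapsto \ntr[\mtx{D}\mtx{A}^s \mtx{D}\mtx{B}^{r-s}] = \sum_{i,j} \ntr[\mtx{D}\mtx{P}_i \mtx{D}\mtx{Q}_j]\, \alpha_i^{s} \beta_j^{r-s}$ is a nonnegative combination of the convex exponentials $s \mapsto \alpha_i^s \beta_j^{r-s}$, the coefficients being nonnegative since $\ntr[\mtx{D}\mtx{P}_i \mtx{D}\mtx{Q}_j] = \ntr[(\mtx{P}_i \mtx{D}\mtx{Q}_j)^{\adj}(\mtx{P}_i \mtx{D}\mtx{Q}_j)] \ge 0$; hence it is convex on $[0,r]$ with $r := q-2$. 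Its endpoints are $\ntr[\mtx{D}^2 \mtx{B}^{r}]$ and $\ntr[\mtx{D}^2 \mtx{A}^{r}]$, so convexity gives $\ntr[\mtx{D}\mtx{A}^k \mtx{D}\mtx{B}^{r-k}] \le \tfrac{k}{r}\ntr[\mtx{D}^2\mtx{A}^{r}] + \tfrac{r-k}{r}\ntr[\mtx{D}^2\mtx{B}^{r}]$, and summing over $k$ produces the factor $\tfrac{q-1}{2}$. (The case $q=2$, where the sum is the single term $\ntr[\mtx{D}^2]$, is immediate, and the singular case is reached via $\mtx{A},\mtx{B} \to \mtx{A}+\eps\Id, \mtx{B}+\eps\Id$.)

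Finally I would assemble the variance estimate. Applying the lemma to $(\mtx{A},\mtx{B}) = (\mtx{Y}, \mtx{Y}'_i)$ and using that $\mtx{Y}$ and $\mtx{Y}'_i$ are exchangeable given $\vct{x}_{-i}$ (so the $\mtx{Y}^{q-2}$ and $(\mtx{Y}'_i)^{q-2}$ contributions agree in expectation), the right-hand sum is at most $(q-1)\sum_i \Expect \ntr[(\mtx{Y}-\mtx{Y}'_i)^2 \mtx{Y}^{q-2}]$. Conditioning on $\vct{x}$ pulls the $\vct{x}$-measurable factor $\mtx{Y}^{q-2}$ out of the inner expectation and replaces $\sum_i \Expect[(\mtx{Y}-\mtx{Y}'_i)^2 \mid \vct{x}]$ by $\mtx{V}_{\mtx{Y}} \psdle c\,\mtx{Y}$; since $\mtx{Y}^{q-2}$ is positive semidefinite, the trace of a product of positive-semidefinite matrices is nonnegative, giving $\ntr[\mtx{V}_{\mtx{Y}} \mtx{Y}^{q-2}] \le c\,\ntr(\mtx{Y}^{q-1})$. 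Collecting constants with $p(q-1) = q$ yields $H_\phi(\mtx{Z}) \le \tfrac{q}{2} c\, M_{q-1}$, whence $M_q \le M_{q-1}^{q/(q-1)} + \tfrac{qc}{2} M_{q-1}$. Writing $u := M_{q-1}^{1/(q-1)}$ and comparing with $(u + \tfrac{c}{2})^q \ge u^q + \tfrac{qc}{2} u^{q-1}$ delivers $M_q^{1/q} \le u + \tfrac{c}{2}$, closing the induction. I expect the main obstacle to be the key lemma together with the baseline collapse: the lemma is a genuine noncommutative replacement for the scalar estimate $(a^{q-1}-b^{q-1})(a-b) \le (q-1)\max(a,b)^{q-2}(a-b)^2$, and the signed-permutation invariance is exactly what makes the noncommutative baseline term behave as in the scalar theory.
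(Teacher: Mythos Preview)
Your proposal is correct and follows essentially the same route as the paper: apply Corollary~\ref{cor:subadditivity-intro} to $\mtx{Z}=\mtx{Y}^{q-1}$ with $\phi(t)=t^{q/(q-1)}$, collapse the baseline via signed-permutation invariance (the paper's Lemma~\ref{lem:invariant}), bound the right-hand side with the mean value trace inequality (the paper's Proposition~\ref{prop:mvt1}, cited from~\cite{Mackey2012}), use exchangeability and the hypothesis $\mtx{V}_{\mtx{Y}}\psdle c\mtx{Y}$ to reach $M_q\le M_{q-1}^{q/(q-1)}+\tfrac{qc}{2}M_{q-1}$, and then iterate. The only substantive difference is that you supply your own proof of the mean value trace inequality via convexity of $s\mapsto \ntr[\mtx{D}\mtx{A}^s\mtx{D}\mtx{B}^{r-s}]$, whereas the paper imports it as a black box; your argument is sound.
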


Theorem~\ref{thm:moment-bound-intro} follows from Corollary~\ref{cor:subadditivity-intro} with the choice $\phi(t) = t^{q/(q-1)}$.  See Section~\ref{section:moment} for the proof.
This result can be regarded as a matrix extension of a moment inequality for real random variables~\cite[Cor.~1]{Boucheron2005}.  The papers~\cite{Mackey2012,Paulin2013} contain similar moment inequalities for random matrices.  See also~\cite{JX03:Noncommutative-Burkholder,JX08:Noncommutative-Burkholder-II,JZ11:Noncommutative-Bennett}.

\subsection{Generalized Subadditivity of Matrix $\phi$-Entropy}

Theorem~\ref{thm:subadditivity-intro} is the shadow of a more sophisticated subadditivity property.  We outline the simplest form of this more general result.  See the lecture notes of Carlen~\cite{Carlen2010} for more background on the topics in this section.

We work in the $*$-algebra $\M^d$ of $d\times d$ complex matrices, equipped with the conjugate transpose operation $*$ and the normalized trace inner product $\ip{\mtx{A}}{\mtx{B}} := \ntr(\mtx{A}^\adj \mtx{B})$.  We say that a subspace $\mathfrak{A} \subset \M^d$ is a \term{$*$-subalgebra} when $\mathfrak{A}$ contains the identity matrix, $\mathfrak{A}$ is closed under matrix multiplication, and $\mathfrak{A}$ is closed under conjugate transposition.  In other terms, $\Id \in \mathfrak{A}$ and $\mtx{AB} \in \mathfrak{A}$ and $\mtx{A}^\adj \in \mathfrak{A}$ whenever $\mtx{A}, \mtx{B} \in \mathfrak{A}$.

In this setting, there is an elegant notion of conditional expectation.  The orthogonal projector $\Expect_{\mathfrak{A}} : \M^d \to \mathfrak{A}$ onto the $*$-subalgebra $\mathfrak{A}$ is called the \term{conditional expectation} with respect to the $*$-subalgebra.  For $*$-subalgebras $\mathfrak{A}$ and $\mathfrak{B}$, we say that the conditional expectations $\Expect_{\mathfrak{A}}$ and $\Expect_{\mathfrak{B}}$ \term{commute} when
$$
(\Expect_{\mathfrak{A}} \Expect_{\mathfrak{B}}) (\mtx{M})
	= (\Expect_{\mathfrak{B}} \Expect_{\mathfrak{A}}) (\mtx{M})
	\quad\text{for every $\mtx{M} \in \M^d$.}
$$
This construction generalizes the concept of independence in a probability space.

We can define the matrix $\phi$-entropy conditional on a $*$-subalgebra $\mathfrak{A}$:
$$
H_{\phi}(\mtx{A} \condl \mathfrak{A}) :=
	\ntr[ \phi(\mtx{A}) - \phi( \Expect_{\mathfrak{A}}\mtx{A} ) ]
	\quad\text{for $\mtx{A} \in \Sym^d_+$.}
$$
Let $\mathfrak{A}_1, \dots, \mathfrak{A}_n$ be $*$-subalgebras whose conditional expectations commute.  Then we can extend the definition of the matrix $\phi$-entropy to read
$$
H_{\phi}(\mtx{A} \condl \mathfrak{A}_1, \dots, \mathfrak{A}_n)
	:= \ntr[ \phi(\mtx{A})
	- \phi( \Expect_{\mathfrak{A}_1} \cdots \Expect_{\mathfrak{A}_n} \mtx{A} ) ]
	\quad\text{for $\mtx{A} \in \Sym^d_+$.}
$$
Because of commutativity, the order of the conditional expectations has no effect on the calculation.  It turns out that matrix $\phi$-entropy admits the following subadditivity property.

\begin{thm}[Subaddivity of Matrix $\phi$-Entropy II]
\label{thm:algebra-subadd}
Fix a function $\phi \in \Phi_{\infty}$.  Let $\mathfrak{A}_1, \dots, \mathfrak{A}_n$ be $*$-subalgebras of $\M^d$ whose conditional expectations commute.  Then
$$
H_{\phi}(\mtx{A} \condl \mathfrak{A}_1, \dots, \mathfrak{A}_n)
	\leq \sum\nolimits_{i=1}^n H_{\phi}(\mtx{A} \condl \mathfrak{A}_i)
	\quad\text{for $\mtx{A} \in \Sym_+^d$.}
$$
\end{thm}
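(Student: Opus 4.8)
The plan is to re-run the argument behind Theorem~\ref{thm:subadditivity-intro} in the $*$-algebra language, with the commutation hypothesis on the conditional expectations $\Expect_{\mathfrak{A}_i}$ playing the exact role that independence of the $X_i$ plays in the product-space version. The engine is a \emph{variational representation} of the one-algebra conditional matrix $\phi$-entropy. Writing $\psi = \phi'$ and
\begin{equation*}
J_{\mathfrak{A}}(\mtx{C},\mtx{T}) := \ntr\big[ (\psi(\mtx{T}) - \Expect_{\mathfrak{A}}\psi(\mtx{T}))(\mtx{C} - \mtx{T}) \big] + H_{\phi}(\mtx{T}\condl\mathfrak{A})
\qquad\text{for } \mtx{C},\mtx{T} \in \Sym_{++}^d,
\end{equation*}
one has $H_{\phi}(\mtx{C}\condl\mathfrak{A}) = \sup_{\mtx{T}} J_{\mathfrak{A}}(\mtx{C},\mtx{T})$. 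The direction ``$\le$'' is free, since the choice $\mtx{T} = \mtx{C}$ kills the first bracket; the substantive direction, $J_{\mathfrak{A}}(\mtx{C},\mtx{T}) \le H_{\phi}(\mtx{C}\condl\mathfrak{A})$ for every $\mtx{T}$, is a joint operator-convexity statement for the matrix Bregman divergence attached to $\phi$, and this is exactly the point at which condition~(3) of Definition~\ref{defn:function-intro}, the operator concavity of $\mtx{A}\mapsto[\op{D}\psi(\mtx{A})]^{-1}$, is consumed. I would take this representation, and its proof, from Section~\ref{section:subadditivity}, observing that nothing there uses independence---only the existence of a conditional expectation onto a $*$-subalgebra---so it transfers to the present setting verbatim.

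\textbf{Reduction to monotonicity.} From the variational formula the theorem reduces to the following \emph{monotonicity} lemma: if $\mathfrak{A},\mathfrak{B}$ are $*$-subalgebras whose conditional expectations commute, then
\begin{equation*}
H_{\phi}(\Expect_{\mathfrak{B}}\mtx{A}\condl\mathfrak{A}) \le H_{\phi}(\mtx{A}\condl\mathfrak{A})
\qquad\text{for } \mtx{A}\in\Sym_+^d.
\end{equation*}
To prove it, set $\mtx{T}_0 := \Expect_{\mathfrak{B}}\mtx{A}$ and feed $\mtx{T}_0$ into the nontrivial inequality of the variational formula at $\mtx{C} = \mtx{A}$:
\begin{equation*}
H_{\phi}(\mtx{A}\condl\mathfrak{A}) \ge \ntr\big[(\psi(\mtx{T}_0) - \Expect_{\mathfrak{A}}\psi(\mtx{T}_0))(\mtx{A} - \mtx{T}_0)\big] + H_{\phi}(\mtx{T}_0\condl\mathfrak{A}).
\end{equation*}
Now $\mtx{T}_0\in\mathfrak{B}$, so $\psi(\mtx{T}_0)\in\mathfrak{B}$ by functional calculus, and commutation gives $\Expect_{\mathfrak{A}}\psi(\mtx{T}_0) = \Expect_{\mathfrak{B}}\Expect_{\mathfrak{A}}\psi(\mtx{T}_0)\in\mathfrak{B}$; hence $\psi(\mtx{T}_0) - \Expect_{\mathfrak{A}}\psi(\mtx{T}_0)\in\mathfrak{B}$. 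On the other hand $\mtx{A} - \mtx{T}_0 = (\Id - \Expect_{\mathfrak{B}})\mtx{A}$ is orthogonal to $\mathfrak{B}$ in the trace inner product. Both matrices are Hermitian, so the bracket equals $\ip{\psi(\mtx{T}_0) - \Expect_{\mathfrak{A}}\psi(\mtx{T}_0)}{\mtx{A} - \mtx{T}_0} = 0$, and since $H_{\phi}(\mtx{T}_0\condl\mathfrak{A}) = H_{\phi}(\Expect_{\mathfrak{B}}\mtx{A}\condl\mathfrak{A})$ the lemma follows; replacing $\mtx{A}$ by $\mtx{A}+\eps\Id$ handles the case where $\mtx{T}_0$ is singular.

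\textbf{Assembly by telescoping.} Abbreviate $\Expect_k := \Expect_{\mathfrak{A}_k}$ and put $\mtx{B}_0 := \mtx{A}$, $\mtx{B}_k := \Expect_1\cdots\Expect_k\,\mtx{A}$. A product of commuting orthogonal projections onto $*$-subalgebras is the orthogonal projection onto their intersection, so $\mtx{B}_k = \Expect_{\mathfrak{B}_k}\mtx{A}\in\Sym_+^d$ with $\mathfrak{B}_k := \mathfrak{A}_1\cap\cdots\cap\mathfrak{A}_k$, and in particular $\mtx{B}_n$ is the matrix $\Expect_{\mathfrak{A}_1}\cdots\Expect_{\mathfrak{A}_n}\mtx{A}$ appearing in $H_{\phi}(\mtx{A}\condl\mathfrak{A}_1,\dots,\mathfrak{A}_n)$. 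Commutation also yields $\mtx{B}_k = \Expect_k\mtx{B}_{k-1}$, so
\begin{equation*}
H_{\phi}(\mtx{A}\condl\mathfrak{A}_1,\dots,\mathfrak{A}_n)
= \ntr\big[\phi(\mtx{B}_0) - \phi(\mtx{B}_n)\big]
= \sum_{k=1}^n \ntr\big[\phi(\mtx{B}_{k-1}) - \phi(\Expect_k\mtx{B}_{k-1})\big]
= \sum_{k=1}^n H_{\phi}(\mtx{B}_{k-1}\condl\mathfrak{A}_k).
\end{equation*}
Since $\mtx{B}_{k-1} = \Expect_{\mathfrak{B}_{k-1}}\mtx{A}$ and $\Expect_{\mathfrak{B}_{k-1}}$ commutes with $\Expect_k$, the monotonicity lemma bounds the $k$th summand by $H_{\phi}(\mtx{A}\condl\mathfrak{A}_k)$, and summing gives the claim.

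\textbf{The main obstacle.} The real difficulty lies entirely in the variational representation and, beneath it, the joint operator convexity of the $\phi$-Bregman divergence encoded in condition~(3); this is where matrices diverge from scalars, because $\op{D}\psi$ is a genuine linear operator on $\Sym^d$ rather than multiplication by $\psi'$, and one must control its inverse and compose it against conditional expectations. Everything downstream---the monotonicity lemma, the orthogonality cancellation, and the telescoping---is soft, the only nonroutine ingredient being the standard fact that a product of commuting conditional expectations is the conditional expectation onto the intersection of the algebras.
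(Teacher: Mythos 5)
The skeleton of your argument is sound, and it matches what the paper (which omits its proof, saying only that it parallels Theorem~\ref{thm:subadditivity-intro} with ``an extra dose of operator theory'') evidently has in mind: a conditional variational representation, a monotonicity step in which commutation makes the cross term vanish by orthogonality, and a telescoping sum; the telescoping, the identification $\Expect_{\mathfrak{A}_1}\cdots\Expect_{\mathfrak{A}_k}=\Expect_{\mathfrak{B}_k}$, and the cancellation $\ip{\psi(\mtx{T}_0)-\Expect_{\mathfrak{A}}\psi(\mtx{T}_0)}{(\Id-\Expect_{\mathfrak{B}})\mtx{A}}=0$ are all fine. The problems are in the engine. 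First, the variational formula you state is false as written: the correct analogue of Lemma~\ref{lem:lower-variation} must carry $\psi(\Expect_{\mathfrak{A}}\mtx{T})$, not $\Expect_{\mathfrak{A}}\psi(\mtx{T})$, and with your form the ``hard'' direction fails. Concretely, take $d=2$, $\phi(t)=t\log t$, $\mathfrak{A}=\C\Id$ (so $\Expect_{\mathfrak{A}}\mtx{M}=\ntr(\mtx{M})\,\Id$), $\mtx{C}=\diag(1,4)$ and $\mtx{T}=\mtx{C}-\tfrac{1}{2}\Id$. Since $\mtx{C}-\mtx{T}$ is a multiple of the identity, your coupling term $\ntr[(\psi(\mtx{T})-\Expect_{\mathfrak{A}}\psi(\mtx{T}))(\mtx{C}-\mtx{T})]$ vanishes, while a direct computation gives $H_{\phi}(\mtx{T}\condl\mathfrak{A})\approx 0.633 > 0.482 \approx H_{\phi}(\mtx{C}\condl\mathfrak{A})$, so $J_{\mathfrak{A}}(\mtx{C},\mtx{T})>H_{\phi}(\mtx{C}\condl\mathfrak{A})$. (Equivalently: $\mtx{T}=\mtx{C}$ is not even a critical point of your functional unless $\Expect_{\mathfrak{A}}\psi(\mtx{T})=\psi(\Expect_{\mathfrak{A}}\mtx{T})$.) The good news is that your downstream use survives the correction: with $\mtx{T}_0=\Expect_{\mathfrak{B}}\mtx{A}$ one has $\Expect_{\mathfrak{A}}\mtx{T}_0=\Expect_{\mathfrak{B}}\Expect_{\mathfrak{A}}\mtx{A}\in\mathfrak{B}$ by commutation, hence $\psi(\Expect_{\mathfrak{A}}\mtx{T}_0)\in\mathfrak{B}$ and the same orthogonality kills the cross term.

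Second, and more seriously, the claim that the proof of the representation ``transfers verbatim'' from Section~\ref{section:subadditivity} is precisely the step where the real work lives, and it is asserted rather than proved. The interpolation argument for Lemma~\ref{lem:lower-variation} rests on Lemma~\ref{lem:technical}, which is a Jensen inequality for a genuine probabilistic expectation: one averages the jointly convex quadratic form and uses the operator Jensen inequality~\eqref{eqn:op-jensen} for the concave map $\mtx{A}\mapsto[\op{D}\psi(\mtx{A})]^{-1}$ with respect to a probability measure. In your setting $\Expect$ must be replaced by the completely positive unital projection $\Expect_{\mathfrak{A}}$, and what is needed is $\ip{\mtx{K}}{\op{D}\psi(\mtx{Y})(\mtx{K})}\geq\ip{\Expect_{\mathfrak{A}}\mtx{K}}{\op{D}\psi(\Expect_{\mathfrak{A}}\mtx{Y})(\Expect_{\mathfrak{A}}\mtx{K})}$; Jensen's inequality for convex functions does not apply to such a map as a formal substitution. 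This gap is fillable---for instance, represent $\Expect_{\mathfrak{A}}$ as a Haar average of unitary conjugations $\mtx{M}\mapsto\mtx{U}\mtx{M}\mtx{U}^{\adj}$ over the unitary group of the commutant of $\mathfrak{A}$, use the unitary equivariance of $\mtx{Y}\mapsto\op{D}\psi(\mtx{Y})$ and of the trace inner product, and then invoke Lemma~\ref{lem:technical} with that auxiliary randomness---but this is exactly the ``extra dose of operator theory'' the paper flags, and your write-up needs to supply it rather than cite Section~\ref{section:subadditivity} as if independence of scalar random variables were the only ingredient being generalized.
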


\noindent
We omit the proof of this result.  The argument involves considerations similar with Theorem~\ref{thm:subadditivity-intro}, but it requires an extra dose of operator theory.  The work in this paper already addresses the more challenging aspects of the proof.

Theorem~\ref{thm:algebra-subadd} can be seen as a formal extension of the subadditivity of matrix $\phi$-entropy expressed in Theorem~\ref{thm:subadditivity-intro}.  To see why, let $\Omega:=\Omega_1 \times \dots \times \Omega_n$ be a product probability space.  The space $L_2( \Omega; \M^d )$ of random matrices is a $*$-algebra with the normalized trace functional $\Expect \ntr$.  For each $i = 1, \dots, n$, we can form a $*$-subalgebra $\mathfrak{A}_i$ consisting of the random matrices that do not depend on the $i$th factor $\Omega_i$ of the product.  The conditional expectation $\Expect_{\mathfrak{A}_i}$ simply integrates out the $i$th random variable.  By independence, the family of conditional expectations $\Expect_{\mathfrak{A}_1}, \dots, \Expect_{\mathfrak{A}_n}$ commutes.  Using this dictionary, compare the statement of Theorem~\ref{thm:algebra-subadd} with Theorem~\ref{thm:subadditivity-intro}.

\subsection{Background on the Entropy Method}

This section contains a short summary of related work on the entropy method and on matrix concentration.

Inspired by Talagrand's work~\cite{Talagrand1991} on concentration in product spaces,
Ledoux~\cite{Ledoux1996,Ledoux2001} and Bobkov \& Ledoux~\cite{Bobkov1998} developed
the entropy method for obtaining concentration inequalities on product spaces.
This approach is based on new logarithmic Sobolev inequalities for product spaces.
Other authors, including Massart~\cite{Massart2000,Massart2000a}, Rio~\cite{Rio2001}, Bousquet~\cite{Bousquet2002}, and Boucheron et al.~\cite{Boucheron2003} have extended these results to obtain additional concentration inequalities.  See the book~\cite{BLM13:Concentration-Inequalities} for a comprehensive treatment.

In a related line of work, Lata{\l}a \& Oleskiewicz~\cite{Latala2000} and Chafa{\"i}~\cite{Chafai2004} investigated generalizations of the logarithmic Sobolev inequalities based on $\phi$-entropy functionals.  The paper~\cite{Boucheron2005} of Boucheron et al.~elaborates on these ideas to obtain a new class of moment inequalities; see also the book~\cite{BLM13:Concentration-Inequalities}.  Our paper is based heavily on the approach in~\cite{Boucheron2005}.

There is a recent line of work that develops concentration inequalities for random matrices by adapting classical arguments from the theory of concentration of measure.  The introduction contains an overview of this research, so we will not repeat ourselves.

Our paper can be viewed as a first attempt to obtain concentration inequalities for random matrices using the entropy method.  In spirit, our approach is closely related to arguments based on Stein's method~\cite{Mackey2012,Paulin2013}.  The theory in our paper does not require any notion of exchangeable pairs.  On the other hand, the arguments here are substantially harder, and they still result in weaker concentration bounds.

For the classical entropy $\phi : t \mapsto t\log t$, we are aware of some precedents for our subadditivity results, Theorem~\ref{thm:subadditivity-intro} and Theorem~\ref{thm:algebra-subadd}.  In this special case, the subadditivity of $H_{\phi}$ already follows from a classical result~\cite{Lindblad1973}.  There is a more modern paper~\cite{HOZ01:Upper-Bound} that contains a similar type of subadditivity bound.  Very recently, Hansen~\cite{Han13:Convexity-Residual} has identified a convexity property of another related quantity, called the residual entropy.  Note that, in the literature on quantum statistical mechanics and quantum information theory, the phrase ``subadditivity of entropy'' refers to a somewhat different kind of bound~\cite{LR73:Proof-Strong}; see~\cite[Sec.~8]{Carlen2010} for a modern formulation.

\subsection{Roadmap} %

Section~\ref{section:preliminaries} contains some background on operator theory.  In Section~\ref{section:subadditivity}, we prove Theorem~\ref{thm:subadditivity-intro} on the subadditivity on the matrix $\phi$-entropy.  Section~\ref{section:exchangeable} describes how to obtain Corollary~\ref{cor:subadditivity-intro}.  Afterward, in Section~\ref{section:member}, prove that the standard entropy and certain power functions belong to the $\Phi_{\infty}$ function class.  Finally, Sections~\ref{section:concentration} and~\ref{section:moment} derive the matrix concentration inequalities, Theorem~\ref{thm:tail-bound} and Theorem~\ref{thm:moment-bound-intro}.

\section{Operators and Functions acting on Matrices} \label{section:preliminaries}

This work involves a substantial amount of operator theory.  This section contains a short treatment of the basic facts.  See~\cite{Bhatia1997,Bhatia2007} for a more complete introduction.

\subsection{Linear Operators on Matrices} \label{subsection:operator}

Let $\C^d$ be the complex Hilbert space of dimension $d$, equipped with the standard inner product $\ip{ \vct{a} }{ \vct{b} }: = \vct{a}^\adj \vct{b}$.  We usually identify $\M^d$ with $\B(\C^d)$, the complex Banach space of linear operators acting on $\C^d$, equipped with the $\ell_2$ operator norm $\norm{\cdot}$.

We can also endow $\M^d$ with the normalized trace inner product $\ip{ \mtx{A} }{ \mtx{B} } := \ntr( \mtx{A}^* \mtx{B})$ to form a Hilbert space.  As a Hilbert space, $\M^d$ is isometrically isomorphic with $\C^{d^2}$.
Let $\B(\M^d)$ denote the complex Banach space of linear operators that map the Hilbert space $\M^d$ into itself, equipped with the induced operator norm.  The Banach space $\B(\M^d)$ is isometrically isomorphic with the Banach space $\M^{d^2}$.

As a consequence of this construction, every concept from matrix analysis has an immediate analog for linear operators on matrices.  An operator $\op{T} \in \B(\M^d)$ is \term{self-adjoint} when
$$
\ip{ \mtx{A} }{ \op{T}(\mtx{B}) } =
\ip{ \op{T}(\mtx{A}) }{ \mtx{B} }
\quad\text{for all $\mtx{A}, \mtx{B} \in \B(\M^d)$.}
$$
A self-adjoint operator $\op{T} \in \B(\M^d)$ is \term{positive semidefinite} when
$$
\ip{ \mtx{A} }{ \op{T}(\mtx{A}) } \geq 0
\quad\text{for all $\mtx{A} \in \M^d$.}
$$
For self-adjoint operators $\op{S}, \op{T} \in \B(\M^d)$, the notation $\op{S} \psdle \op{T}$ means that $\op{T} - \op{S}$ is positive semidefinite.

Each self-adjoint matrix operator $\op{T} \in \B(\M^d)$ has a spectral resolution of the form
\begin{equation} \label{eqn:op-resolution}
\op{T} = \sum\nolimits_{i=1}^{d^2} \lambda_i \op{P}_i
\end{equation}
where $\lambda_1, \dots, \lambda_{d^2}$ are the eigenvalues of $\op{T}$ and the spectral projectors $\op{P}_1, \dots, \op{P}_{d^2}$ are positive-semidefinite operators that satisfy
$$
\op{P}_i \op{P}_j = \delta_{ij} \op{P}_{j}
\quad\text{and}\quad
\sum\nolimits_{i=1}^{d^2} \op{P}_i = \op{I},
$$
where $\delta_{ij}$ is the Kronecker delta and $\op{I}$ is the identity operator.  As in the matrix case, a self-adjoint operator with nonnegative eigenvalues is the same thing as a positive-semidefinite operator.

We can extend a scalar function $f : I \to \R$ on an interval $I$ of the real line to obtain a standard operator function.  Indeed, if $\op{T}$ has the spectral resolution~\eqref{eqn:op-resolution} and the eigenvalues of $\op{T}$ fall in the interval $I$, we define
$$
f(\op{T}) := \sum\nolimits_{i=1}^{d^2} f(\lambda_i) \op{P}_i.
$$
This definition, of course, parallels the definition for matrices.

\subsection{Monotonicity and Convexity}

Let $X$ and $Y$ be sets of self-adjoint operators, such as $\Sym^d(I)$ or the set of self-adjoint operators in $\B(\M^d)$.  We can introduce notions of monotonicity and convexity for a general function $\op{\Psi} : X \to Y$ using the semidefinite order on the spaces of operators.

\begin{defn}[Monotone Operator-Valued Function]
The function $\op{\Psi} : X \to Y$ is \term{monotone} when
$$
\op{S} \psdle \op{T}
\quad\Longrightarrow\quad
\op{\Psi}(\op{S}) \psdle \op{\Psi}(\op{T})
\quad\text{for all $\op{S}, \op{T} \in X$.}
$$
\end{defn}

\begin{defn}[Convex Operator-Valued Function]
The function $\op{\Psi} : X \to Y$ is \term{convex} when $X$ is a convex set and
$$
\op{\Psi}( \alpha \op{S} + \bar{\alpha} \op{T} )
	\psdle \alpha \cdot \op{\Psi}(\op{S}) + \bar{\alpha} \cdot \op{\Psi}(\op{T})
	\quad\text{for all $\alpha \in [0,1]$ and all $\op{S},\op{T} \in X$.}
$$
We have written $\bar{\alpha} := 1 - \alpha$.  The function $\op{\Psi}$ is \term{concave} when $-\op{\Psi}$ is convex.
\end{defn}

\noindent
The convexity of an operator-valued function $\op{\Psi}$ is equivalent with a
Jensen-type relation:
\begin{equation} \label{eqn:op-jensen}
\op{\Psi}(\Expect \op{X}) \psdle \Expect \op{\Psi}(\op{X})
\end{equation}
whenever $\op{X}$ is an integrable random operator taking values in $X$.

In particular, we can apply these definitions to standard matrix and operator functions.  Let $I$ be an interval of the real line.  We say that the function $f : I \to \R$ is \term{operator monotone} when the lifted map $f : \Sym^d(I) \to \Sym^d$ is monotone for each natural number $d$.  Likewise, the function $f : I \to \R$ is \term{operator convex} when the lifted map $f : \Sym^d(I) \to \Sym^d$ is convex for each natural number $d$.

Although scalar monotonicity and convexity are quite common, they are much rarer in the matrix setting~\cite[Chap.~4]{Bhatia1997}.  For present purposes, we note that the power functions $t \mapsto t^p$ with $p \in [0,1]$ are operator monotone and operator concave.  The power functions $t \mapsto t^{p}$ with $p \in [1, 2]$ and the standard entropy $t \mapsto t \log t$ are all operator convex.

\subsection{The Derivative of a Vector-Valued Function}

The definition of the $\Phi_{\infty}$ function class involves a requirement that a certain standard matrix function is differentiable.  For completeness, we include the background needed to interpret this condition.

\begin{defn}[Derivative of a Vector-Valued Function] \label{defn:frechet}
Let $X$ and $Y$ be Banach spaces, and let $U$ be an open subset of $X$.  
A function $\mtx{F} : U \to Y$ is \term{differentiable} at a point $\mtx{A} \in U$ if there exists a bounded linear operator $\op{T} : X \to Y$ for which
\begin{equation*}
\lim_{\mtx{B} \to \zeromtx} \frac{\norm{\mtx{F}(\mtx{A} + \mtx{B}) - \mtx{F}(\mtx{A}) - \op{T}(\mtx{B})}_Y}{\norm{\mtx{B}}_X} = 0.
\end{equation*}
When $\mtx{F}$ is differentiable at $\mtx{A}$, the operator $\mathsf{T}$ is called the \term{derivative} of $\mtx{F}$ at $\mtx{A}$, and we define $\op{D}\mtx{F}(\mtx{A}) := \op{T}$.
\end{defn}

\noindent
The derivative and the directional derivative have the following relationship:
\begin{equation} \label{eqn:D-dif}
\frac{\diff{}}{\diff{s}}\mtx{F}(\mtx{A} + s \mtx{B}) \Big|_{s=0} = \op{D}\mtx{F}(\mtx{A}) (\mtx{B}).
\end{equation}
In Section~\ref{subsection:derivative}, we present an explicit formula for the derivative of a standard matrix function.

\section{Subadditivity of Matrix \texorpdfstring{$\phi$}{phi}-Entropy} \label{section:subadditivity}

In this section, we establish Theorem~\ref{thm:subadditivity-intro}, which states that the matrix $\phi$-entropy is subadditive for every function in the $\Phi_\infty$ class.
This result depends on a variational representation for the matrix $\phi$-entropy that appears in Section~\ref{subsection:lower}.  We use the variational formula to derive a Jensen-type inequality in Section~\ref{subsection:jensen}.  The proof of Theorem~\ref{thm:subadditivity-intro} appears in Section~\ref{subsection:proof}.

\subsection{Representation of Matrix $\phi$-Entropy as a Supremum} \label{subsection:lower}

The fundamental fact behind the subadditivity theorem is a representation of the matrix $\phi$-entropy as a supremum of affine functions.

\begin{lemma}[Supremum Representation for Entropy]
\label{lem:lower-variation}  %
Fix a function $\phi \in \Phi_\infty$, and introduce the scalar derivative $\psi = \phi'$.
Suppose that $\mtx{Z}$ is a random positive-semidefinite matrix for which $\norm{\mtx{Z}}$ and $\norm{\phi(\mtx{Z})}$ are integrable.  Then
\begin{equation} \label{eqn:lower-variation}
H_\phi ( \mtx{Z})  = \sup_{\mtx{T}} \ \Expect \ntr \big[ (\psi(\mtx{T}) - \psi(\Expect \mtx{T}))(\mtx{Z} - \mtx{T}) + \phi(\mtx{T}) - \phi(\Expect \mtx{T}) \big].
\end{equation}
The range of the supremum contains each random positive-definite matrix $\mtx{T}$ for which $\norm{\mtx{T}}$ and $\norm{\phi(\mtx{T})}$ are integrable.  In particular, the matrix $\phi$-entropy $H_\phi$ can be written in the dual form
\begin{equation} \label{eqn:duality}
H_\phi(\mtx{Z}) = \sup_{\mtx{T}}\ \Expect \ntr \big[ \mtx{\Upsilon}_1(\mtx{T}) \cdot \mtx{Z} + \mtx{\Upsilon}_2(\mtx{T}) \big],
\end{equation}
where $\mtx{\Upsilon}_i : \Sym^d_+ \to \Sym^d$ for $i = 1, 2$.
\end{lemma}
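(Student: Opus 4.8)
The plan is to establish the two inequalities in \eqref{eqn:lower-variation} separately; the dual form \eqref{eqn:duality} then follows by splitting the integrand of \eqref{eqn:lower-variation} into the part linear in $\mtx{Z}$ and the part that does not see $\mtx{Z}$, that is, by setting $\mtx{\Upsilon}_1(\mtx{T}) := \psi(\mtx{T}) - \psi(\Expect\mtx{T})$ and $\mtx{\Upsilon}_2(\mtx{T}) := \phi(\mtx{T}) - \phi(\Expect\mtx{T}) - (\psi(\mtx{T}) - \psi(\Expect\mtx{T}))\,\mtx{T}$. The affine case is degenerate (every term vanishes), so I assume $\phi$ obeys conditions (1)--(3) of Definition~\ref{defn:function-intro}. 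For the bound ``$\geq$'', I would test the right side of \eqref{eqn:lower-variation} with $\mtx{T}_\eps := \mtx{Z} + \eps\,\Id$, which is positive definite and inherits the integrability of $\mtx{Z}$. Since $\mtx{Z} - \mtx{T}_\eps = -\eps\,\Id$, the first group of terms collapses to $-\eps\,\Expect\ntr[\psi(\mtx{Z}+\eps\Id) - \psi(\Expect\mtx{Z}+\eps\Id)]$, which tends to $0$ as $\eps\downarrow 0$ (the only delicate point is at the zero eigenvalues of $\mtx{Z}$, where $\eps\,\psi(\eps)\to 0$ for the relevant $\psi$; integrability licenses passing the limit through $\Expect$), while the second group tends to $\Expect\ntr\phi(\mtx{Z}) - \ntr\phi(\Expect\mtx{Z}) = H_\phi(\mtx{Z})$ by continuity of $\phi$ on $\R_+$.

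For the bound ``$\leq$'', the key observation is an exact identity. Let $D_\phi(\mtx{X},\mtx{Y}) := \ntr[\phi(\mtx{X}) - \phi(\mtx{Y}) - \psi(\mtx{Y})(\mtx{X}-\mtx{Y})]$ denote the Bregman divergence of the convex trace function $\ntr\phi$; here I use that the gradient of $\ntr\phi$ at $\mtx{Y}$ is $\psi(\mtx{Y})$, i.e. $\op{D}(\ntr\phi)(\mtx{Y})(\mtx{H}) = \ntr[\psi(\mtx{Y})\,\mtx{H}]$, which comes from the derivative formula recalled in Section~\ref{subsection:derivative}. A direct expansion, using $\Expect\ntr[\psi(\Expect\mtx{T})(\mtx{Z}-\mtx{T})] = \ntr[\psi(\Expect\mtx{T})(\Expect\mtx{Z}-\Expect\mtx{T})]$, shows that for every admissible $\mtx{T}$,
\[
H_\phi(\mtx{Z}) - \Expect\ntr\big[(\psi(\mtx{T}) - \psi(\Expect\mtx{T}))(\mtx{Z}-\mtx{T}) + \phi(\mtx{T}) - \phi(\Expect\mtx{T})\big] = \Expect\big[D_\phi(\mtx{Z},\mtx{T})\big] - D_\phi(\Expect\mtx{Z},\Expect\mtx{T}).
\]
Hence it is enough to show that $D_\phi$ is jointly convex on $\Sym^d_+\times\Sym^d_{++}$: Jensen's inequality then gives $\Expect[D_\phi(\mtx{Z},\mtx{T})] \geq D_\phi(\Expect\mtx{Z},\Expect\mtx{T})$, so the right side above is nonnegative; and since it vanishes at $\mtx{T}=\mtx{Z}$ (the Bregman divergence is zero on the diagonal), the supremum equals $H_\phi(\mtx{Z})$.

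The joint convexity of $D_\phi$ is the heart of the matter and the step I expect to be hardest. It is the matrix analogue of the scalar fact that concavity of $t\mapsto[\phi''(t)]^{-1}$ forces the Bregman divergence of $\phi$ to be jointly convex, and it is exactly what condition (3) of Definition~\ref{defn:function-intro} is designed to supply. On the interior $\Sym^d_{++}\times\Sym^d_{++}$, I would compute the Hessian quadratic form of $D_\phi$ at $(\mtx{X},\mtx{Y})$ in a direction $(\mtx{H},\mtx{K})$; it simplifies to
\[
Q(\mtx{H},\mtx{K}) = \ip{\mtx{H}}{\op{D}\psi(\mtx{X})\,\mtx{H}} - 2\,\ip{\mtx{H}}{\op{D}\psi(\mtx{Y})\,\mtx{K}} + \ip{\mtx{K}}{\op{D}\psi(\mtx{Y})\,\mtx{K}} - \ntr\big[\op{D}^2\psi(\mtx{Y})(\mtx{K},\mtx{K})\,(\mtx{X}-\mtx{Y})\big],
\]
where the last term uses that the third derivative of $\ntr\phi$, namely $(\mtx{A},\mtx{B},\mtx{C})\mapsto\ntr[\op{D}^2\psi(\mtx{Y})(\mtx{A},\mtx{B})\,\mtx{C}]$, is fully symmetric. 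Because $\op{D}\psi(\mtx{X})$ is the (self-adjoint, positive-semidefinite) Hessian of the convex function $\ntr\phi$ and is invertible by condition (3), hence positive definite, I would minimize $Q$ over $\mtx{H}$ by completing the square; the minimum is the Schur complement of the $\mtx{H}$-block. It then remains to see that this Schur complement is nonnegative, and this is precisely where the operator concavity of $\op{G} := [\op{D}\psi(\cdot)]^{-1}$ enters: restricting $\op{G}$ to the line segment from $\mtx{Y}$ to $\mtx{X}$, concavity yields the supporting bound $\op{G}(\mtx{X}) \psdle \op{G}(\mtx{Y}) + \op{D}\op{G}(\mtx{Y})(\mtx{X}-\mtx{Y})$, and, combined with the derivative-of-inverse identity $\op{D}\op{G}(\mtx{Y})(\mtx{M}) = -[\op{D}\psi(\mtx{Y})]^{-1}\,\op{D}^2\psi(\mtx{Y})(\mtx{M},\cdot)\,[\op{D}\psi(\mtx{Y})]^{-1}$ and pairing both sides against $\mtx{L} := \op{D}\psi(\mtx{Y})\,\mtx{K}$, it reproduces exactly the inequality ``Schur complement $\geq 0$''. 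Thus $Q \geq 0$ on the interior, $D_\phi$ is convex there, and joint convexity extends to $\Sym^d_+\times\Sym^d_{++}$ by continuity of $\phi$ on $\R_+$.

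Two technical points need care. The calculation above formally uses a second derivative of $\psi$, i.e. $\phi\in C^3$, whereas Definition~\ref{defn:function-intro} only gives $\phi\in C^2$; one circumvents this either by first proving the statement for $C^3$ functions and then approximating, or by replacing $\op{D}^2\psi$ and the supporting bound throughout by the corresponding difference quotients and the chord form of operator concavity. And one should verify that $D_\phi(\mtx{Z},\mtx{T})$ and the integrand in \eqref{eqn:lower-variation} are integrable, which follows from the standing integrability hypotheses after restricting the supremum — harmlessly, since $\mtx{T}_\eps$ still qualifies — to those $\mtx{T}$ for which $\norm{\psi(\mtx{T})}$ is integrable as well.
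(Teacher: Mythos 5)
Your reduction is essentially correct and, at its core, equivalent to the paper's: the Bregman identity you write shows that the required upper bound is exactly the Jensen inequality $\Expect D_\phi(\mtx{Z},\mtx{T}) \geq D_\phi(\Expect\mtx{Z},\Expect\mtx{T})$, and this is what the paper proves, but by a different device. Rather than establishing joint convexity of $D_\phi$ through its Hessian, the paper interpolates only in the second argument, setting $\mtx{T}_s=(1-s)\mtx{Z}+s\mtx{T}$ and showing that $F(s):=\Expect\ntr[(\psi(\mtx{T}_s)-\psi(\Expect\mtx{T}_s))(\mtx{Z}-\mtx{T}_s)]+H_\phi(\mtx{T}_s)$ is nonincreasing. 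Differentiating once in $s$ produces only first derivatives of $\psi$ (the terms that would involve $\op{D}^2\psi$ cancel because the path starts at $\mtx{Z}$), and $F'(s)\leq 0$ reduces to Lemma~\ref{lem:technical}, the bound $\Expect \ip{\mtx{K}}{\op{D}\psi(\mtx{Y})(\mtx{K})} \geq \ip{\Expect\mtx{K}}{\op{D}\psi(\Expect\mtx{Y})(\Expect\mtx{K})}$, proved from joint convexity of $(\mtx{A},\op{T})\mapsto\ip{\mtx{A}}{\op{T}^{-1}(\mtx{A})}$ together with concavity of $\mtx{A}\mapsto[\op{D}\psi(\mtx{A})]^{-1}$ and antitonicity of the operator inverse. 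That is the same Schur-complement mechanism you invoke, but packaged so that only $\op{D}\psi$ ever appears.

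This difference matters because your execution has a genuine gap: the Hessian computation, the full symmetry of the third derivative of $\ntr\phi$, and the derivative-of-inverse identity for $\op{G}=[\op{D}\psi(\cdot)]^{-1}$ all require $\psi\in C^2$, i.e.\ $\phi\in C^3$, whereas Definition~\ref{defn:function-intro} grants only two continuous derivatives of $\phi$ on $\R_{++}$. Neither of your proposed patches is yet a proof: mollifying $\phi$ need not preserve condition (3) (concavity of $[\op{D}\psi]^{-1}$ is not obviously stable under smoothing, so you cannot assume approximation within $\Phi_d$), and the ``chord/difference-quotient'' rewrite is in effect a different argument that still has to be carried out --- the paper's interpolation is precisely such an argument. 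Two secondary points: the passage to general (unbounded, not bounded away from zero) $\mtx{Z}$ and $\mtx{T}$ is heavier than you acknowledge --- the paper devotes Appendix~\ref{app:general} to it, using truncations, pointwise bounds on $\gamma$, and the generalized Klein inequality --- and restricting the supremum to $\mtx{T}$ with $\norm{\psi(\mtx{T})}$ integrable proves a weaker claim than \eqref{eqn:lower-variation} as stated (the upper bound must hold for every $\mtx{T}$ with only $\norm{\mtx{T}}$ and $\norm{\phi(\mtx{T})}$ integrable), although such a restriction would still suffice for the later use of \eqref{eqn:duality}. Your identification of $\mtx{\Upsilon}_1,\mtx{\Upsilon}_2$ and your $\eps$-perturbation for attainability are fine in spirit and match the paper's intent.
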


\noindent
This result implies that $H_{\phi}$ is a convex function on the space of random positive-semidefinite matrices.  The dual representation of $H_{\phi}$ is well suited for establishing a form of Jensen's inequality, Lemma~\ref{lem:jensen}, which is the main ingredient in the proof of the subadditivity property, Theorem~\ref{thm:subadditivity-intro}.

It may be valuable to see some particular instances of the dual representation of the matrix $\phi$-entropy:
\begin{align*}
&H_{\phi}(\mtx{Z}) = \sup_{\mtx{T}}\ \Expect \ntr \big[ (\log \mtx{T} - \log (\Expect \mtx{T})) \cdot \mtx{Z} \big] &&\text{when $\phi(t) = t \log t$.} \\
&H_{\phi}(\mtx{Z}) =\sup_{\mtx{T}} \ \Expect \ntr \big[ p(\mtx{T}^{p-1} - (\Expect \mtx{T})^{p-1}) \cdot \mtx{Z} - (p-1)(\mtx{T}^p - (\Expect \mtx{T})^p)  \big] &&\text{when $\phi(t) = t^p$ for $p \in [1,2]$.} %
\end{align*}
The first formula is the matrix version of a well-known variational principle for the classical entropy, cf.~\cite[p.~525]{Boucheron2005}.  In the matrix setting, this result can be derived from the joint convexity of quantum relative entropy~\cite{Lindblad1973}.

\subsubsection{The Convexity Lemma} \label{subsubsection:convexity}

To establish the variational formula, we require a convexity result for a quadratic form connected with the function $\phi$.

\begin{lemma} \label{lem:technical}
Fix a function $\phi \in \Phi_{\infty}$, and let $\psi = \phi'$.  Suppose that $\mtx{Y}$ is a random matrix taking values in $\Sym^d_+$, and let $\mtx{K}$ be a random matrix taking values in $\M^d$.  Assume that $\norm{\mtx{Y}}$ and $\norm{\mtx{K}}$ are integrable.  Then
$$
\Expect \ip{ \mtx{K} }{ \op{D} \psi(\mtx{Y})(\mtx{K}) }
	\geq \ip{ (\Expect \mtx{K}) }{ \op{D} \psi(\Expect \mtx{Y})(\Expect \mtx{K}) }
$$
\end{lemma}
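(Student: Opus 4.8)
The plan is to reduce the matrix-valued convexity statement to the defining property of the class $\Phi_\infty$, namely that the map $\mtx{A} \mapsto [\op{D}\psi(\mtx{A})]^{-1}$ is operator concave on $\Sym^d_{++}$. The natural bridge is a variational (Legendre–type) representation of the quadratic form $\mtx{K} \mapsto \ip{\mtx{K}}{\op{D}\psi(\mtx{Y})(\mtx{K})}$ in terms of its ``inverse'' form $\ip{\mtx{L}}{[\op{D}\psi(\mtx{Y})]^{-1}(\mtx{L})}$. Concretely, since $\op{D}\psi(\mtx{Y})$ is a positive-definite self-adjoint operator on $\Sym^d$ (it is invertible by condition (3), and positivity comes from convexity of $\phi$, as $\psi = \phi'$ is monotone so its derivative is a positive operator), one has for each fixed value of $\mtx{Y}$ the elementary identity
\begin{equation*}
\ip{\mtx{K}}{\op{D}\psi(\mtx{Y})(\mtx{K})}
 = \sup_{\mtx{L} \in \Sym^d} \Big[ \tworealip{\mtx{L}}{\mtx{K}} - \ip{\mtx{L}}{[\op{D}\psi(\mtx{Y})]^{-1}(\mtx{L})} \Big],
\end{equation*}
with the supremum attained at $\mtx{L} = \op{D}\psi(\mtx{Y})(\mtx{K})$. (If $\mtx{K} \in \Sym^d$ one may drop the real part; in general one splits into Hermitian and skew-Hermitian parts, or simply works with the real inner product on the real Hilbert space underlying $\M^d$ — I would just note this and proceed.)

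First I would take expectations in this representation. For any fixed measurable choice of $\mtx{L} = \mtx{L}(\omega) \in \Sym^d$ (say bounded, so all integrals converge), linearity of expectation and the pointwise inequality give
\begin{equation*}
\Expect \ip{\mtx{K}}{\op{D}\psi(\mtx{Y})(\mtx{K})}
 \geq \Expect\Big[ \tworealip{\mtx{L}}{\mtx{K}} - \ip{\mtx{L}}{[\op{D}\psi(\mtx{Y})]^{-1}(\mtx{L})} \Big].
\end{equation*}
Now I would choose $\mtx{L}$ to be the \emph{deterministic} matrix $\mtx{L}_\star := \op{D}\psi(\Expect \mtx{Y})(\Expect \mtx{K})$, which is the optimal choice in the variational formula for the right-hand side of the lemma. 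With this choice the first term becomes $\tworealip{\mtx{L}_\star}{\Expect \mtx{K}}$, a constant. For the second term, concavity of $\mtx{A} \mapsto [\op{D}\psi(\mtx{A})]^{-1}$ in the operator order — applied inside the scalar-valued, order-preserving functional $\mtx{M} \mapsto \ip{\mtx{L}_\star}{\mtx{M}(\mtx{L}_\star)}$ (this is a positive linear functional on self-adjoint operators, hence monotone) — together with the operator Jensen inequality \eqref{eqn:op-jensen} yields
\begin{equation*}
\Expect \ip{\mtx{L}_\star}{[\op{D}\psi(\mtx{Y})]^{-1}(\mtx{L}_\star)}
 \leq \ip{\mtx{L}_\star}{[\op{D}\psi(\Expect \mtx{Y})]^{-1}(\mtx{L}_\star)}.
\end{equation*}
Substituting and then recognizing $[\op{D}\psi(\Expect\mtx{Y})]^{-1}(\mtx{L}_\star) = \Expect\mtx{K}$ collapses the right-hand side to exactly $\ip{\Expect\mtx{K}}{\op{D}\psi(\Expect\mtx{Y})(\Expect\mtx{K})}$, which is the claimed lower bound. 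So the whole argument is: pointwise Legendre duality $\Rightarrow$ expectation $\Rightarrow$ plug in the deterministic optimizer $\Rightarrow$ operator concavity of the inverse derivative $\Rightarrow$ undo the duality.

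The main obstacle I anticipate is not the algebra but the integrability and well-definedness bookkeeping. One must check that $\op{D}\psi(\mtx{Y})$ really is positive \emph{definite} (so its inverse form is finite) on the relevant range; that $\op{D}\psi(\mtx{Y})(\mtx{K})$ is integrable so that the supremum identity survives taking expectations — here the hypotheses only give $\Expect\norm{\mtx{Y}}, \Expect\norm{\mtx{K}} < \infty$, not boundedness, so I would likely first prove the inequality under an additional boundedness assumption (truncating $\mtx{Y}$ away from $0$ and $\infty$ and truncating $\mtx{K}$) and then pass to the limit by monotone/dominated convergence, using continuity of $\op{D}\psi$ on $\Sym^d_{++}$; and one must handle the possibility that $\Expect\mtx{Y}$ or $\mtx{Y}$ is only positive semidefinite rather than definite, again by an approximation $\mtx{Y} \to \mtx{Y} + \eps\Id$. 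The affine case of $\phi$ is trivial since then $\op{D}\psi \equiv 0$ and both sides vanish. These are routine but must be done carefully; the conceptual content is entirely carried by the Legendre duality together with condition (3) of Definition~\ref{defn:function-intro}.
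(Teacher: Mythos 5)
Your argument is correct, and it uses the same essential ingredient as the paper---the concavity of $\mtx{A}\mapsto[\op{D}\psi(\mtx{A})]^{-1}$ required by condition (3) of Definition~\ref{defn:function-intro}---but it packages the quadratic-form step differently. The paper writes $\Expect\ip{\mtx{K}}{\op{D}\psi(\mtx{Y})(\mtx{K})} = \Expect\, \mathscr{Q}(\mtx{K},[\op{D}\psi(\mtx{Y})]^{-1})$ for the jointly convex map $\mathscr{Q}(\mtx{A},\op{T}) := \ip{\mtx{A}}{\op{T}^{-1}(\mtx{A})}$, applies Jensen's inequality in the pair $(\mtx{K},[\op{D}\psi(\mtx{Y})]^{-1})$, and then combines the concavity hypothesis with the fact that the operator inverse reverses the semidefinite order. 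You instead invoke the pointwise Legendre representation $\ip{\mtx{K}}{\op{T}(\mtx{K})} = \sup_{\mtx{L}} \big[ 2\real\ip{\mtx{L}}{\mtx{K}} - \ip{\mtx{L}}{\op{T}^{-1}(\mtx{L})} \big]$, take expectations, and test with the deterministic optimizer $\mtx{L}_\star = \op{D}\psi(\Expect\mtx{Y})(\Expect\mtx{K})$; after that, only the positive linear functional $\op{M}\mapsto\ip{\mtx{L}_\star}{\op{M}(\mtx{L}_\star)}$ applied to $[\op{D}\psi(\mtx{Y})]^{-1}$ remains, and the operator Jensen inequality~\eqref{eqn:op-jensen} for the concave inverse-derivative map closes the argument. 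The two proofs are close cousins---your duality step is essentially the standard proof of the joint convexity of $\mathscr{Q}$, unrolled in the single instance needed---but your version is more self-contained (no citation for the convexity of $\mathscr{Q}$, no order-reversal lemma), at the cost of the bookkeeping you flag: the supremum must run over $\mtx{L}\in\M^d$ rather than $\Sym^d$ when $\mtx{K}$ is not Hermitian, and one needs $\mtx{Y}$ strictly positive definite and suitable integrability to justify the expectation and Jensen steps. Those caveats are handled no more carefully in the paper itself, which applies the lemma inside the proof of Lemma~\ref{lem:lower-variation} only for matrices whose eigenvalues are bounded and bounded away from zero (the general case being deferred to the appendix), so the truncation and $\mtx{Y}+\eps\Id$ approximations you sketch put your argument on the same footing.
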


\begin{proof}
The proof hinges on a basic convexity property of quadratic forms.  Define a map that takes a matrix $\mtx{A}$ in $\Sym^d$ and a positive-definite operator $\op{T}$ on $\M^d$ to a nonnegative number:
$$
\mathscr{Q} : (\mtx{A}, \op{T}) \mapsto \ip{ \mtx{A} }{ \op{T}^{-1}(\mtx{A}) }.
$$
We assert that the function $\mathscr{Q}$ is convex.  Indeed, the same result is well known when $\mtx{A}$ and $\op{T}$ are replaced by a vector and a positive-definite matrix~\cite[Exer.~1.5.1]{Bhatia2007}, and the extension is immediate from the isometric isomorphism between operators and matrices.

Recall that the $\Phi_{\infty}$ class requires $\mtx{A} \mapsto [\op{D} \psi(\mtx{A})]^{-1}$ to be a concave map on $\Sym^d_{++}$.  With these observations at hand, we can make the following calculation:
\begin{align*}
\Expect \ip{ \mtx{K} }{ \op{D}\psi(\mtx{Y})(\mtx{K}) }
&= \Expect \ip{ \mtx{K} }{ ([\op{D}\psi(\mtx{Y})]^{-1})^{-1} (\mtx{K}) } \\
&\geq \ip{ (\Expect \mtx{K}) }{ (\Expect [ \op{D} \psi(\mtx{Y})]^{-1} )^{-1} (\Expect \mtx{K}) } \\
& \geq \ip{ (\Expect \mtx{K}) }{ ([ \op{D}\psi(\Expect \mtx{Y})]^{-1} )^{-1}(\Expect \mtx{K}) } \\
& = \ip{ (\Expect \mtx{K}) }{ \op{D}\psi(\Expect \mtx{Y})(\Expect \mtx{K}) }.
\end{align*}
We obtain the second relation when we apply Jensen's inequality to the convex function $\mathscr{Q}$.  The third relation depends on the semidefinite Jensen inequality~\eqref{eqn:op-jensen} for the concave function $\mtx{A} \mapsto [\op{D}\psi(\mtx{A})]^{-1}$, coupled with the fact~\cite[Prop.~V.1.6]{Bhatia1997} that the operator inverse reverses the semidefinite order.
\end{proof}

\subsubsection{Proof of Lemma \ref{lem:lower-variation}} \label{subsubsection:proof-lower}

The argument parallels the proof of~\cite[Lem.~1]{Boucheron2005}.  We begin with some reductions.  The case where $\phi$ is an affine function is immediate, so we may require the derivative $\psi = \phi'$ to be non-constant.  By approximation, we may also assume that the random matrix $\mtx{Z}$ is strictly positive definite.

[Indeed, since $\phi$ is continuous on $\R_+$, the Dominated Convergence Theorem implies that the matrix $\phi$-entropy $H_{\phi}$ is continuous on the set containing each positive-semidefinite random matrix $\mtx{Y}$ where $\norm{ \mtx{Y} }$ and $\norm{\phi(\mtx{Y})}$ are integrable.  Therefore, we can approximate a positive-semidefinite random matrix $\mtx{Z}$ by a sequence $\{\mtx{Y}_n\}$ of positive-definite random matrices where $\mtx{Y}_n \to \mtx{Z}$ and be confident that $H_{\phi}(\mtx{Y}_n) \to H_{\phi}(\mtx{Z})$.]

When $\mtx{T} = \mtx{Z}$, the argument of the supremum in~\eqref{eqn:lower-variation} equals $H_{\phi}(\mtx{Z})$.  Therefore, our burden is to verify the inequality
\begin{equation} \label{eqn:ineq}
H_{\phi}(\mtx{Z}) \geq \Expect \ntr \big[ (\psi(\mtx{T}) - \psi(\Expect \mtx{T}))(\mtx{Z} - \mtx{T}) + \Expect \phi(\mtx{T}) - \phi(\Expect \mtx{T}) \big]
\end{equation}
for each random positive-definite matrix $\mtx{T}$ that satisfies the same integrability requirements as $\mtx{Z}$.
For simplicity, we assume that the eigenvalues of both $\mtx{Z}$ and $\mtx{T}$ are bounded and bounded away from zero.  See Appendix~\ref{app:general} for the extension to the general case.

We use an interpolation argument to establish \eqref{eqn:ineq}.  Define the family of random matrices
\begin{equation*}\label{eqn:T-s}
\mtx{T}_s := (1 - s) \cdot \mtx{Z} + s \cdot \mtx{T}\quad \text{for $s \in [0, 1]$}.
\end{equation*}
Introduce the real-valued function
\begin{equation*}
F(s) := \Expect \ntr \big[ (\psi(\mtx{T}_s) - \psi(\Expect \mtx{T}_s) )\cdot (\mtx{Z} - \mtx{T}_s) \big] + H_\phi( \mtx{T}_s).
\end{equation*}
Observe that $F(0) = H_{\phi}(\mtx{Z})$, while $F(1)$ coincides with the right-hand side of~\eqref{eqn:ineq}.  Therefore, to establish~\eqref{eqn:ineq}, it suffices to show that the function $F(s)$ is weakly decreasing on the interval $[0,1]$.

We intend to prove that $F'(s) \leq 0$ for $s \in [0, 1]$.  Since $\mtx{Z} - \mtx{T}_s = -s \cdot (\mtx{T} - \mtx{Z})$, we can rewrite the function $F$ in the form
\begin{equation} \label{eqn:F-convenient}
F(s) = -s  \cdot\Expect \ntr \big[ (\psi(\mtx{T}_s) - \psi(\Expect \mtx{T}_s)) \cdot (\mtx{T} - \mtx{Z})\big] + \Expect \ntr \big[ \phi(\mtx{T}_s) - \phi(\Expect \mtx{T}_s)) \big].
\end{equation}
We differentiate the function $F$ to obtain
\begin{multline} \label{eqn:F'-1}
F'(s) = -s \cdot \Expect \ntr \big[ \op{D}\psi(\mtx{T}_s) (\mtx{T} - \mtx{Z}) \cdot (\mtx{T} - \mtx{Z}) \big]
	+ s \cdot \ntr \big[  \op{D}\psi(\mtx{T}_s) (\Expect(\mtx{T} - \mtx{Z})) \cdot (\Expect (\mtx{T} - \mtx{Z})) \big]\\
	- \Expect \ntr \big[ (\psi(\mtx{T}_s) - \psi(\Expect \mtx{T}_s)) \cdot  (\mtx{T} - \mtx{Z})\big]
	+ \Expect \ntr \big[ (\psi(\mtx{T}_s) - \psi(\Expect \mtx{T}_s)) \cdot (\mtx{T} - \mtx{Z}) \big].
\end{multline}
To handle the first term in~\eqref{eqn:F-convenient}, we applied the product rule, the rule~\eqref{eqn:D-dif} for directional derivatives, and the expression $\diff{}\mtx{T}_s / \diff{s} = \mtx{T} - \mtx{Z}$.  We used the identity
$\op{D} \trace \phi(\mtx{A}) = \psi(\mtx{A})$
to differentiate the second term.
We also relied on the Dominated Convergence Theorem to pass derivatives through  expectations, which is justified because $\phi$ and $\psi$ are continuously differentiable on $\Sym^d_{++}$ and the eigenvalues of the random matrices are bounded and bounded away from zero.  Now, the last two terms in~\eqref{eqn:F'-1} cancel, and we can rewrite the first two terms using the trace inner product:
$$
F'(s) = s \cdot \big[ \ip{ (\Expect(\mtx{T}-\mtx{Z})) }{ \op{D} \psi(\Expect \mtx{T}_s) (\Expect(\mtx{T}-\mtx{Z})) } -
\Expect \ip{ (\mtx{T} - \mtx{Z}) }{ \op{D}\psi(\mtx{T}_s)(\mtx{T} - \mtx{Z}) } \big].
$$
Invoke Lemma~\ref{lem:technical} to conclude that $F'(s) \leq 0$ for $s \in [0,1]$.

\subsection{A Conditional Jensen Inequality} \label{subsection:jensen}

The variational inequality in Lemma~\ref{lem:lower-variation} leads directly to a Jensen inequality for the matrix $\phi$-entropy.

\begin{lemma}[Conditional Jensen Inequality] \label{lem:jensen}
Fix a function $\phi \in \Phi_\infty$.  Suppose that $(X_1, X_2)$ is a pair of independent random variables taking values in a Polish space, and let $\mtx{Z} = \mtx{Z}(X_1, X_2)$ be a random positive-semidefinite matrix for which $\norm{\mtx{Z}}$ and $\norm{\phi(\mtx{Z})}$ are integrable.  Then
\begin{equation*} \label{eqn:jensen}
H_\phi \left( \Expect_1 \mtx{Z} \right) \leq \Expect_1 H_\phi \left( \mtx{Z} \condl X_1\right),
\end{equation*}
where $\Expect_1$ is the expectation with respect to the first variable $X_1$.
\end{lemma}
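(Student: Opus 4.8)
The plan is to derive this Jensen-type inequality from the dual representation of the matrix $\phi$-entropy in Lemma~\ref{lem:lower-variation}, following the scalar argument of Boucheron et al. First I would dispose of the trivial case in which $\phi$ is affine (then $H_\phi \equiv 0$) and record the integrability reductions. By Fubini, for almost every value $x_1$ of $X_1$ the conditional random matrix $\mtx{Z}(x_1, \cdot)$ satisfies $\Expect[\norm{\mtx{Z}(x_1, X_2)}] < \infty$ and $\Expect[\norm{\phi(\mtx{Z}(x_1, X_2))}] < \infty$, and the partial expectation $\mtx{W} := \Expect_1 \mtx{Z}$, which is a function of $X_2$ alone taking values in $\Sym^d_+$, inherits the integrability of $\mtx{Z}$. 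Thus every entropy functional in the statement is well defined and Lemma~\ref{lem:lower-variation} applies to each.

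Next I would invoke the supremum representation for $H_\phi(\mtx{W})$, where the supremum ranges over random positive-definite matrices $\mtx{T} = \mtx{T}(X_2)$ with $\norm{\mtx{T}}$ and $\norm{\phi(\mtx{T})}$ integrable. Fix $\eps > 0$ and choose such a near-optimal $\mtx{T}^\star = \mtx{T}^\star(X_2)$, so that
\[
H_\phi(\mtx{W}) \leq \Expect \ntr \big[ (\psi(\mtx{T}^\star) - \psi(\Expect \mtx{T}^\star))(\mtx{W} - \mtx{T}^\star) + \phi(\mtx{T}^\star) - \phi(\Expect \mtx{T}^\star) \big] + \eps .
\]
The crucial observation is that the coefficient matrices in this affine-in-$\mtx{W}$ expression --- namely $\psi(\mtx{T}^\star) - \psi(\Expect \mtx{T}^\star)$ together with $\phi(\mtx{T}^\star) - \phi(\Expect \mtx{T}^\star) - (\psi(\mtx{T}^\star) - \psi(\Expect \mtx{T}^\star))\mtx{T}^\star$ --- depend only on $\mtx{T}^\star$ and its expectation, hence are functions of $X_2$ that do not involve $X_1$ at all.

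Then I would use $\mtx{T}^\star$ as a (suboptimal) test matrix in the supremum representation of the conditional entropy $H_\phi(\mtx{Z} \condl X_1 = x_1)$, which is legitimate because $\mtx{T}^\star$ satisfies the required integrability in $X_2$; this gives, for almost every $x_1$,
\[
H_\phi(\mtx{Z} \condl X_1 = x_1) \geq \Expect \ntr \big[ (\psi(\mtx{T}^\star) - \psi(\Expect \mtx{T}^\star))(\mtx{Z}(x_1, X_2) - \mtx{T}^\star) + \phi(\mtx{T}^\star) - \phi(\Expect \mtx{T}^\star) \big] ,
\]
with the same coefficient matrices as before. Taking $\Expect_1$ of both sides, interchanging the integrations over $X_1$ and $X_2$ (justified by independence and the integrability conditions), and using that the coefficients are $X_1$-free so that $\Expect_1 \mtx{Z}(X_1, X_2) = \mtx{W}$, the right-hand side collapses to exactly the expression bounding $H_\phi(\mtx{W})$ above, minus $\eps$. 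Hence $\Expect_1 H_\phi(\mtx{Z} \condl X_1) \geq H_\phi(\mtx{W}) - \eps$, and letting $\eps \to 0$ finishes the proof.

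The conceptual content is carried entirely by Lemma~\ref{lem:lower-variation} (and, beneath it, the convexity Lemma~\ref{lem:technical}); the remaining work is soft. I expect the only real care needed is in the bookkeeping: checking that the near-optimal $\mtx{T}^\star$ for the outer entropy is admissible in every conditional entropy and that its coefficient matrices are genuinely independent of $X_1$, and verifying the integrability that licenses Fubini when the two expectations are swapped.
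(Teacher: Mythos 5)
Your proposal is correct and follows essentially the same route as the paper: both arguments rest on the supremum (dual) representation of $H_\phi$ from Lemma~\ref{lem:lower-variation}, the observation that the test matrix $\mtx{T}(X_2)$ and hence the coefficients $\mtx{\Upsilon}_i(\mtx{T}(X_2))$ are free of $X_1$, a Fubini interchange justified by independence, and the exchange of supremum and expectation. Your $\eps$-near-optimal choice of $\mtx{T}^\star$ is simply an explicit rendering of the paper's one-line step $\sup_{\mtx{T}} \Expect_1(\cdot) \leq \Expect_1 \sup_{\mtx{T}}(\cdot)$, so there is no substantive difference.
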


\begin{proof}
Let $\Expect_2$ denote the expectation with respect to the second variable $X_2$.
The result is a simple consequence of the dual representation~\eqref{eqn:duality} of the matrix $\phi$-entropy:
\begin{equation}
H_\phi \left( \Expect_1 \mtx{Z}  \right)
	= \sup_{\mtx{T}}\ \Expect_2 \ntr \left[ \mtx{\Upsilon}_1\big(\mtx{T}(X_2)\big)
	\cdot (\Expect_1 \mtx{Z})  + \mtx{\Upsilon}_2 \big(\mtx{T}(X_2)\big) \right]. \label{eqn:rewrite}
\end{equation}
We have written $\mtx{T}(X_2)$ to emphasize that this matrix depends only on the randomness in $X_2$.  To control \eqref{eqn:rewrite}, we apply Fubini's theorem to interchange the order of $\Expect_1$ and $\Expect_2$, and then we exploit the convexity of the supremum to draw out the expectation $\Expect_1$.  
\begin{align*}
H_\phi \left( \Expect_1 \mtx{Z} \right)
&= \sup_{\mtx{T}} \ \Expect_1 \Expect_2 \ntr \left[
	\mtx{\Upsilon}_1(\mtx{T}(X_2)) \cdot \mtx{Z}
	+ \mtx{\Upsilon}_2(\mtx{T}(X_2))  \right]  \\
&\leq \Expect_1 \sup_{\mtx{T}} \ \Expect_2 \ntr
	\left[ \mtx{\Upsilon}_1(\mtx{T}(X_2)) \cdot \mtx{Z}
	+ \mtx{\Upsilon}_2(\mtx{T}(X_2) \right]  \\
& = \Expect_1 \sup_{\mtx{T}} \ \Expect \big[ \ntr [
	\mtx{\Upsilon}_1(\mtx{T}(X_2)) \cdot \mtx{Z}
	+ \mtx{\Upsilon}_2(\mtx{T}(X_2) ] \condl X_1  \big] \\
& = \Expect_1 H_\phi ( \mtx{Z} \condl X_1 ).
\end{align*}
The last relation is the duality formula \eqref{eqn:duality}, applied conditionally.
\end{proof}

\subsection{Proof of Theorem \ref{thm:subadditivity-intro}} \label{subsection:proof}

We are now prepared to establish the main result on subadditivity of matrix $\phi$-entropy.  This theorem is a direct consequence of the conditional Jensen inequality, Lemma~\ref{lem:jensen}.  In this argument, we write $\Expect_i$ for the expectation with respect to the variable $X_i$.  Using the notation from Section~\ref{subsection:subadditivity}, we see that $\Expect_i  = \Expect[ \, \cdot \condl \vct{x}_{-i} ]$.

First, separate the matrix $\phi$-entropy into two parts by adding and subtracting terms:
\begin{align}
H_\phi( \mtx{Z})
&= \Expect \ntr{}[ \phi(\mtx{Z})
	- \phi( \Expect_1 \mtx{Z}) + \phi(\Expect_1 \mtx{Z})
	- \phi( \Expect \mtx{Z} ) ].
	\notag \\
&= \Expect \big[ \Expect_1 \ntr{}[ \phi(\mtx{Z})
	- \phi( \Expect_1 \mtx{Z})  ] \big]
+  \Expect \ntr{}[ \phi(\Expect_1 \mtx{Z})
	- \phi( \Expect \Expect_1 \mtx{Z} ) ].
	\label{eqn:first-term}
\end{align}
We can rewrite this expression as
\begin{align} \label{eqn:separate}
H_{\phi}(\mtx{Z})
	&= \Expect H_\phi (\mtx{Z} \condl \vct{x}_{-1} )
	+ H_\phi ( \Expect_1 \mtx{Z} ) \notag \\
	&\leq \Expect H_\phi (\mtx{Z} \condl \vct{x}_{-1} )
	+ \Expect_1 H_\phi (\mtx{Z} \condl X_1 ).
\end{align}
The inequality follows from Lemma~\ref{eqn:jensen} because $\mtx{Z} = \mtx{Z}(X_1, \vct{x}_{-1})$ where $X_1$ and $\vct{x}_{-1}$ are independent random variables.

The first term on the right-hand side of~\eqref{eqn:separate} coincides with the first summand on the right-hand side of the subadditivity inequality~\eqref{eqn:subadditivity-intro}.  We must argue that the remaining summands are contained in the second term on the right-hand side of~\eqref{eqn:separate}.  Repeating the argument in the previous paragraph, conditioning on $X_1$, we obtain 
$$
H_\phi (\mtx{Z} \condl X_1 )
	\leq \Expect\big[ H_{\phi}(\mtx{Z} \condl \vct{x}_{-2}) \condl X_1 ]
	+ \Expect_2  H_{\phi}(\mtx{Z} \condl X_1, X_2).
$$
Substituting this expression into~\eqref{eqn:separate}, we obtain
$$
H_{\phi}(\mtx{Z})
	\leq \sum\nolimits_{i=1}^2 \Expect H_\phi (\mtx{Z} \condl \vct{x}_{-i} )
	+ \Expect_1 \Expect_2 H_{\phi}(\mtx{Z} \condl X_1, X_2).
$$
Continuing in this fashion, we arrive at the subadditivity inequality \eqref{eqn:subadditivity-intro}:
$$
H_{\phi}(\mtx{Z})
	\leq \sum\nolimits_{i=1}^n \Expect H_\phi (\mtx{Z} \condl \vct{x}_{-i} ).
$$
This completes the proof of Theorem~\ref{thm:subadditivity-intro}.

\section{Entropy Bounds via Exchangeability}
\label{section:exchangeable}

In this section, we derive Corollary~\ref{cor:subadditivity-intro}, which uses exchangeable pairs to bound the conditional entropies that appear in Theorem~\ref{thm:subadditivity-intro}.  This result follows from another variational representation of the matrix $\phi$-entropy. %

\subsection{Representation of the Matrix $\phi$-Entropy as an Infimum} \label{subsection:upper}

In this section, we present another formula for the matrix $\phi$-entropy.

\begin{lemma}[Infimum Representation for Entropy] \label{lem:entropy-bound}
Fix a function $\phi \in \Phi_{\infty}$, and let $\psi = \phi'$.
Assume that $\mtx{Z}$ is a random positive-semidefinite matrix where
$\norm{\mtx{Z}}$ and $\norm{\phi(\mtx{Z})}$ are integrable.  Then
\begin{equation} \label{eqn:entropy-bound}
H_\phi ( \mtx{Z})  = \inf_{\mtx{A} \in \Sym^d_+} \Expect \ntr
\left[ \phi(\mtx{Z}) - \phi(\mtx{A}) - (\mtx{Z} - \mtx{A}) \cdot \psi(\mtx{A}) \right].
\end{equation}
Let $\mtx{Z}'$ be an independent copy of $\mtx{Z}$.  Then
\begin{equation} \label{eqn:symmetry}
H_\phi(\mtx{Z}) \leq \frac{1}{2} \cdot \Expect \ntr \left[ (\mtx{Z} - \mtx{Z}') (\psi(\mtx{Z}) - \psi(\mtx{Z}'))   \right].
\end{equation}
\end{lemma}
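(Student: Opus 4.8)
The plan is to establish the infimum representation \eqref{eqn:entropy-bound} first and then deduce the symmetrization bound \eqref{eqn:symmetry} as an easy consequence. For \eqref{eqn:entropy-bound}, I would recognize the integrand as a "Bregman-divergence" type expression. Fix a deterministic matrix $\mtx{A} \in \Sym^d_+$ and consider the quantity
\begin{equation*}
G(\mtx{A}) := \Expect \ntr\left[ \phi(\mtx{Z}) - \phi(\mtx{A}) - (\mtx{Z} - \mtx{A}) \cdot \psi(\mtx{A}) \right].
\end{equation*}
Using $\op{D}\trace\phi(\mtx{A}) = \psi(\mtx{A})$, the operator convexity of $\ntr\phi$ gives $\ntr[\phi(\mtx{Z})] \geq \ntr[\phi(\mtx{A})] + \ntr[(\mtx{Z}-\mtx{A})\psi(\mtx{A})]$ pointwise, so $G(\mtx{A}) \geq 0$; more to the point, by the definition of $H_\phi$ we can write
\begin{equation*}
G(\mtx{A}) = H_\phi(\mtx{Z}) + \Expect\ntr[\phi(\Expect\mtx{Z})] - \ntr[\phi(\mtx{A})] - \ntr[(\Expect\mtx{Z} - \mtx{A})\psi(\mtx{A})],
\end{equation*}
since the term linear in $\mtx{Z}$ has expectation depending only on $\Expect\mtx{Z}$. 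Thus $G(\mtx{A}) = H_\phi(\mtx{Z}) + D(\Expect\mtx{Z}, \mtx{A})$ where $D$ is the (nonnegative) Bregman divergence of $\ntr\phi$, which vanishes when $\mtx{A} = \Expect\mtx{Z}$. Taking the infimum over $\mtx{A}$ therefore gives exactly $H_\phi(\mtx{Z})$, the minimizer being $\mtx{A} = \Expect\mtx{Z}$; this needs $\Expect\mtx{Z} \in \Sym^d_+$, which holds, and a small approximation argument (as in the proof of Lemma~\ref{lem:lower-variation}) in case $\Expect\mtx{Z}$ is only positive semidefinite rather than definite, so that $\psi$ is defined. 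The convexity of $\ntr\phi$ is the only real input and it is cited earlier in the excerpt.

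For \eqref{eqn:symmetry}, I would specialize \eqref{eqn:entropy-bound} by choosing $\mtx{A} = \Expect[\mtx{Z}' \mid \mtx{Z}]$, or rather exploit independence directly: take $\mtx{A}$ to be a random matrix measurable with respect to $\mtx{Z}'$. Strictly, \eqref{eqn:entropy-bound} is stated for deterministic $\mtx{A}$, but since $\mtx{Z}$ and $\mtx{Z}'$ are independent, one can condition on $\mtx{Z}'$, apply \eqref{eqn:entropy-bound} to the conditional law of $\mtx{Z}$ with the deterministic choice $\mtx{A} = \mtx{Z}'$, and then take the outer expectation — yielding
\begin{equation*}
H_\phi(\mtx{Z}) \leq \Expect\ntr\left[ \phi(\mtx{Z}) - \phi(\mtx{Z}') - (\mtx{Z} - \mtx{Z}')\psi(\mtx{Z}') \right].
\end{equation*}
By symmetry of the pair $(\mtx{Z}, \mtx{Z}')$, the same bound holds with the roles of $\mtx{Z}$ and $\mtx{Z}'$ swapped; averaging the two inequalities cancels the $\phi(\mtx{Z}) - \phi(\mtx{Z}')$ terms and produces
\begin{equation*}
H_\phi(\mtx{Z}) \leq \frac{1}{2}\Expect\ntr\left[ -(\mtx{Z}-\mtx{Z}')\psi(\mtx{Z}') - (\mtx{Z}'-\mtx{Z})\psi(\mtx{Z}) \right] = \frac{1}{2}\Expect\ntr\left[ (\mtx{Z}-\mtx{Z}')(\psi(\mtx{Z}) - \psi(\mtx{Z}')) \right],
\end{equation*}
which is \eqref{eqn:symmetry}.

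The main obstacle I anticipate is not conceptual but a matter of hygiene: justifying that the infimum in \eqref{eqn:entropy-bound} is attained (or approached) at a legitimate point when $\Expect\mtx{Z}$ lies on the boundary of $\Sym^d_+$, since $\psi = \phi'$ may blow up there (e.g. $\psi(t) = \log t + 1$). The fix is the same Dominated-Convergence / approximation routine used in Section~\ref{subsubsection:proof-lower}: replace $\mtx{Z}$ by $\mtx{Z} + \eps\Id$, prove the identity there, and let $\eps \downarrow 0$ using continuity of $H_\phi$ on the relevant class. A secondary technical point is the passage from deterministic $\mtx{A}$ to the $\mtx{Z}'$-measurable choice in the second part; the conditioning argument sketched above handles this cleanly because of the stated independence of $\mtx{Z}$ and $\mtx{Z}'$.
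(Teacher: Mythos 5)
Your argument is essentially the paper's own proof: the nonnegativity of the Bregman-type remainder is exactly Klein's inequality (Proposition~\ref{prop:klein}) applied with $\mtx{B} = \Expect \mtx{Z}$, the infimum is attained at $\mtx{A} = \Expect\mtx{Z}$, and the symmetrized bound comes from choosing $\mtx{A} = \mtx{Z}'$ (justified by independence/conditioning) and averaging with the exchanged roles of $\mtx{Z}$ and $\mtx{Z}'$. Your extra care about $\psi$ blowing up when $\Expect\mtx{Z}$ is singular is a reasonable refinement but does not change the route.
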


We require a familiar trace inequality~\cite[Thm.~2.11]{Carlen2010}.  This bound simply restates the fact that a convex function lies above its tangents.

\begin{prop}[Klein's Inequality] \label{prop:klein}
Let $f : I \to \R$ be a differentiable convex function on an interval $I$ of the real line.  Then
\begin{equation*}
\ntr \big[f(\mtx{B}) - f(\mtx{A}) - (\mtx{B}-\mtx{A}) \cdot f'(\mtx{A})] \geq 0
\quad\text{for all $\mtx{A}, \mtx{B} \in \Sym^d(I)$.}
\end{equation*}
\end{prop}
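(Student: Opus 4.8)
The plan is to reduce this matrix trace inequality to the elementary scalar fact that a differentiable convex function lies above each of its tangent lines. The mechanism is to expand the normalized trace in bases adapted to the spectral decompositions of $\mtx{A}$ and $\mtx{B}$, after which the statement falls apart into a nonnegative combination of scalar tangent-line inequalities.

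First I would fix orthonormal eigenbases, writing $\mtx{A} = \sum_{i=1}^{d} \alpha_i\, \vct{u}_i \vct{u}_i^{\adj}$ and $\mtx{B} = \sum_{j=1}^{d} \beta_j\, \vct{v}_j \vct{v}_j^{\adj}$, where $\{\vct{u}_i\}$ and $\{\vct{v}_j\}$ are orthonormal bases of $\C^d$. Because $\mtx{A}, \mtx{B} \in \Sym^d(I)$, all the eigenvalues $\alpha_i$ and $\beta_j$ lie in the interval $I$. I would then introduce the nonnegative overlap weights $p_{ij} := \abssq{\ip{\vct{u}_i}{\vct{v}_j}}$; orthonormality of the two bases gives $\sum_i p_{ij} = 1$ for each $j$ and $\sum_j p_{ij} = 1$ for each $i$, although only nonnegativity and the marginal sum over $j$ will actually be needed.

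Next I would evaluate the three traces in these coordinates. By Definition~\ref{defn:matrix-function}, $f(\mtx{A}) = \sum_i f(\alpha_i)\, \vct{u}_i \vct{u}_i^{\adj}$ and $f'(\mtx{A}) = \sum_i f'(\alpha_i)\, \vct{u}_i \vct{u}_i^{\adj}$. Hence $\trace f(\mtx{B}) = \sum_j f(\beta_j) = \sum_{i,j} p_{ij}\, f(\beta_j)$ and $\trace f(\mtx{A}) = \sum_i f(\alpha_i) = \sum_{i,j} p_{ij}\, f(\alpha_i)$, while $\trace[(\mtx{B}-\mtx{A})\, f'(\mtx{A})] = \sum_i f'(\alpha_i)\, \ip{\vct{u}_i}{(\mtx{B}-\mtx{A})\vct{u}_i} = \sum_{i,j} p_{ij}\, (\beta_j - \alpha_i)\, f'(\alpha_i)$, using $\ip{\vct{u}_i}{\mtx{B}\vct{u}_i} = \sum_j \beta_j\, p_{ij}$ and $\ip{\vct{u}_i}{\mtx{A}\vct{u}_i} = \alpha_i$. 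Assembling the pieces, $\ntr[f(\mtx{B}) - f(\mtx{A}) - (\mtx{B}-\mtx{A})\, f'(\mtx{A})] = \tfrac{1}{d} \sum_{i,j} p_{ij}\, \bigl[ f(\beta_j) - f(\alpha_i) - (\beta_j - \alpha_i)\, f'(\alpha_i) \bigr]$.

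Finally I would invoke scalar convexity: for any $\alpha, \beta \in I$, differentiability and convexity of $f$ on $I$ yield the tangent-line bound $f(\beta) \geq f(\alpha) + (\beta - \alpha)\, f'(\alpha)$. Every bracketed term in the sum above is therefore nonnegative, and since each $p_{ij} \geq 0$ the whole sum is nonnegative, which is the claim. I do not anticipate a substantive obstacle here; the only point requiring a bit of care is eigenvalue multiplicity, which is precisely why I would work with orthonormal eigenvector bases rather than the spectral projectors of~\eqref{eqn:spectral}, and the hypothesis $\mtx{A}, \mtx{B} \in \Sym^d(I)$ is used exactly to guarantee that the scalar tangent inequality is legitimate at the eigenvalues $\alpha_i, \beta_j$ that occur.
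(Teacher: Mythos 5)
Your argument is correct and complete. The paper itself does not prove Proposition~\ref{prop:klein}; it cites it from Carlen's notes, so you are supplying the omitted details, and the route you take --- expanding all three traces in the two eigenbases, introducing the overlap weights $p_{ij} = \abssq{\ip{\vct{u}_i}{\vct{v}_j}}$, and reducing to the scalar tangent-line inequality $f(\beta) \geq f(\alpha) + (\beta-\alpha)f'(\alpha)$ term by term --- is exactly the standard proof of Klein's inequality (and the same mechanism underlying the generalized Klein inequality, Proposition~\ref{prop:gen-klein}, used in the appendix). One small slip in your commentary, not in the computation: you claim only the marginal $\sum_j p_{ij} = 1$ is needed, but the step $\trace f(\mtx{B}) = \sum_j f(\beta_j) = \sum_{i,j} p_{ij} f(\beta_j)$ uses the other marginal $\sum_i p_{ij} = 1$; since orthonormality of both bases gives both identities, this costs nothing, but the parenthetical remark should be dropped or corrected. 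Working with orthonormal eigenvector bases rather than the spectral projectors of~\eqref{eqn:spectral} is indeed the right way to sidestep multiplicity issues, and the hypothesis $\mtx{A}, \mtx{B} \in \Sym^d(I)$ enters precisely where you say it does.
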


\noindent
With Klein's inequality at hand, the variational inequality follows quickly.

\begin{proof}[Proof of Lemma~\ref{lem:entropy-bound}]
Every function $\phi \in \Phi_{\infty}$ is convex and differentiable, so Proposition~\ref{prop:klein} with $\mtx{B} = \Expect \mtx{Z}$ implies that
\begin{equation*}
\ntr{} [ -  \phi(\Expect \mtx{Z}) ]
	\leq \ntr{} [ - \phi(\mtx{A}) - (\Expect \mtx{Z} - \mtx{A}) \cdot \psi(\mtx{A}) ]
\end{equation*}
for each fixed matrix $\mtx{A} \in \Sym_+^d$.  Substitute this bound into the definition~\eqref{eqn:entropy-intro} of the matrix $\phi$-entropy, and draw the expectation out of the trace to reach
\begin{equation} \label{eqn:H-inequality}
H_\phi(\mtx{Z}) \leq  \Expect  \ntr{} [\phi(\mtx{Z}) - \phi(\mtx{A}) - (\mtx{Z} - \mtx{A})\cdot \psi(\mtx{A}) ].
\end{equation}
The inequality~\eqref{eqn:H-inequality} becomes an equality when $\mtx{A} = \Expect \mtx{Z}$, which establishes the variational representation~\eqref{eqn:entropy-bound}. 

The symmetrized bound~\eqref{eqn:symmetry} follows from an exchangeability argument.  Select $\mtx{A} = \mtx{Z}'$ in the expression~\eqref{eqn:H-inequality}, and apply the fact that $\Expect \phi(\mtx{Z}) = \Expect \phi(\mtx{Z}')$ to obtain
\begin{align}
H_\phi(\mtx{Z}) \leq - \Expect  \ntr{} [ (\mtx{Z} - \mtx{Z}') \cdot \psi(\mtx{Z}') ]. \label{eqn:bound1}
\end{align}
Since $\mtx{Z}$ and $\mtx{Z}'$ are exchangeable, we can also bound the matrix $\phi$-entropy as
\begin{equation} \label{eqn:bound2}
H_{\phi}(\mtx{Z}) \leq  - \Expect \ntr{}[ (\mtx{Z}' - \mtx{Z}) \cdot \psi(\mtx{Z}) ].
\end{equation}
Take the average of the two bounds \eqref{eqn:bound1} and \eqref{eqn:bound2} to reach the desired inequality \eqref{eqn:symmetry}.
\end{proof}

\subsection{Proof of Corollary~\ref{cor:subadditivity-intro}}

Lemma \ref{lem:entropy-bound} leads to a succinct proof of Corollary \ref{cor:subadditivity-intro}.  We continue to use the notation from
Section~\ref{subsection:subadditivity}.
Apply the inequality~\eqref{eqn:symmetry}
conditionally to control the conditional matrix $\phi$-entropy:
\begin{equation} \label{eqn:symmetry-con}
H_{\phi}(\mtx{Z} \condl \vct{x}_{-i})
\leq \frac{1}{2} \cdot \Expect \ntr \big[ (\mtx{Z} - \mtx{Z}'_i)  (\psi(\mtx{Z}) - \psi(\mtx{Z}'_i)) \condl \vct{x}_{-i}\big]
\end{equation}
because $\mtx{Z}'_i$ and $\mtx{Z}$ are conditionally iid, given $\vct{x}_{-i}$.
Take the expectation on both sides of \eqref{eqn:symmetry-con},
and invoke the tower property of conditional expectation:
\begin{equation} \label{eqn:symmetry-con2}
\Expect H_{\phi}(\mtx{Z} \condl \vct{x}_{-i}) \leq \frac{1}{2} \cdot \Expect \ntr
\big[ (\mtx{Z} - \mtx{Z}'_i) (\psi(\mtx{Z}) - \psi(\mtx{Z}'_i)) \big].
\end{equation}
To complete the proof, substitute~\eqref{eqn:symmetry-con2} into the right-hand side of
the bound~\eqref{eqn:subadditivity-intro} from the subadditivity
result, Theorem~\ref{thm:subadditivity-intro}.

\section{Members of the $\Phi_\infty$ function class}\label{section:member}

In this section, we demonstrate that the classical entropy and certain power
functions belong to the $\Phi_{\infty}$ function class.  The main challenge
is to verify that $\mtx{A} \mapsto [\op{D} \psi(\mtx{A})]^{-1}$
is a concave operator-valued map.  We establish this result for
the classical entropy in Section~\ref{subsection:regular-entropy}
and for the power function in Section~\ref{subsection:power}.

\subsection{Tensor Product Operators}

First, we explain the tensor product construction of
an operator.  The tensor product will allow us to represent
the derivative of a standard matrix function compactly.

\begin{defn}[Tensor Product]
Let $\mtx{A}, \mtx{B} \in \Sym^d$.  The operator $\mtx{A} \otimes \mtx{B} \in \B(\M^d)$ is defined by the relation
\begin{equation} \label{eqn:tensor}
(\mtx{A} \otimes \mtx{B})( \mtx{M} ) = \mtx{AMB} \quad \text{for each $\mtx{M} \in \M^d$}.
\end{equation}
The operator $\mtx{A} \otimes \mtx{B}$ is self-adjoint because we assume the factors are Hermitian matrices.
\end{defn}

Suppose that $\mtx{A}, \mtx{B} \in \Sym^d$ are Hermitian matrices with spectral resolutions
\begin{equation} \label{eqn:AB-res}
\mtx{A} = \sum\nolimits_{i=1}^d \lambda_i \mtx{P}_i
\quad\text{and}\quad
\mtx{B} = \sum\nolimits_{j=1}^d \mu_j \mtx{Q}_j.
\end{equation}
Then the tensor product $\mtx{A} \otimes \mtx{B}$ has the spectral resolution
$$
\mtx{A} \otimes \mtx{B} = \sum\nolimits_{i,j=1}^d \lambda_i \mu_j \mtx{P}_i \otimes \mtx{Q}_j.
$$
In particular, the tensor product of two positive-definite matrices is a positive-definite operator.

\subsection{The Derivative of a Standard Matrix Function} \label{subsection:derivative}

Next, we present some classical results on the derivative of a standard matrix function.
See~\cite[Sec.~V.3]{Bhatia1997} for further details.

\begin{defn}[Divided Difference] \label{defn:divided-difference}
Let $f : I \to \R$ be a continuously differentiable function on an interval $I$
of the real line.  The \term{first divided difference} is the map
$f^{[1]} : \R^2 \to \R$ defined by
$$
f^{[1]}(\lambda, \mu) := \begin{dcases}
	f'(\lambda), & \lambda = \mu. \\
	\frac{f(\lambda) - f(\mu)}{\lambda - \mu}, & \lambda \neq \mu,
\end{dcases}
$$
We also require the Hermite representation of the divided difference:
\begin{equation} \label{eqn:hermite}
f^{[1]}(\lambda, \mu)
	= \int_0^1 f'(\tau \lambda + \bar{\tau} \mu) \idiff{\tau},
\end{equation}
where we have written $\bar{\tau} := 1-\tau$.
\end{defn}

The following result %
gives an explicit expression for the derivative of a standard
matrix function in terms of a divided difference.

\begin{prop}[Dalecki{\u\i}--Kre{\u\i}n Formula] \label{prop:derivative}
Let $f : I \to \R$ be a continuously differentiable function of an interval $I$
of the real line.  Suppose that $\mtx{A} \in \Sym^d(I)$ is a diagonal matrix
with $\mtx{A} = \diag(a_1, \dots, a_d)$.
The derivative ${\op D}f(\mtx{A}) \in \B(\M^d)$, and 
$$
{\op D}f(\mtx{A})(\mtx{H})
	= f^{[1]}(\mtx{A}) \odot \mtx{H}
	\quad\text{for $\mtx{H} \in \M^d$,}
$$
where $\odot$ denotes the Schur (i.e., componentwise) product and $f^{[1]}(\mtx{A})$ refers to the matrix of divided differences:
$$
\big[ f^{[1]}(\mtx{A}) \big]_{ij} = f^{[1]}(a_i, a_j)
\quad\text{for $i, j = 1, \dots, d$.}
$$
\end{prop}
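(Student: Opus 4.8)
The plan is to prove the formula first for polynomials by a direct expansion, and then to pass to a general continuously differentiable $f$ by polynomial approximation, using the Hermite representation~\eqref{eqn:hermite} to convert uniform convergence of derivatives into uniform convergence of divided differences. For the monomial $f(t) = t^m$ and any $\mtx{B}, \mtx{H} \in \Sym^d$, expanding $(\mtx{B}+\mtx{H})^m$ and separating the part linear in $\mtx{H}$ gives $(\mtx{B}+\mtx{H})^m - \mtx{B}^m = \sum_{k=0}^{m-1} \mtx{B}^k \mtx{H} \mtx{B}^{m-1-k} + O(\normsq{\mtx{H}})$, so $f$ is differentiable at $\mtx{B}$ with $\op{D}f(\mtx{B}) = \sum_{k=0}^{m-1} \mtx{B}^k \otimes \mtx{B}^{m-1-k}$ in the tensor-product notation~\eqref{eqn:tensor}. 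When $\mtx{B} = \diag(a_1, \dots, a_d)$, the $(i,j)$ entry of the image of $\mtx{H}$ is $\big(\sum_{k=0}^{m-1} a_i^k a_j^{m-1-k}\big) h_{ij}$; the geometric sum equals $(a_i^m - a_j^m)/(a_i - a_j)$ when $a_i \neq a_j$ and $m a_i^{m-1}$ when $a_i = a_j$, which is exactly $(t^m)^{[1]}(a_i, a_j)$. By linearity of the derivative and of the divided difference, $\op{D}p(\mtx{A})(\mtx{H}) = p^{[1]}(\mtx{A}) \odot \mtx{H}$ for every polynomial $p$; and for a general Hermitian $\mtx{B} = \sum_i \beta_i \mtx{Q}_i$, the same computation gives $\op{D}p(\mtx{B})(\mtx{H}) = \sum_{i,j} p^{[1]}(\beta_i, \beta_j)\, \mtx{Q}_i \mtx{H} \mtx{Q}_j =: \op{L}^p_{\mtx{B}}(\mtx{H})$.

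Now let $f \in C^1(I)$, let $[\alpha,\beta]$ be a compact interval containing the spectrum of $\mtx{A}$, and choose $\delta > 0$ so that $J := [\alpha - \delta, \beta + \delta] \subset I$. By the Weierstrass theorem pick polynomials $q_n \to f'$ uniformly on $J$, fix $c \in J$, and set $p_n(t) := f(c) + \int_c^t q_n$, so that $p_n \to f$ and $p_n' \to f'$ uniformly on $J$. Since $J$ is an interval, the Hermite formula~\eqref{eqn:hermite} yields $\abs{p_n^{[1]}(\lambda,\mu) - p_m^{[1]}(\lambda,\mu)} \le \sup_J \abs{p_n' - p_m'}$ for all $\lambda, \mu \in J$, with no case distinction between $\lambda = \mu$ and $\lambda \neq \mu$. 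Because Schur multiplication by a matrix $M$ has operator norm on $\M^d$ at most a dimensional constant $c(d)$ times $\max_{i,j} \abs{M_{ij}}$, it follows that for each Hermitian $\mtx{B}$ with spectrum in $J$ the operators $\op{L}^{p_n}_{\mtx{B}}$ form a Cauchy sequence in $\B(\M^d)$, with $\norm{\op{L}^{p_n}_{\mtx{B}} - \op{L}^{p_m}_{\mtx{B}}} \le c(d) \sup_J \abs{p_n' - p_m'}$ uniformly in such $\mtx{B}$; denote the limit by $\op{L}^f_{\mtx{B}}$. Each $\op{L}^{p_n}_{\mtx{B}}$ is a tensor-valued polynomial in $\mtx{B}$ and hence continuous, so the uniform limit $\mtx{B} \mapsto \op{L}^f_{\mtx{B}}$ is operator-norm continuous as well; and at the diagonal matrix $\mtx{A}$ we have $\op{L}^f_{\mtx{A}}(\mtx{H}) = \lim_n p_n^{[1]}(\mtx{A}) \odot \mtx{H} = f^{[1]}(\mtx{A}) \odot \mtx{H}$.

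It remains to see that $f$ really is differentiable at $\mtx{A}$ with derivative $\op{L}^f_{\mtx{A}}$. Take $\mtx{H}$ small enough that $\mtx{A} + t\mtx{H}$ has spectrum in $J$ for all $t \in [0,1]$. From $p_n(\mtx{A}+\mtx{H}) - p_n(\mtx{A}) = \int_0^1 \op{D}p_n(\mtx{A}+t\mtx{H})(\mtx{H}) \idiff{t} = \int_0^1 \op{L}^{p_n}_{\mtx{A}+t\mtx{H}}(\mtx{H}) \idiff{t}$ (using~\eqref{eqn:D-dif}) and the uniform convergence just established, letting $n \to \infty$ gives $f(\mtx{A}+\mtx{H}) - f(\mtx{A}) = \int_0^1 \op{L}^f_{\mtx{A}+t\mtx{H}}(\mtx{H}) \idiff{t}$, whence
\[
\norm{f(\mtx{A}+\mtx{H}) - f(\mtx{A}) - \op{L}^f_{\mtx{A}}(\mtx{H})} \le \Big( \sup_{0 \le t \le 1} \norm{\op{L}^f_{\mtx{A}+t\mtx{H}} - \op{L}^f_{\mtx{A}}} \Big) \norm{\mtx{H}},
\]
and the parenthesized quantity tends to $0$ as $\norm{\mtx{H}} \to 0$ by the continuity of $\mtx{B} \mapsto \op{L}^f_{\mtx{B}}$. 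Thus $\op{D}f(\mtx{A}) = \op{L}^f_{\mtx{A}}$ is the bounded operator $\mtx{H} \mapsto f^{[1]}(\mtx{A}) \odot \mtx{H}$, as claimed. I expect the main obstacle to be exactly this last step: one needs uniform convergence of the polynomial derivatives not just at $\mtx{A}$ but over a whole matrix neighborhood, together with operator-norm continuity of $\mtx{B} \mapsto \op{L}^f_{\mtx{B}}$ — and one has to be careful that the Schur-product expression for $\op{L}^{p_n}_{\mtx{B}}$, written in a $\mtx{B}$-dependent eigenbasis, coincides with the basis-free operator $\op{D}p_n(\mtx{B})$, which is why it is convenient to carry everything through the tensor-product operators of~\eqref{eqn:tensor}. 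The polynomial case and the divided-difference bookkeeping are routine.
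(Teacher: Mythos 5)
Your proof is correct. Bear in mind that the paper offers no proof of Proposition~\ref{prop:derivative} at all: it is quoted as a classical fact with a pointer to~\cite[Sec.~V.3]{Bhatia1997}, so there is no internal argument to compare yours against; what you give is essentially the standard textbook derivation. The structure is sound: the exact identity $\op{D}p(\mtx{B}) = \sum_{k=0}^{m-1} \mtx{B}^k \otimes \mtx{B}^{m-1-k}$ for monomials, its identification with the divided-difference Schur multiplier in an eigenbasis, and passage to general $f \in C^1$ by approximating $f'$ uniformly on the enlarged spectral interval $J$. You correctly isolate the genuinely delicate points: the Hermite representation gives $\sup_{\lambda,\mu \in J} \abs{p_n^{[1]}(\lambda,\mu) - p_m^{[1]}(\lambda,\mu)} \leq \sup_J \abs{p_n' - p_m'}$, the resulting operator-norm Cauchy estimate is uniform over Hermitian $\mtx{B}$ with spectrum in $J$, and the identity $f(\mtx{A}+\mtx{H}) - f(\mtx{A}) = \int_0^1 \op{L}^f_{\mtx{A}+t\mtx{H}}(\mtx{H}) \idiff{t}$ combined with continuity of $\mtx{B} \mapsto \op{L}^f_{\mtx{B}}$ upgrades convergence of the candidate derivatives to genuine Fr\'echet differentiability rather than mere directional differentiability. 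Two small points are worth flagging. First, choosing $\delta > 0$ with $J = [\alpha-\delta, \beta+\delta] \subset I$ tacitly assumes the spectrum of $\mtx{A}$ lies in the interior of $I$; this is anyway required for $f(\mtx{A} + \mtx{H})$ to be defined for all small Hermitian $\mtx{H}$, hence for the derivative to make sense, and in the paper's applications $I = \R_{++}$ is open, so nothing is lost. Second, your argument produces the derivative along Hermitian perturbations; the extension to all of $\M^d$, as in the statement, is immediate since the Schur-multiplier formula is complex-linear in $\mtx{H}$.
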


\subsection{Operator Means} \label{subsubsection:concave-mean}

Our approach also relies on the concept of an operator mean.  The following definition is due to Kubo \& Ando~\cite{Kubo1980}.

\begin{defn}[Operator Mean]
Let $f : \R_{++} \to \R_{++}$ be an operator concave function that satisfies $f(1) = 1$.  Fix a natural number $d$.  Let $\op{S}$ and  $\op{T}$ be positive-definite operators in $\B(\M^d)$.  We define the mean of the operators:
$$
\op{M}_f(\op{S}, \op{T}) := \op{S}^{1/2} \cdot f(\op{S}^{-1/2} \op{T} \op{S}^{-1/2}) \cdot \op{S}^{1/2}
	\in \B(\M^d).
$$
When $\op{S}$ and $\op{T}$ commute, the formula simplifies to
$$
\op{M}_f(\op{S}, \op{T}) = \op{S} \cdot f(\op{T} \op{S}^{-1}).
$$
\end{defn}

\noindent
A few examples may be helpful.  The function $f(s) = (1+s)/2$ represents the usual arithmetic mean, the function $f(s) = s^{1/2}$ gives the geometric mean, and the function $f(s) = 2s/(1+s)$ yields the harmonic mean.  Operator means have a concavity property, which was established in the paper~\cite{Kubo1980}.

\begin{prop}[Operator Means are Concave] \label{prop:mean}
Let $f : \R_{++} \to \R_{++}$ be an operator monotone function with $f(1) = 1$.  Fix a natural number $d$.  Suppose that $\op{S}_1, \op{S}_2, \op{T}_1, \op{T}_2$ are positive-definite operators in $\B(\M^d)$.  Then
$$
\alpha \cdot \op{M}_f(\op{S}_1, \op{T}_1) +
\bar{\alpha} \cdot \op{M}_f(\op{S}_2, \op{T}_2)
\psdle \op{M}_f(\alpha \op{S}_1 + \bar{\alpha} \op{S}_2,
\alpha \op{T}_1 + \bar{\alpha} \op{T}_2)
$$
for $\alpha \in [0,1]$ and $\bar{\alpha} = 1 - \alpha$.
\end{prop}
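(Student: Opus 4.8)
The plan is to reduce the assertion to the joint concavity of a single elementary operator mean---the weighted parallel sum---and to establish that concavity by a short variational argument. First I would use the identification of $\B(\M^d)$ with the matrix algebra $\M^{d^2}$ from Section~\ref{subsection:operator}, so that $\op{M}_f$ is the ordinary Kubo--Ando operator mean of positive-definite operators and the scalar $f$, being operator monotone, is monotone on the self-adjoint operators of $\B(\M^d)$; note also that a positive operator monotone function on $\R_{++}$ is operator concave, so $\op{M}_f$ is well defined in the sense of the preceding definition. Since $\op{M}_f$ is continuous it would suffice to treat the midpoint $\alpha=\tfrac12$, but a general weight $\alpha\in[0,1]$ costs nothing extra.

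The key reduction invokes the integral representation, due to L\"owner, of a normalized positive operator monotone function on the half-line: every operator monotone $f:\R_{++}\to\R_{++}$ with $f(1)=1$ has the form
\[
f(t) = \int_{[0,\infty]} \frac{t(1+\lambda)}{t+\lambda} \idiff{\nu(\lambda)},
\]
where $\nu$ is a probability measure on the compactified half-line $[0,\infty]$ and the kernel is read as $1$ at $\lambda=0$ and as $t$ at $\lambda=\infty$. Substituting the (finite) spectral resolution of $\op{S}^{-1/2}\op{T}\op{S}^{-1/2}$ into this identity and conjugating by $\op{S}^{1/2}$---using the identity $\op{S}^{1/2} \cdot m_\lambda(\op{S}^{-1/2}\op{T}\op{S}^{-1/2}) \cdot \op{S}^{1/2} = (1+\lambda)(\op{S}^{-1}+\lambda\op{T}^{-1})^{-1}$ for $m_\lambda(t)=t(1+\lambda)/(t+\lambda)$---yields
\[
\op{M}_f(\op{S},\op{T}) = \nu(\{0\})\,\op{S} + \nu(\{\infty\})\,\op{T} + \int_{(0,\infty)} (1+\lambda)\,\bigl(\op{S}^{-1}+\lambda\op{T}^{-1}\bigr)^{-1} \idiff{\nu(\lambda)}.
\]
The first two summands are linear in $(\op{S},\op{T})$, and a positive-scalar multiple, a finite sum, or a positive-measure integral of jointly operator-concave maps is again jointly operator-concave, so the proposition follows once we show that $(\op{S},\op{T})\mapsto\bigl(\op{S}^{-1}+\lambda\op{T}^{-1}\bigr)^{-1}$ is jointly concave on pairs of positive-definite operators, for each fixed $\lambda>0$.

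For that last step I would use the variational description of the weighted parallel sum. Regarding $\op{S}$ and $\op{T}$ as positive-definite operators on the Hilbert space $\M^d$ with the trace inner product, a completing-the-square computation shows that
\[
\ip{\mtx{X}}{\bigl(\op{S}^{-1}+\lambda\op{T}^{-1}\bigr)^{-1}(\mtx{X})} = \inf_{\mtx{Y}+\mtx{Z}=\mtx{X}} \Bigl( \ip{\mtx{Y}}{\op{S}(\mtx{Y})} + \lambda^{-1}\,\ip{\mtx{Z}}{\op{T}(\mtx{Z})} \Bigr) \qquad\text{for every }\mtx{X}\in\M^d,
\]
the infimum being attained at $\mtx{Y} = (\lambda\op{S}+\op{T})^{-1}\bigl(\op{T}(\mtx{X})\bigr)$. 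For fixed $\mtx{Y},\mtx{Z}$ the bracketed quantity is linear in $(\op{S},\op{T})$, so its infimum over the admissible splittings $\mtx{X}=\mtx{Y}+\mtx{Z}$ is a concave function of $(\op{S},\op{T})$; as this holds for every $\mtx{X}\in\M^d$, the map $(\op{S},\op{T})\mapsto\bigl(\op{S}^{-1}+\lambda\op{T}^{-1}\bigr)^{-1}$ is concave in the semidefinite order. Substituting this back into the representation of $\op{M}_f$ finishes the proof.

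The one genuinely delicate ingredient is the L\"owner representation itself: one needs the version valid for a function that is operator monotone only on the open half-line $\R_{++}$, with the conclusion that the representing measure $\nu$ is a \emph{positive} measure on $[0,\infty]$. This is classical but not a formality, and it is where the real content sits---everything downstream of it is bookkeeping with the functional calculus and an elementary Hilbert-space optimization. As an alternative, one could simply cite the Kubo--Ando theorem~\cite{Kubo1980}, which contains both this representation and the concavity conclusion, at the cost of quoting a form of the statement being proved.
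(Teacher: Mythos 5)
Your argument is correct, but it is not the route the paper takes: the paper offers no proof of Proposition~\ref{prop:mean} at all, simply attributing the concavity of operator means to Kubo--Ando~\cite{Kubo1980}. What you have written is essentially a reconstruction of that classical proof. The two pillars are sound: (i) the L{\"o}wner-type representation $f(t)=\int_{[0,\infty]} t(1+\lambda)/(t+\lambda)\idiff{\nu(\lambda)}$ for a positive operator monotone $f$ on $\R_{++}$, which after conjugation gives $\op{M}_f(\op{S},\op{T})=\nu(\{0\})\op{S}+\nu(\{\infty\})\op{T}+\int_{(0,\infty)}(1+\lambda)(\op{S}^{-1}+\lambda\op{T}^{-1})^{-1}\idiff{\nu(\lambda)}$ (your endpoint conventions and the identity $\op{S}^{1/2}m_\lambda(\op{S}^{-1/2}\op{T}\op{S}^{-1/2})\op{S}^{1/2}=(1+\lambda)(\op{S}^{-1}+\lambda\op{T}^{-1})^{-1}$ check out, and in the finite-dimensional setting of $\B(\M^d)\cong\M^{d^2}$ the interchange of the integral with the functional calculus is unproblematic); and (ii) the joint concavity of the weighted parallel sum $(\op{S},\op{T})\mapsto(\op{S}^{-1}+\lambda\op{T}^{-1})^{-1}$, which your variational formula $\ip{\mtx{X}}{(\op{S}^{-1}+\lambda\op{T}^{-1})^{-1}(\mtx{X})}=\inf_{\mtx{Y}+\mtx{Z}=\mtx{X}}\bigl(\ip{\mtx{Y}}{\op{S}(\mtx{Y})}+\lambda^{-1}\ip{\mtx{Z}}{\op{T}(\mtx{Z})}\bigr)$ delivers correctly, since a pointwise infimum of maps affine in $(\op{S},\op{T})$ is concave and quadratic-form inequalities for all $\mtx{X}$ are exactly the semidefinite order. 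Two cosmetic remarks: the normalization $f(1)=1$ is only needed to make $\nu$ a probability measure (positivity of $\nu$ suffices for concavity, so nothing is lost for the non-normalized means the paper never uses), and the reduction to midpoint concavity is unnecessary since your argument handles arbitrary $\alpha$ directly. The trade-off is the expected one: the paper's citation keeps Section~5 short, while your proof makes the proposition self-contained up to the integral representation theorem, which, as you rightly flag, is the one genuinely deep input and should be cited precisely (e.g., Bhatia~\cite{Bhatia1997}, Chap.~V, or Kubo--Ando~\cite{Kubo1980}) rather than re-derived.
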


\subsection{Entropy} \label{subsection:regular-entropy}

In this section, we demonstrate that the standard entropy function is a member of the $\Phi_{\infty}$ function class.

\begin{thm} \label{thm:entropy}
The function $\phi : t \mapsto t \log t - t$ is a member of the $\Phi_{\infty}$ class.
\end{thm}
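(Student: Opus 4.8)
The plan is to verify, for every dimension $d$, the three conditions of Definition~\ref{defn:function-intro}; choosing the affine summand $-t$ is convenient because then $\psi:=\phi'$ is exactly the logarithm, $\psi(t)=\log t$. Conditions (1) and (2) are routine: extending $\phi$ by $\phi(0):=0$ makes $\phi$ continuous on $\R_+$, and $\phi'(t)=\log t$, $\phi''(t)=1/t>0$ on $\R_{++}$, so $\phi$ is convex with two continuous derivatives there. Everything then comes down to condition (3): the operator-valued map $\mtx{A}\mapsto[\op{D}\psi(\mtx{A})]^{-1}$ must be well defined and concave with respect to the semidefinite order on operators, for $\mtx{A}\in\Sym^d_{++}$.

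First I would pin down $\op{D}\psi(\mtx{A})$ and its inverse. Working in a diagonalizing basis $\mtx{A}=\diag(a_1,\dots,a_d)$ (harmless, since the relevant properties are unitarily covariant), the Dalecki{\u\i}--Kre{\u\i}n formula, Proposition~\ref{prop:derivative}, identifies $\op{D}\psi(\mtx{A})$ with Schur multiplication by the divided-difference matrix $\psi^{[1]}(\mtx{A})$, whose entries $\psi^{[1]}(a_i,a_j)=(\log a_i-\log a_j)/(a_i-a_j)$ (equal to $1/a_i$ when $i=j$) are strictly positive. Hence this operator is invertible on $\Sym^d$, and $[\op{D}\psi(\mtx{A})]^{-1}$ is Schur multiplication by the entrywise reciprocal matrix, whose $(i,j)$ entry is the logarithmic mean $L(a_i,a_j):=(a_i-a_j)/(\log a_i-\log a_j)$.

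The main obstacle is to recast this inverse in a form whose concavity is visible; this is exactly what the operator-mean machinery of Section~\ref{subsubsection:concave-mean} is for. The key point is that $[\op{D}\psi(\mtx{A})]^{-1}$ is an operator mean. Put $\op{S}:=\mtx{A}\otimes\Id$ and $\op{T}:=\Id\otimes\mtx{A}$, which are commuting positive-definite operators on $\M^d$, and let $f(s):=(s-1)/\log s$ with $f(1):=1$. Using the commuting formula $\op{M}_f(\op{S},\op{T})=\op{S}\cdot f(\op{T}\op{S}^{-1})$ and evaluating on the joint eigenbasis, $\op{M}_f(\op{S},\op{T})$ multiplies the $(i,j)$ component of $\mtx{H}$ by $a_i\,f(a_j/a_i)=L(a_i,a_j)$, so
$$
[\op{D}\psi(\mtx{A})]^{-1}=\op{M}_f\big(\mtx{A}\otimes\Id,\ \Id\otimes\mtx{A}\big).
$$
To apply Proposition~\ref{prop:mean} I still need $f$ to be operator monotone with $f(1)=1$, and this follows from the representation $f(s)=\int_0^1 s^{\tau}\idiff{\tau}$, which displays $f$ as a positive mixture of the operator monotone functions $s\mapsto s^{\tau}$, $\tau\in[0,1]$.

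It remains to assemble the pieces: Proposition~\ref{prop:mean} gives joint concavity of $(\op{S},\op{T})\mapsto\op{M}_f(\op{S},\op{T})$ in the semidefinite order on positive-definite operators, and precomposing with the real-linear map $\mtx{A}\mapsto(\mtx{A}\otimes\Id,\ \Id\otimes\mtx{A})$ shows that $\mtx{A}\mapsto[\op{D}\psi(\mtx{A})]^{-1}$ is concave. Since $d$ was arbitrary, $\phi\in\Phi_\infty$. (Alternatively one can observe $[\op{D}\psi(\mtx{A})]^{-1}(\mtx{H})=\int_0^1\mtx{A}^{\tau}\mtx{H}\,\mtx{A}^{1-\tau}\idiff{\tau}$ and invoke the classical concavity of $\mtx{A}\mapsto\mtx{A}^{\tau}\otimes\mtx{A}^{1-\tau}$ for $\tau\in[0,1]$, but the operator-mean route relies only on results recorded above.)
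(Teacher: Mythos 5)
Your proof is correct and takes essentially the same route as the paper: the Dalecki{\u\i}--Kre{\u\i}n formula identifies $[\op{D}\psi(\mtx{A})]^{-1}$ as Schur multiplication by logarithmic means, and concavity then follows from the Kubo--Ando concavity of operator means (Proposition~\ref{prop:mean}) applied to the commuting pair $\mtx{A}\otimes\Id$, $\Id\otimes\mtx{A}$ composed with the linear map $\mtx{A}\mapsto(\mtx{A}\otimes\Id,\Id\otimes\mtx{A})$. The only (harmless) difference is organizational: you express the inverse as a single mean $\op{M}_f$ with $f(s)=(s-1)/\log s$, using $f(s)=\int_0^1 s^{\tau}\idiff{\tau}$ only to certify operator monotonicity, whereas the paper first writes $[\op{D}\psi(\mtx{A})]^{-1}=\int_0^1 \mtx{A}^{\tau}\otimes\mtx{A}^{1-\tau}\idiff{\tau}$ and applies the mean construction to each $f_\tau(s)=s^{\tau}$ (recovering Lieb's concavity theorem), so both arguments rest on the same facts.
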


\noindent
This result immediately implies Theorem~\ref{thm:membership}(1), which states that $t \mapsto t\log t$ belongs to $\Phi_{\infty}$.  Indeed, the matrix entropy class contains all affine functions and all finite sums of its elements.

Theorem~\ref{thm:entropy} follows easily from (deep) classical results because the variational representation of the standard entropy from Lemma~\ref{lem:lower-variation} is equivalent with the joint convexity of quantum relative entropy~\cite{Lindblad1973}.  Instead of pursuing this idea, we present an argument that parallels the approach we use to study the power function.  Some of the calculations below also appear in~\cite[Proof of Cor.~2.1]{Lieb1973}, albeit in compressed form.

\begin{proof}
Fix a positive integer $d$.  We plan to show that the function $\phi : t \mapsto t\log t - t$ is a
member of the class $\Phi_d$.  Evidently, $\phi$ is continuous and convex on $\R_+$, and it has two continuous derivatives on $\R_{++}$.  It remains to verify the concavity condition for the second derivative.  

Write $\psi(t) = \phi'(t) = \log t$, and let $\mtx{A} \in \Sym^d_{++}$.  Without loss of generality, we may choose a basis where $\mtx{A} = \diag(a_1, \dots, a_d)$.  The Dalecki{\u\i}--Kre{\u\i}n formula, Proposition~\ref{prop:derivative}, tells us
$$
\op{D} \psi(\mtx{A})(\mtx{H})
	= \psi^{[1]}(\mtx{A}) \odot \mtx{H}
	= \big[ \psi^{[1]}(a_i, a_j) \cdot h_{ij} \big]_{ij}.
$$
As an operator, the derivative acts by Schur multiplication.  This formula also makes it clear that the inverse of this operator acts by Schur multiplication:
$$
[\op{D} \psi(\mtx{A})]^{-1}(\mtx{H})
	= \bigg[ \frac{1}{\psi^{[1]}(a_i, a_j)} \cdot h_{ij} \bigg]_{ij}.
$$
Using the Hermite representation~\eqref{eqn:hermite} of the first divided difference of $t \mapsto \econst^t$, we find
$$
\frac{1}{\psi^{[1]}(\mu, \lambda)} = \frac{\lambda - \mu}{\log \lambda - \log \mu}
	= \int_0^1 \econst^{\tau \log \lambda + \bar{\tau} \log \mu} \idiff{\tau}
	= \int_0^1 \lambda^\tau \mu^{\bar{\tau}} \idiff{\tau}.
$$
The latter calculation assumes that $\mu \neq \lambda$; it extends to the case $\mu = \lambda$ because both sides of the identity are continuous.  As a consequence,
$$
[\op{D} \psi(\mtx{A})]^{-1}(\mtx{H})
	= \int_0^1 \bigg[ a_{i}^\tau h_{ij} a_{j}^{\bar{\tau}} \bigg]_{ij} \idiff{\tau}
	= \int_0^1 \mtx{A}^\tau \mtx{H} \mtx{A}^{\bar{\tau}} \idiff{\tau}
	= \int_0^1 (\mtx{A}^\tau \otimes \mtx{A}^{\bar{\tau}})(\mtx{H}) \idiff{\tau}.
$$
We discover the expression
\begin{equation} \label{eqn:entropy-almost}
[{\op D} \psi(\mtx{A}) ]^{-1} = \int_0^1 \mtx{A}^{\tau} \otimes \mtx{A}^{\bar{\tau}} \idiff{\tau}.
\end{equation}
This formula is correct for every positive-definite matrix.

For each $\tau \in [0,1]$, consider the operator monotone function $f : t \mapsto t^{\tau}$ defined on $\R_+$.  Since $f(1) = 1$, we can construct the operator mean $\op{M}_f$ associated with the function $f$.  Note that $\mtx{A} \otimes \Id$ and $\Id \otimes \mtx{A}$ are commuting positive operators.  Thus,
$$
\op{M}_f( \mtx{A} \otimes \Id, \Id \otimes \mtx{A} )
	= (\mtx{A} \otimes \Id)^{-1} f((\Id \otimes \mtx{A})(\mtx{A} \otimes \Id)^{-1})
	= \mtx{A}^\tau \otimes \mtx{A}^{\bar{\tau}}.
$$
The map $\mtx{A} \mapsto (\mtx{A} \otimes \Id, \Id \otimes \mtx{A})$ is linear, so
Proposition~\ref{prop:mean} guarantees that
$\mtx{A} \mapsto \mtx{A}^\tau \otimes \mtx{A}^{\bar{\tau}}$
is concave for each $\tau \in [0,1]$.  This result is usually called the Lieb Concavity Theorem~\cite[Thm.~IX.6.1]{Bhatia1997}.  Combine this fact with the integral representation~\eqref{eqn:entropy-almost} to reach the conclusion that
$\mtx{A} \mapsto [\op{D} \psi(\mtx{A})]^{-1}$ is a concave map on the cone $\Sym_{++}^d$ of positive-definite matrices.
\end{proof}

\subsection{Power Functions} \label{subsection:power}

In this section, we prove that certain power functions belong to the $\Phi_{\infty}$ function class.

\begin{thm} \label{thm:power}
For each $p \in [0,1]$, the function $\phi : t \mapsto t^{p+1}/(p+1)$ is a member of the $\Phi_{\infty}$ class.
\end{thm}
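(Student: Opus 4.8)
The plan is to mimic the proof of Theorem~\ref{thm:entropy} as closely as possible. Fix $d$ and set $\phi(t) = t^{p+1}/(p+1)$, so that $\psi(t) = \phi'(t) = t^{p}$. The function $\phi$ is continuous and convex on $\R_+$ (since $p+1 \in [1,2]$) with two continuous derivatives on $\R_{++}$, so the only substantive task is to check that $\mtx{A} \mapsto [\op{D}\psi(\mtx{A})]^{-1}$ is operator concave on $\Sym^d_{++}$. Working in a basis where $\mtx{A} = \diag(a_1,\dots,a_d)$, the Dalecki{\u\i}--Kre{\u\i}n formula, Proposition~\ref{prop:derivative}, gives that $\op{D}\psi(\mtx{A})$ acts by Schur multiplication against the matrix with entries $\psi^{[1]}(a_i,a_j)$, and hence its inverse acts by Schur multiplication against the reciprocals $1/\psi^{[1]}(a_i,a_j)$. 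Since $\psi^{[1]}(\lambda,\mu) = (\lambda^p - \mu^p)/(\lambda-\mu)$, the key quantity is $(\lambda-\mu)/(\lambda^p-\mu^p)$.

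The crux is to produce an integral representation of the reciprocal divided difference $1/\psi^{[1]}(\lambda,\mu)$ that, when lifted to operators, expresses $[\op{D}\psi(\mtx{A})]^{-1}$ as a (positively weighted) integral of operators of the form $\mtx{A}^{\alpha(\tau)} \otimes \mtx{A}^{\beta(\tau)}$ with $\alpha(\tau) + \beta(\tau) \le 1$ --- or, more flexibly, as an integral of operator means of commuting pairs built linearly from $\mtx{A}$. For the entropy case the representation came from the Hermite formula for the divided difference of $\econst^t$; here I expect to need a classical integral formula for $(\lambda - \mu)/(\lambda^p - \mu^p)$. One natural route: write $1/\psi^{[1]}$ as $\op{M}_f(\lambda, \mu)$ for a suitable operator monotone $f : \R_{++} \to \R_{++}$ with $f(1) = 1$, applied to the commuting operators $\mtx{A} \otimes \Id$ and $\Id \otimes \mtx{A}$. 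Concretely, one checks that $(\lambda-\mu)/(\lambda^p - \mu^p) = \lambda \cdot g(\mu/\lambda)$ where $g(s) = (1-s)/(1-s^p)$; if $g$ extends to an operator monotone function on $\R_{++}$ with $g(1) = 1/p \cdot$(something)---one renormalizes to get value $1$ at $s=1$---then Proposition~\ref{prop:mean} immediately yields that $\mtx{A} \mapsto \op{M}_g(\mtx{A}\otimes\Id, \Id\otimes\mtx{A}) = [\op{D}\psi(\mtx{A})]^{-1}$ is concave, since the map $\mtx{A} \mapsto (\mtx{A}\otimes\Id, \Id\otimes\mtx{A})$ is linear. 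Failing a clean single-mean description, the fallback is to find a representation
$$
\frac{1}{\psi^{[1]}(\lambda,\mu)} = \int \lambda^{1-s}\mu^{s} \idiff{\nu(s)} \quad\text{or}\quad \int \econst^{-t}\,\frac{(\lambda\econst^{-t} - \mu\econst^{-t})}{\cdots}\,\cdots
$$
for a positive measure $\nu$ supported on $[0,1]$ (this is plausible because $s \mapsto \lambda^{1-s}\mu^s$ spans the ``right'' family), and then apply the Lieb Concavity Theorem $\mtx{A}\mapsto \mtx{A}^{1-s}\otimes \mtx{A}^{s}$ termwise under the integral, exactly as in the proof of Theorem~\ref{thm:entropy}.

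I anticipate the main obstacle is precisely verifying this integral representation of $1/\psi^{[1]}$ with a \emph{nonnegative} density and checking the requisite operator monotonicity of $g$ (equivalently, the correct sign of the measure); the algebra of Schur-multiplier inverses is routine once the scalar integral identity is in hand, and everything downstream---passing concavity through the integral via the semidefinite Jensen inequality \eqref{eqn:op-jensen}, and concluding membership in each $\Phi_d$---copies the entropy argument verbatim. As a sanity check, the endpoint $p=1$ gives $\psi(t)=t$, $\op{D}\psi(\mtx{A}) = \Id\otimes\Id$ after symmetrization (more precisely $\op{D}\psi$ is the identity operator), whose inverse is trivially concave (indeed affine); and $p=0$ gives $\psi(t)=1$ constant, which is the degenerate affine case already handled. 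Finally, since $\Phi_\infty$ is a convex cone containing all affine functions, Theorem~\ref{thm:power} yields Theorem~\ref{thm:membership}(2): for $q \in [1,2]$ write $t^q = (q)\cdot\big(t^q/q\big)$ with $p = q-1 \in [0,1]$, so $t\mapsto t^q$ differs from $q\,\phi(t)$ by nothing, hence lies in $\Phi_\infty$.
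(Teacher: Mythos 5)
Your high-level plan does match the paper's: reduce to concavity of $\mtx{A}\mapsto[\op{D}\psi(\mtx{A})]^{-1}$, compute the Schur-multiplier inverse via Proposition~\ref{prop:derivative}, seek an integral representation of the reciprocal divided difference, and push concavity through the integral; the $p=0,1$ remarks and the cone argument giving Theorem~\ref{thm:membership}(2) are also fine. But the step you yourself flag as the crux --- a usable positive representation of $1/\psi^{[1]}(\lambda,\mu)=(\lambda-\mu)/(\lambda^p-\mu^p)$ --- is left unproved, and both concrete candidates you offer fail or are unverified. The single-operator-mean ansatz is ruled out by homogeneity: every Kubo--Ando mean satisfies $\op{M}_f(c\op{S},c\op{T})=c\,\op{M}_f(\op{S},\op{T})$, whereas $(\lambda-\mu)/(\lambda^p-\mu^p)$ is homogeneous of degree $1-p$; indeed your identity is off, since $\lambda\,g(\mu/\lambda)$ with $g(s)=(1-s)/(1-s^p)$ equals $\lambda^{p}(\lambda-\mu)/(\lambda^p-\mu^p)$, not $(\lambda-\mu)/(\lambda^p-\mu^p)$, and no renormalization fixes a mismatch in the degree of homogeneity. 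The same degree count kills the fallback $\int\lambda^{1-s}\mu^{s}\idiff{\nu(s)}$ (homogeneous of degree $1$): the exponents would have to sum to exactly $1-p$, and the existence of a \emph{nonnegative} measure realizing such a representation is a genuinely nontrivial fact that you neither prove nor cite. So the proposal has a real gap at precisely the technical heart of the theorem.

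The paper closes this gap with Lemma~\ref{lem:rep}: apply the Hermite representation~\eqref{eqn:hermite} not to $\psi$ but to the inverse function $t\mapsto t^{1/p}$ at the points $\lambda^p,\mu^p$, which gives
\[
\frac{\lambda-\mu}{\lambda^p-\mu^p}=\frac{1}{p}\int_0^1(\tau\lambda^p+\bar{\tau}\mu^p)^{(1-p)/p}\idiff{\tau},
\qquad\text{hence}\qquad
[\op{D}\psi(\mtx{A})]^{-1}=\frac{1}{p}\int_0^1(\tau\cdot\mtx{A}^p\otimes\Id+\bar{\tau}\cdot\Id\otimes\mtx{A}^p)^{(1-p)/p}\idiff{\tau}.
\]
Note the integrand is not of the form $\mtx{A}^{\alpha}\otimes\mtx{A}^{\beta}$, so the Lieb Concavity Theorem is not applied termwise as in Theorem~\ref{thm:entropy}; instead concavity of each integrand is obtained in two steps: $\mtx{A}\mapsto(\tau\cdot\mtx{A}^p\otimes\Id+\bar{\tau}\cdot\Id\otimes\mtx{A}^p)^{1/p}$ equals $\op{M}_f(\mtx{A}\otimes\Id,\Id\otimes\mtx{A})$ for $f(a)=(\tau a^p+\bar{\tau})^{1/p}$, which is operator monotone by~\cite[Cor.~4.3]{Ando1979}, so Proposition~\ref{prop:mean} and the linearity of $\mtx{A}\mapsto(\mtx{A}\otimes\Id,\Id\otimes\mtx{A})$ give concavity; then one composes with $t\mapsto t^{1-p}$, which is operator concave and operator monotone. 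If you supply this representation (or an equally rigorous positive one), the remainder of your outline does go through.
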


\noindent
This result immediately implies Theorem~\ref{thm:membership}(2), which states that $t \mapsto t^{p+1}$ belongs to the class $\Phi_{\infty}$.  Indeed, the matrix entropy class contains all positive multiples of its elements.

The proof of Theorem~\ref{thm:power} follows the same path as Theorem~\ref{thm:entropy},  but it is somewhat more involved.  First, we derive an expression for the function $\mtx{A} \mapsto [\op{D}\psi(\mtx{A})]^{-1}$ where $\psi = \phi'$.

\begin{lemma} \label{lem:rep}
Fix $p \in (0,1]$, and let $\psi(t) = t^p$ for $t \geq 0$.  For each matrix $\mtx{A} \in \Sym^d_+$,
\begin{equation} \label{eqn:formula}
[ \op{D}\psi(\mtx{A}) ]^{-1} = \frac{1}{p} \int_0^1
	(\tau \cdot \mtx{A}^p \otimes \Id + \bar{\tau} \cdot \Id \otimes \mtx{A}^p )^{(1-p)/p}
	\idiff{\tau},
\end{equation}
where $\bar{\tau} := 1 - \tau$.
\end{lemma}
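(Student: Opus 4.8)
The plan is to mimic the structure of the proof of Theorem~\ref{thm:entropy}: first compute $\op{D}\psi(\mtx{A})$ via the Dalecki\u\i--Kre\u\i n formula (Proposition~\ref{prop:derivative}), observe that it acts by Schur multiplication and hence is trivially inverted by taking reciprocals of the divided-difference entries, and then identify the reciprocal of the divided difference of $t \mapsto t^p$ with the integral kernel claimed in~\eqref{eqn:formula}. Concretely, after diagonalizing $\mtx{A} = \diag(a_1,\dots,a_d)$, we have $\op{D}\psi(\mtx{A})(\mtx{H}) = [\psi^{[1]}(a_i,a_j)\,h_{ij}]_{ij}$ with $\psi^{[1]}(\lambda,\mu) = (\lambda^p - \mu^p)/(\lambda-\mu)$, so $[\op{D}\psi(\mtx{A})]^{-1}$ is Schur multiplication by the matrix with entries $(\lambda-\mu)/(\lambda^p-\mu^p)$. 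The whole content is therefore the scalar identity
\begin{equation*}
\frac{\lambda - \mu}{\lambda^p - \mu^p} = \frac{1}{p}\int_0^1 \bigl(\tau\,\lambda^p + \bar\tau\,\mu^p\bigr)^{(1-p)/p}\idiff{\tau}
\quad\text{for }\lambda,\mu > 0,
\end{equation*}
which, once established, is lifted back to operators exactly as before: the right-hand side evaluated at $(a_i,a_j)$ is the $(i,j)$ entry of $\frac1p\int_0^1(\tau\,\mtx{A}^p\otimes\Id + \bar\tau\,\Id\otimes\mtx{A}^p)^{(1-p)/p}\,\diff{\tau}$, since $\mtx{A}^p\otimes\Id$ and $\Id\otimes\mtx{A}^p$ are commuting diagonal operators whose eigenvalues on $\mtx{E}_{ij}$ are $a_i^p$ and $a_j^p$, and a continuous functional-calculus function of them acts entrywise.

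To prove the scalar identity, I would substitute $u = \tau\lambda^p + \bar\tau\mu^p$, so that $\diff{u} = (\lambda^p - \mu^p)\,\diff{\tau}$ and the integral becomes $\frac{1}{p(\lambda^p-\mu^p)}\int_{\mu^p}^{\lambda^p} u^{(1-p)/p}\,\diff{u}$. The antiderivative of $u^{(1-p)/p} = u^{1/p - 1}$ is $p\,u^{1/p}$, so the integral evaluates to $\frac{1}{p(\lambda^p-\mu^p)}\cdot p\bigl(\lambda - \mu\bigr) = \frac{\lambda-\mu}{\lambda^p-\mu^p}$, as desired. The case $\lambda = \mu$ is handled by continuity of both sides, or directly: the left side is $\lim_{\mu\to\lambda}(\lambda-\mu)/(\lambda^p-\mu^p) = 1/(p\lambda^{p-1}) = \lambda^{1-p}/p$, while the right side is $\frac1p(\lambda^p)^{(1-p)/p} = \lambda^{1-p}/p$. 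One should also note $p \neq 0$ is used throughout (hence the restriction $p \in (0,1]$), and that for $\mtx{A}$ merely positive semidefinite one restricts attention to $\Sym^d_{++}$ or invokes a limiting argument, consistent with how $\op{D}\psi$ is treated elsewhere.

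I do not expect a serious obstacle here; this lemma is essentially a bookkeeping computation. The one point requiring a little care is the passage from the scalar kernel to the operator formula: specifically, justifying that a standard operator function (here the $(1-p)/p$ power) of the commuting pair $\tau\,\mtx{A}^p\otimes\Id + \bar\tau\,\Id\otimes\mtx{A}^p$ acts on $\mtx{H}$ by Schur multiplication with the matrix $[(\tau a_i^p + \bar\tau a_j^p)^{(1-p)/p}]_{ij}$. This follows because in the diagonalizing basis the operators $\mtx{A}^p\otimes\Id$ and $\Id\otimes\mtx{A}^p$ are simultaneously diagonal in the basis $\{\mtx{E}_{ij}\}$ of $\M^d$, with eigenvalues $a_i^p$ and $a_j^p$ respectively, so any function of their (commuting) sum acts diagonally with the claimed eigenvalues — exactly the tensor-product spectral calculus recorded after Definition~\ref{eqn:tensor}. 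Interchanging the integral with the Schur product is harmless since everything is a finite-dimensional entrywise operation with a continuous, bounded integrand on $[0,1]$.
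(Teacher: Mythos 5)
Your proposal is correct and follows essentially the same route as the paper: diagonalize $\mtx{A}$, invert the Daleckiĭ--Kreĭn Schur-multiplier, identify the reciprocal divided difference with the integral kernel, and lift via the joint spectral calculus of the commuting operators $\mtx{A}^p \otimes \Id$ and $\Id \otimes \mtx{A}^p$. The only cosmetic difference is that you verify the scalar integral identity by an explicit change of variables, whereas the paper cites it as the Hermite representation~\eqref{eqn:hermite} of the first divided difference of $t \mapsto t^{1/p}$ --- the same computation.
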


\begin{proof}
As before, we may assume without loss of generality that the matrix $\mtx{A} = \diag(a_1, \dots, a_d)$.  Using the Dalecki{\u\i}--Kre{\u\i}n formula, Proposition~\ref{prop:derivative}, we see that
$$
[ \op{D}\psi(\mtx{A}) ]^{-1}(\mtx{H}) = \bigg[ \frac{1}{\psi^{[1]}(a_i, a_j)} \cdot h_{ij} \bigg].
$$
The Hermite representation~\eqref{eqn:hermite} of the first divided difference of $t \mapsto t^{1/p}$ gives
$$
\frac{1}{\psi^{[1]}(\mu, \lambda)}
	= \frac{\mu - \lambda}{\mu^p - \lambda^p}
	= \frac{1}{p} \int_0^1 (\tau \cdot \lambda^p + \bar{\tau} \cdot \mu^p )^{(1-p)/p} \idiff{\tau}
	=: g(\lambda, \mu).
$$
We use continuity to verify that the latter calculation remains valid when $\mu = \lambda$.
Using this function $g$, we can identify a compact representation of the operator:
$$
[ \op{D}\psi(\mtx{A}) ]^{-1}(\mtx{H})
	= \sum\nolimits_{ij} g(a_i, a_j) h_{ij} \mathbf{E}_{ij}
	= \bigg[\sum\nolimits_{ij} g(a_i, a_j) (\mathbf{E}_{ii} \otimes \mathbf{E}_{jj}) \bigg](\mtx{H}),
$$
where we write $\mathbf{E}_{ij}$ for the matrix with a one in the $(i, j)$ position and zeros elsewhere.  It remains to verify that the bracket coincides with the expression~\eqref{eqn:formula}.  Indeed,
\begin{align*}
\sum\nolimits_{ij} g(a_i, a_j) (\mathbf{E}_{ii} \otimes \mathbf{E}_{jj})
	&= \frac{1}{p} \int_0^1 \sum\nolimits_{ij}
	( \tau \cdot a_i^p + \bar{\tau} \cdot a_j^p )^{(1-p)/p} \, 
	(\mathbf{E}_{ii} \otimes \mathbf{E}_{jj}) \idiff{\tau} \\
	&= \frac{1}{p} \int_0^1 \bigg[ \sum\nolimits_{ij}
	(\tau \cdot a_i^p + \bar{\tau} \cdot a_j^p) (\mathbf{E}_{ii} \otimes \mathbf{E}_{jj})
	\bigg]^{(1-p)/p} \idiff{\tau} \\
	&= \frac{1}{p} \int_0^1 ( \tau \cdot \mtx{A}^p \otimes \Id +
	\bar{\tau} \cdot \Id \otimes \mtx{A}^p )^{(1-p)/p} \idiff{\tau}.
\end{align*}
The second relation follows from the definition of the standard operator function associated with $t \mapsto t^{(1-p)/p}$.  To confirm that the third line equals the second, expand the matrices $\mtx{A} = \sum_i a_i \mathbf{E}_{ii}$ and $\Id = \sum_j \mathbf{E}_{jj}$ and invoke the bilinearity of the tensor product.
\end{proof}

\begin{proof}[Proof of Theorem~\ref{thm:power}]
We are now prepared to prove that certain power functions belong to the $\Phi_{\infty}$ function class.  Fix an exponent $p \in [0,1]$, and let $d$ be a fixed positive integer.  We intend to show that the function $\phi(t) = t^{p+1}/(p+1)$ belongs to the $\Phi_d$ class.  When $p = 0$, the function $\phi$ is affine, so we may assume that $p > 0$.  It is clear that $\phi$ is continuous and convex on $\R_+$, and $\phi$ has two continuous derivatives on $\R_{++}$.  It remains to verify that the second derivative has the required concavity property.

Let $\psi(t) = \phi'(t) = t^p$ for $t \geq 0$, and consider a matrix $\mtx{A} \in \Sym^d_{++}$.  Lemma~\ref{lem:rep} demonstrates that
\begin{equation} \label{eqn:rep}
[ \op{D} \psi(\mtx{A}) ]^{-1}
	= \frac{1}{p} \int_0^1 ( \tau \cdot \mtx{A}^p \otimes \Id +
	\bar{\tau} \cdot \Id \otimes \mtx{A}^p )^{(1/p)(1-p)} \idiff{\tau},
\end{equation}
where we maintain the usage $\bar{\tau} := 1 - \tau$.  For each $\tau \in [0,1]$, the scalar function $a \mapsto \tau a + \bar{\tau}$ is operator monotone because it is affine and increasing.  On account of the result~\cite[Cor.~4.3]{Ando1979}, the function
$$
f : a \mapsto (\tau \cdot a^p + \bar{\tau})^{1/p}
$$
is also operator monotone.  A short calculation shows that $f(1) = 1$.  Therefore, we can use $f$ to construct an operator mean $\op{M}_f$.  Since $\mtx{A} \otimes \Id$ and $\Id\otimes \mtx{A}$ are commuting positive operators, we have
$$
\op{M}_f(\mtx{A} \otimes \Id, \Id \otimes \mtx{A}) = (\mtx{A} \otimes \Id)^{-1}
	f((\Id \otimes \mtx{A})(\mtx{A} \otimes \Id)^{-1} )
	= (\tau \cdot \mtx{A}^p \otimes \Id
	+ \bar{\tau} \cdot \Id \otimes \mtx{A}^p \big)^{1/p}.
$$
The map $\mtx{A} \mapsto (\mtx{A} \otimes \Id, \Id \otimes \mtx{A})$ is linear, so Proposition~\ref{prop:mean} ensures that
\begin{equation} \label{eqn:concave-almost}
\mtx{A} \mapsto (\tau \cdot \mtx{A}^p \otimes \Id
	+ \bar{\tau} \cdot \Id \otimes \mtx{A}^p )^{1/p}
\end{equation}
is a concave map.  

We are now prepared to check that~\eqref{eqn:rep} defines a concave operator.  Let $\mtx{S}, \mtx{T}$ be arbitrary positive-definite matrices, and choose $\alpha \in [0,1]$.  Suppose that $\mtx{Z}$ is the random matrix that takes value $\mtx{S}$ with probability $\alpha$ and value $\mtx{T}$ with probability $1 - \alpha$.  For each $\tau \in [0,1]$, we compute
\begin{align*}
\Expect \big[ (\tau \cdot \mtx{Z}^p \otimes \Id
	+ \bar{\tau} \cdot \Id \otimes \mtx{Z}^p )^{1/p} \big]^{1-p}
	&\psdle \big[ \Expect{} (\tau \cdot \mtx{Z}^p \otimes \Id
	+ \bar{\tau} \cdot \Id \otimes \mtx{Z}^p )^{1/p} \big]^{1-p} \\
	&\psdle \big[ \big(\tau \cdot (\Expect \mtx{Z})^p \otimes \Id
	+ \bar{\tau} \cdot \Id \otimes (\Expect \mtx{Z})^p \big)^{1/p} \big]^{1-p}.
\end{align*}
The first relation holds because $t \mapsto t^{1-p}$ is operator concave~\cite[Thm.~V.1.9 and Thm.~V.2.5]{Bhatia1997}.  To obtain the second relation, we apply the concavity property of the map~\eqref{eqn:concave-almost}, followed by the fact that $t \mapsto t^{1-p}$ is operator monotone~\cite[Thm.~V.1.9]{Bhatia1997}.  This calculation establishes the claim that
$$
\mtx{A} \mapsto \big(\tau \cdot \mtx{A}^p \otimes \Id
	+ \bar{\tau} \cdot \Id \otimes \mtx{A}^p \big)^{(1-p)/p}
$$
is concave on $\Sym_{++}^d$ for each $\tau \in [0,1]$.  In view of the integral representation~\eqref{eqn:rep}, we may conclude that $\mtx{A} \mapsto [\op{D} \psi(\mtx{A})]^{-1}$ is concave on the cone $\Sym^d_{++}$ of positive-definite matrices.
\end{proof}

\section{A Bounded Difference Inequality for Random Matrices} \label{section:concentration}

In this section, we prove Theorem~\ref{thm:tail-bound}, a bounded difference inequality for a random matrix whose distribution is invariant under signed permutation.  We begin with some preliminaries that support the proof, and we establish the main result in Section~\ref{subsection:proof-concentration}.

\subsection{Preliminaries} \label{subsection:rotation}

First, we describe how to compute the expectation of a function of a random matrix whose distribution is invariant under signed permutation.  See Definition~\ref{def:signed-permutation} for a reminder of what this requirement means.

\begin{lemma} \label{lem:invariant}
Let $f : I \to \R$ be a function on an interval $I$ of the real line.  Assume that $\mtx{X} \in \Sym^d(I)$ is a random matrix whose distribution is invariant under signed permutation.  Then
$$
\Expect f(\mtx{X}) = \ntr[ \Expect f(\mtx{X}) ] \cdot \Id.
$$
\end{lemma}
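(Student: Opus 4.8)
The plan is to exploit the covariance of a standard matrix function under unitary conjugation, together with the elementary fact that the only matrices commuting, in the conjugation sense, with every signed permutation are the scalar multiples of the identity. Since $\Expect f(\mtx{X})$ appears in the statement, integrability of $f(\mtx{X})$ is assumed, so the matrix $\mtx{M} := \Expect f(\mtx{X})$ is well defined throughout.

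First I would record that every signed permutation $\mtx{\Pi} \in \M^d$ is unitary, and that for any standard matrix function $f$ one has $f(\mtx{\Pi}^{*} \mtx{X} \mtx{\Pi}) = \mtx{\Pi}^{*} f(\mtx{X}) \mtx{\Pi}$. This follows directly from Definition~\ref{defn:matrix-function}: conjugation by $\mtx{\Pi}$ carries the spectral projectors $\mtx{P}_i$ of $\mtx{X}$ to the spectral projectors $\mtx{\Pi}^{*}\mtx{P}_i\mtx{\Pi}$ of $\mtx{\Pi}^{*}\mtx{X}\mtx{\Pi}$, while the eigenvalues are unchanged. Invoking the invariance hypothesis $\mtx{X} \sim \mtx{\Pi}^{*}\mtx{X}\mtx{\Pi}$ and taking expectations, and using that $\mtx{A} \mapsto \mtx{\Pi}^{*}\mtx{A}\mtx{\Pi}$ is linear and hence commutes with $\Expect$, I obtain $\mtx{M} = \mtx{\Pi}^{*} \mtx{M} \mtx{\Pi}$ for every signed permutation $\mtx{\Pi}$.

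Next I would extract the conclusion in two steps. Taking $\mtx{\Pi}$ to be the diagonal matrix with a single entry $-1$ in position $k$ and $+1$ elsewhere, the relation $\mtx{M} = \mtx{\Pi}^{*}\mtx{M}\mtx{\Pi}$ forces $m_{ij} = -m_{ij}$, hence $m_{ij} = 0$, whenever exactly one of the indices $i,j$ equals $k$; letting $k$ range over $1, \dots, d$ shows that $\mtx{M}$ is diagonal. Then taking $\mtx{\Pi}$ to be an ordinary permutation matrix, conjugation permutes the diagonal entries of $\mtx{M}$, so invariance under all permutations forces all diagonal entries to coincide, whence $\mtx{M} = c \cdot \Id$ for some scalar $c$. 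Applying the normalized trace gives $c = \ntr \mtx{M} = \ntr[\Expect f(\mtx{X})]$, which is the asserted identity.

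I do not expect a serious obstacle. The only points that deserve a sentence of justification are the covariance identity $f(\mtx{\Pi}^{*}\mtx{X}\mtx{\Pi}) = \mtx{\Pi}^{*}f(\mtx{X})\mtx{\Pi}$ for standard matrix functions and the routine linear-algebra computation that the commutant of the group generated by the diagonal sign matrices and the permutation matrices is exactly $\C \cdot \Id$; no measure-theoretic subtleties arise beyond the implicit integrability of $f(\mtx{X})$.
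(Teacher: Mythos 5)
Your proposal is correct and follows essentially the same route as the paper: both derive the key identity $\Expect f(\mtx{X}) = \mtx{\Pi}^{*}[\Expect f(\mtx{X})]\mtx{\Pi}$ for every signed permutation $\mtx{\Pi}$ from the distributional invariance and the unitary covariance of standard matrix functions. The only difference is cosmetic — the paper averages over the uniform distribution on signed permutations to see that the matrix is scalar, while you extract the same conclusion by testing specific sign matrices and permutation matrices.
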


\begin{proof}
Let $\mtx{\Pi} \in \Sym^d$ be an arbitrary signed permutation matrix.  Observe that
\begin{equation} \label{eqn:sign-perm}
\Expect f(\mtx{X}) = \Expect f( \mtx{\Pi}^\adj \mtx{X} \mtx{\Pi} )
	= \mtx{\Pi}^\adj [\Expect f(\mtx{X})] \mtx{\Pi}.
\end{equation}
The first relation holds because the distribution of $\mtx{X}$ is invariant under conjugation by $\mtx{\Pi}$.  The second relation follows from the definition of a standard matrix function and the fact that $\mtx{\Pi}$ is unitary.  We may average~\eqref{eqn:sign-perm} over $\mtx{\Pi}$ drawn from the uniform distribution on the set of signed permutation matrices.  A direct calculation shows that the resulting matrix is diagonal, and its diagonal entries are identically equal to $\ntr[ \Expect f(\mtx{X}) ]$.
\end{proof}

We also require a trace inequality that is related to the mean value theorem.  This result specializes~\cite[Lem.~3.4]{Mackey2012}.

\begin{prop}[Mean Value Trace Inequality] \label{prop:mvt1}
Let $f : I \to \R$ be a function on an interval $I$ of the real line whose derivative $f'$ is convex.  For all $\mtx{A}, \mtx{B} \in \Sym^d(I)$,
$$
\ntr[ (\mtx{A} - \mtx{B}) (f(\mtx{A}) - f(\mtx{B})) ]
	\leq \frac{1}{2} \ntr [ (\mtx{A} - \mtx{B})^2 \cdot (f'(\mtx{A}) + f'(\mtx{B})) ].
$$
\end{prop}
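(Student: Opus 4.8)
The plan is to reduce the trace inequality to a pointwise scalar inequality by diagonalizing the two matrices simultaneously in the appropriate sense. First I would write the spectral resolutions $\mtx{A} = \sum_i \lambda_i \mtx{P}_i$ and $\mtx{B} = \sum_j \mu_j \mtx{Q}_j$, and expand both traces by inserting $\sum_i \mtx{P}_i = \Id$ and $\sum_j \mtx{Q}_j = \Id$. Setting $c_{ij} := \ntr[\mtx{P}_i \mtx{Q}_j] = \frac{1}{d}\trace(\mtx{P}_i \mtx{Q}_j) \geq 0$, one checks that $\sum_i c_{ij} = \ntr \mtx{Q}_j$ and $\sum_j c_{ij} = \ntr \mtx{P}_i$, and moreover that $\ntr[(\mtx{A}-\mtx{B})^2 g(\mtx{A})]$ and similar mixed traces all expand as $\sum_{i,j} c_{ij} \cdot (\text{scalar expression in } \lambda_i, \mu_j)$. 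Concretely, the left-hand side becomes $\sum_{i,j} c_{ij}(\lambda_i - \mu_j)(f(\lambda_i) - f(\mu_j))$ and the right-hand side becomes $\frac{1}{2}\sum_{i,j} c_{ij}(\lambda_i - \mu_j)^2(f'(\lambda_i) + f'(\mu_j))$. Because the weights $c_{ij}$ are nonnegative, it suffices to prove the termwise inequality
$$
(\lambda - \mu)(f(\lambda) - f(\mu)) \leq \tfrac{1}{2}(\lambda - \mu)^2 (f'(\lambda) + f'(\mu))
\quad\text{for all $\lambda, \mu \in I$.}
$$

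The remaining step is this purely scalar claim, which is where the convexity of $f'$ enters. By symmetry in $\lambda, \mu$ we may assume $\lambda \geq \mu$; then, dividing by $\lambda - \mu$ (the case $\lambda = \mu$ being trivial), the claim reads $f(\lambda) - f(\mu) \leq \tfrac{1}{2}(\lambda - \mu)(f'(\lambda) + f'(\mu))$. Writing $f(\lambda) - f(\mu) = \int_0^1 f'(\tau \lambda + \bar\tau \mu)\,(\lambda - \mu) \idiff{\tau}$ via the fundamental theorem of calculus (with $\bar\tau := 1 - \tau$), the inequality becomes
$$
\int_0^1 f'(\tau \lambda + \bar\tau \mu) \idiff{\tau} \leq \tfrac{1}{2}\big(f'(\lambda) + f'(\mu)\big),
$$
which is exactly the trapezoidal bound for the integral of the convex function $\tau \mapsto f'(\tau \lambda + \bar\tau \mu)$ over $[0,1]$: the average of a convex function is at most the average of its endpoint values, i.e. $\int_0^1 h(\tau)\idiff\tau \leq \tfrac{1}{2}(h(0) + h(1))$ for convex $h$. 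That disposes of the scalar inequality.

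The main obstacle is bookkeeping rather than anything deep: one must verify carefully that \emph{every} trace appearing on both sides expands against the \emph{same} family of nonnegative weights $c_{ij}$ — in particular that $\ntr[(\mtx{A}-\mtx{B})^2 f'(\mtx{A})]$ expands as $\sum_{ij} c_{ij}(\lambda_i - \mu_j)^2 f'(\lambda_i)$, which requires expanding $(\mtx{A}-\mtx{B})^2 = \mtx{A}^2 - \mtx{A}\mtx{B} - \mtx{B}\mtx{A} + \mtx{B}^2$ and tracking each cross term through the projector expansion and the cyclicity of the trace. Once that is in place, the reduction to the weighted sum of scalar inequalities is automatic, and the scalar inequality is the trapezoidal estimate above. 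Alternatively, one can cite~\cite[Lem.~3.4]{Mackey2012} directly, since the statement is said to be a specialization of that result; but the self-contained argument via divided-difference/Schur-product expansion (in the spirit of Proposition~\ref{prop:derivative}) is short enough to include.
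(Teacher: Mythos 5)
Your proof is correct, and it is self-contained, whereas the paper offers no argument at all for Proposition~\ref{prop:mvt1}: it simply cites~\cite[Lem.~3.4]{Mackey2012}, whose proof follows the same two-step scheme you describe (reduce to a scalar inequality, then lift it to traces). Your scalar step is fine: since $f'$ exists and is convex (hence continuous), the fundamental theorem of calculus and the trapezoidal bound for the convex function $\tau \mapsto f'(\tau\lambda + \bar\tau\mu)$ give exactly $(\lambda-\mu)(f(\lambda)-f(\mu)) \leq \tfrac{1}{2}(\lambda-\mu)^2(f'(\lambda)+f'(\mu))$, and the weights $c_{ij} = \ntr[\mtx{P}_i\mtx{Q}_j]$ are indeed nonnegative because $\trace(\mtx{P}_i\mtx{Q}_j) = \fnorm{\mtx{Q}_j\mtx{P}_i}^2 \geq 0$. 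The only remark worth making is that the ``bookkeeping'' you identify as the main obstacle is already packaged in the paper itself: the generalized Klein inequality, Proposition~\ref{prop:gen-klein} in Appendix~\ref{app:general}, states precisely that a scalar inequality of the form $\sum_k f_k(a)\,g_k(b) \geq 0$ lifts to $\sum_k \ntr[f_k(\mtx{A})\,g_k(\mtx{B})] \geq 0$; expanding the difference of the two sides of your scalar inequality into products of a function of $\lambda$ with a function of $\mu$ puts it in exactly that form, so you could replace the explicit double spectral expansion and cross-term tracking by a one-line appeal to that proposition (your expansion with the weights $c_{ij}$ is, in effect, a proof of that proposition in the special case at hand). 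Either way, the argument you give buys a self-contained proof at essentially no extra cost compared with the paper's citation.
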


\subsection{Proof of Theorem \ref{thm:tail-bound}} \label{subsection:proof-concentration}

The argument proceeds in three steps.  First, we present some elements of the matrix Laplace transform method.  Second, we use the subaddivity of matrix $\phi$-entropy to deduce a differential inequality for the trace moment generating function of the random matrix.  Finally, we explain how to integrate the differential inequality to obtain the concentration result.

\subsubsection{The Matrix Laplace Transform Method} \label{subsubsection:matrix-laplace}

We begin with a matrix extension of the moment generating function (mgf), which has played a major role in recent work on matrix concentration.

\begin{defn}[Trace Mgf]
Let $\mtx{Y}$ be a random Hermitian matrix.  The \term{normalized trace moment generating function} of $\mtx{Y}$ is defined as
$$
m(\theta) := m_{\mtx{Y}}(\theta) := \Expect \ntr \econst^{\theta \mtx{Y}}
\quad\text{for $\theta \in \R$.}
$$
The expectation need not exist for all values of $\theta$.
\end{defn}

The following proposition explains how the trace mgf can be used to study the maximum eigenvalue of a random Hermitian matrix~\cite[Prop.~3.1]{Tropp2011}.

\begin{prop}[Matrix Laplace Transform Method] \label{prop:laplace}
Let $\mtx{Y} \in \Sym^d$ be a random matrix with normalized trace mgf $m(\theta) := \ntr \econst^{\theta\mtx{Y}}$.  For each $t \in \R$,
$$
\Prob{\lambda_{\max}(\mtx{Y}) \geq t} \leq d \cdot \inf_{\theta > 0}
	\econst^{-\theta t + \log m(\theta)} \label{eqn:tail1}.
$$
\end{prop}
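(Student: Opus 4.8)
The plan is to reproduce the classical Chernoff (exponential Markov) argument, with the scalar moment generating function replaced by the normalized trace of the matrix exponential. Fix an arbitrary parameter $\theta > 0$, which we will optimize over at the very end. If $m(\theta) = +\infty$ the bound for that $\theta$ is vacuous, so we may as well assume $m(\theta) < \infty$.

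The first step is to transfer the tail event onto the positive-definite matrix $\econst^{\theta \mtx{Y}}$. Since the scalar map $s \mapsto \econst^{\theta s}$ is monotone increasing, the spectral mapping property of standard matrix functions gives $\lambda_{\max}(\econst^{\theta \mtx{Y}}) = \econst^{\theta \lambda_{\max}(\mtx{Y})}$, so the events $\{\lambda_{\max}(\mtx{Y}) \geq t\}$ and $\{\lambda_{\max}(\econst^{\theta \mtx{Y}}) \geq \econst^{\theta t}\}$ coincide. The second step is the elementary bound that the largest eigenvalue of a positive-semidefinite matrix is dominated by the sum of its (nonnegative) eigenvalues: pointwise in each realization, $\lambda_{\max}(\econst^{\theta \mtx{Y}}) \leq \trace \econst^{\theta \mtx{Y}} = d \cdot \ntr \econst^{\theta \mtx{Y}}$. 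Combining these two observations,
$$
\Prob{\lambda_{\max}(\mtx{Y}) \geq t} \leq \Prob{ d \cdot \ntr \econst^{\theta \mtx{Y}} \geq \econst^{\theta t} }.
$$
The third step is Markov's inequality applied to the nonnegative random variable $\ntr \econst^{\theta \mtx{Y}}$, which yields
$$
\Prob{\lambda_{\max}(\mtx{Y}) \geq t} \leq d \cdot \econst^{-\theta t} \, \Expect \ntr \econst^{\theta \mtx{Y}} = d \cdot \econst^{-\theta t + \log m(\theta)}.
$$
Since $\theta > 0$ was arbitrary, taking the infimum over $\theta > 0$ delivers the stated inequality.

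I do not expect a genuine obstacle here: the argument is a direct transcription of the scalar Chernoff bound, and the only matrix-specific ingredients — monotone spectral mapping for the exponential and the estimate $\lambda_{\max}(\mtx{P}) \leq \trace \mtx{P}$ for positive-semidefinite $\mtx{P}$ — are routine. The only points warranting a sentence of care are the bookkeeping around the possibility that $m(\theta)$ is infinite and the justification that the event reduction preserves the correct direction of inequality; neither is deep. (This is, in effect, the standard proof of \cite[Prop.~3.1]{Tropp2011}, recorded here for completeness.)
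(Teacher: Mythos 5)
Your argument is correct: the reduction $\lambda_{\max}(\mtx{Y}) \geq t \iff \lambda_{\max}(\econst^{\theta\mtx{Y}}) \geq \econst^{\theta t}$ via monotone spectral mapping, the bound $\lambda_{\max} \leq \trace = d \cdot \ntr$ for the positive-definite exponential, Markov's inequality, and optimization over $\theta > 0$ is exactly the standard proof of this proposition. The paper itself offers no proof and simply cites~\cite[Prop.~3.1]{Tropp2011}, where the argument you give is the one recorded, so there is nothing to compare beyond noting that your write-up (including the remark that the bound is vacuous when $m(\theta) = \infty$) is complete and accurate.
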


\subsubsection{A Differential Inequality for the Trace Mgf} \label{subsubsection:diff-ineq}

Suppose that $\mtx{Y} \in \Sym^d$ is a random Hermitian matrix that depends on a random vector $\vct{x} := (X_1, \dots, X_n)$.  We require the distribution of $\mtx{Y}$ to be invariant under signed permutations, and we insist that $\norm{\mtx{Y}}$ is bounded.  Without loss of generality, assume that $\mtx{Y}$ has zero mean.  Throughout the argument, we let the notation of Section~\ref{subsection:subadditivity} and Theorem~\ref{thm:tail-bound} prevail.

Let us explain how to use the subadditivity of matrix $\phi$-entropy to derive a differential inequality for the trace mgf.  Consider the function $\phi(t) = t \log t$, which belongs to the $\Phi_{\infty}$ class because of Theorem~\ref{thm:membership}(1).  Introduce the random positive-definite matrix $\mtx{Z} := \econst^{\theta \mtx{Y}}$, where $\theta > 0$.  We write out an expression for the matrix $\phi$-entropy of $\mtx{Z}$:
\begin{align}
H_{\phi}(\mtx{Z}) &= \Expect \ntr[ \phi(\mtx{Z}) - \phi(\Expect \mtx{Z})] \notag \\
	&= \Expect \ntr\big [ (\theta \mtx{Y}) \econst^{\theta \mtx{Y}} -
	\econst^{\theta \mtx{Y}} \log \Expect \econst^{\theta \mtx{Y}} \big] \notag \\
	&= \theta \cdot \Expect \ntr \big[ \mtx{Y} \econst^{\theta \mtx{Y}} \big]
	- (\Expect \ntr \econst^{\theta \mtx{Y}}) \log( \Expect \ntr \econst^{\theta \mtx{Y}})
	\notag \\
	&= \theta m'(\theta) - m(\theta) \log m(\theta). \label{eqn:left}
\end{align}
In the third line, we have applied Lemma~\ref{lem:invariant} to the logarithm in the second term, relying on the fact that $\mtx{Y}$ is invariant under signed permutations.  To reach the last line, we recognize that $m'(\theta) = \Expect \ntr( \mtx{Y}\econst^{\theta \mtx{Y}} )$.  We have used the boundedness of $\norm{\mtx{Y}}$ to justify this derivative calculation.

Corollary~\ref{cor:subadditivity-intro} provides an upper bound for the matrix $\phi$-entropy.
Define the derivative $\psi(t) = \phi'(t) = 1 + \log t$.  Then
\begin{align*}
H_{\phi}(\mtx{Z}) &\leq \frac{1}{2} \sum\nolimits_{i=1}^n
	\Expect \ntr \big[ (\mtx{Z} - \mtx{Z}_i') (\psi(\mtx{Z}) - \psi( \mtx{Z}_i')\big] \\
	&= \frac{\theta}{2} \sum\nolimits_{i=1}^n
	\Expect \ntr \big[ (\econst^{\theta \mtx{Y}} - \econst^{\theta \mtx{Y}_i'})
	(\mtx{Y} - \mtx{Y}_i') \big].
\end{align*}
Consider the function $f : t \mapsto \econst^{\theta t}$.  Its derivative $f' : t \mapsto \theta \econst^{\theta t}$ is convex because $\theta > 0$, so Proposition~\ref{prop:mvt1} delivers the bound
\begin{align*}
H_{\phi}(\mtx{Z})
&\leq \frac{\theta^2}{4} \sum\nolimits_{i=1}^n
	\Expect \ntr \big[ (\econst^{\theta \mtx{Y}} + \econst^{\theta \mtx{Y}_i'})
	(\mtx{Y} - \mtx{Y}_i')^2 \big] \\
&= \frac{\theta^2}{2} \sum\nolimits_{i=1}^n
	\Expect \ntr \big[ \econst^{\theta \mtx{Y}}
	(\mtx{Y} - \mtx{Y}_i')^2 \big] \\
&= \frac{\theta^2}{2} \sum\nolimits_{i=1}^n
	\Expect \ntr \big[ \econst^{\theta \mtx{Y}} \cdot
	\Expect[ (\mtx{Y} - \mtx{Y}_i')^2 \condl \vct{x} ] \big].
\end{align*}
The second relation follows from the fact that $\mtx{Y}$ and $\mtx{Y}_i'$ are exchangeable, conditional on $\vct{x}_{-i}$.  The last line is just the tower property of conditional expectation, combined with the observation that $\mtx{Y}$ is a function of $\vct{x}$.  To continue, we simplify the expression and make some additional bounds.
\begin{align}
H_{\phi}(\mtx{Z})
&\leq \frac{\theta^2}{2} 
	\Expect \ntr \bigg[ \econst^{\theta \mtx{Y}} \cdot
	\sum\nolimits_{i=1}^n \Expect[ (\mtx{Y} - \mtx{Y}_i')^2 \condl \vct{x} ] \bigg] \notag \\
&\leq \frac{\theta^2}{2} (\Expect \ntr \econst^{\theta \mtx{Y}})
	\norm{ \sum\nolimits_{i=1}^n \Expect[ (\mtx{Y} - \mtx{Y}_i')^2 \condl \vct{x} ] } \notag \\
&\leq \frac{\theta^2 V_{\mtx{Y}}}{2} \cdot m(\theta). \label{eqn:right}
\end{align}
The second relation follows from a standard trace inequality and the observation that $\econst^{\theta \mtx{Y}}$ is positive definite.  Last, we identify the variance measure $V_{\vct{Y}}$ defined in~\eqref{eqn:scalar-variance} and the trace mgf $m(\theta)$.

Combine the expression~\eqref{eqn:left} with the inequality~\eqref{eqn:right} to arrive at the estimate
\begin{equation} \label{eqn:diff-ineq}
\theta m'(\theta) - m(\theta) \log m(\theta)
	\leq \frac{\theta^2 V_{\mtx{Y}}}{2} \cdot m(\theta)
	\quad\text{for $\theta > 0$.}
\end{equation}
We can use this differential inequality to obtain bounds on the trace mgf $m(\theta)$.

\subsubsection{Solving the Differential Inequality} \label{subsubsection:integrate}

Rearrange the differential inequality~\eqref{eqn:diff-ineq} to obtain
\begin{equation} \label{eqn:exact-int} %
\frac{\diff{}}{\diff{\theta}} \bigg[\frac{\log m(\theta)}{\theta} \bigg]
= \frac{m'(\theta)}{\theta m(\theta)} - \frac{\log m(\theta)}{\theta^2}
	\leq \frac{V_{\mtx{Y}}}{2}.
\end{equation}
The l'H{\^ o}pital rule allows us to calculate the value of $\theta^{-1} \log m(\theta)$ at zero.  Since $m(0) = 1$,
$$
\lim_{\theta \to 0} \frac{\log m(\theta)}{\theta}
	= \lim_{\theta \to 0} \frac{m'(\theta)}{m(\theta)}
	= \lim_{\theta \to 0} \frac{ \Expect \ntr( \mtx{Y} \econst^{\theta \mtx{Y}} ) }
	{\Expect \ntr \econst^{\theta \mtx{Y}}}
	= \Expect \ntr \mtx{Y} = 0.
$$
This is where we use the hypothesis that $\mtx{Y}$ has mean zero.  Now,
we integrate~\eqref{eqn:exact-int} from zero to some positive value $\theta$ to find that the trace mgf satisfies
\begin{equation} \label{eqn:mgf-bd}
\frac{\log m(\theta)}{\theta} \leq \frac{\theta V_{\mtx{Y}}}{2}
\quad\text{when $\theta > 0$.}
\end{equation}
The approach in this section is usually referred to as the Herbst argument~\cite{Ledoux1997}.

\subsubsection{The Laplace Transform Argument}

We are now prepared to finish the argument.  Combine the matrix Laplace transform method, Proposition~\ref{prop:laplace}, with the trace mgf bound~\eqref{eqn:mgf-bd} to reach
\begin{equation} \label{eqn:upper-tail}
\Prob{ \lambda_{\max}(\mtx{Y}) \geq t }	
	\leq d \cdot \inf_{\theta > 0} \econst^{-\theta t + \log m(\theta)}
	\leq d \cdot \inf_{\theta > 0}  \econst^{-\theta t + \theta^2 V_{\mtx{Y}} / 2}
	= d \cdot \econst^{-t^2/(2 V_{\mtx{Y}})}.
\end{equation}
To obtain the result for the minimum eigenvalue, we note that
$$
\Prob{ \lambda_{\min}(\mtx{Y}) \leq -t }
	= \Prob{ \lambda_{\max}(-\mtx{Y}) \geq t }
	\leq d \cdot \econst^{-t^2/(2 V_{\mtx{Y}})}.
$$
The inequality follows when we apply~\eqref{eqn:upper-tail} to the random matrix $-\mtx{Y}$.  This completes the proof of Theorem~\ref{thm:tail-bound}.

\section{Moment Inequalities for Random Matrices with Bounded Differences} \label{section:moment}

In this section, we prove Theorem~\ref{thm:moment-bound-intro}, which gives information about the moments of a random matrix that satisfies a kind of self-bounding property.

\begin{proof}[Proof of Theorem~\ref{thm:moment-bound-intro}]
Fix a number $q \in \{2, 3, 4, \dots\}$.  Suppose that $\mtx{Y} \in \Sym^d_+$ is a random positive-semidefinite matrix that depends on a random vector $\vct{x} := (X_1, \dots, X_n)$.  We require the distribution of $\mtx{Y}$ to be invariant under signed permutations, and we assume that $\Expect( \norm{\mtx{Y}}^q ) < \infty$.  The notation of Section~\ref{subsection:subadditivity} and Theorem~\ref{thm:moment-bound-intro} remains in force.

Let us explain how the subadditivity of matrix $\phi$-entropy leads to a bound on the $q$th trace moment of $\mtx{Y}$.  Consider the power function $\phi(t) = t^{q/(q-1)}$.  Theorem~\ref{thm:power} ensures that $\phi \in \Phi_{\infty}$ because $q/(q-1) \in (1, 2]$.  Introduce the random positive-semidefinite matrix $\mtx{Z} := \mtx{Y}^{q-1}$.  Then
\begin{align}
H_{\phi}(\mtx{Z}) &= \Expect \ntr\big[ \phi(\mtx{Z}) - \phi(\Expect \mtx{Z}) \big] \notag \\
	&= \Expect \ntr( \mtx{Y}^q ) - \ntr\big[(\Expect(\mtx{Y}^{q-1}))^{q/(q-1)} \big] \notag \\
	&= \Expect \ntr( \mtx{Y}^q )  - \big[\Expect \ntr( \mtx{Y}^{q-1}) \big]^{q/(q-1)}.
	\label{eqn:moment-left}
\end{align}
The transition to the last line requires Lemma~\ref{lem:invariant}.

Corollary~\ref{cor:subadditivity-intro} provides an upper bound for the matrix $\phi$-entropy.  Define the derivative $\psi(t) = \phi'(t) = (q/(q-1)) \cdot t^{1/(q-1)}$.  We have
\begin{align*}
H_{\phi}(\mtx{Z}) &\leq \frac{1}{2} \sum\nolimits_{i=1}^n
	\Expect \ntr\big[ (\mtx{Z} - \mtx{Z}_i')( \psi(\mtx{Z}) - \psi(\mtx{Z}_i') ) \big] \\
	&= \frac{q}{2(q-1)} \sum\nolimits_{i=1}^n
	\Expect \ntr\big[ \big(\mtx{Y}^{q-1} - (\mtx{Y}_i')^{q-1} \big)
	( \mtx{Y} - \mtx{Y}_i' ) \big]
\end{align*}
The function $f : t \mapsto t^{q-1}$ has the derivative $f' : t \mapsto (q-1) t^{q-2}$, which is convex because $q \in\{ 2, 3, 4, \dots \}$.  Therefore, the mean value trace inequality, Proposition~\ref{prop:mvt1}, delivers the bound
\begin{align*}
H_{\phi}(\mtx{Z}) &\leq \frac{q}{4} \sum\nolimits_{i=1}^n
	\Expect \ntr\big[ \big(\mtx{Y}^{q-2} + (\mtx{Y}_i')^{q-2} \big)
	( \mtx{Y} - \mtx{Y}_i' )^2 \big] \\
	&= \frac{q}{2} \sum\nolimits_{i=1}^n
	\Expect \ntr\big[ \mtx{Y}^{q-2} ( \mtx{Y} - \mtx{Y}_i' )^2 \big] \\
	&= \frac{q}{2} \sum\nolimits_{i=1}^n
	\Expect \ntr\big[ \mtx{Y}^{q-2} \Expect[(\mtx{Y}-\mtx{Y}_i')^2 \condl \vct{x} ] \big].
\end{align*}
The second identity holds because $\mtx{Y}$ and $\mtx{Y}_{i}'$ are exchangeable, conditional on $\vct{x}_{-i}$.  The last line follows from the tower property of conditional expectation.  We simplify this expression as follows.
\begin{align}
H_{\phi}(\mtx{Z}) &\leq
	\frac{q}{2} \Expect \ntr\bigg[ \mtx{Y}^{q-2} \cdot
	\sum\nolimits_{i=1}^n \Expect[(\mtx{Y}-\mtx{Y}_i')^2 \condl \vct{x} ] \bigg] \notag \\
	&\leq \frac{q}{2} \Expect \ntr\big[ \mtx{Y}^{q-2} \cdot c\mtx{Y} \big] \notag \\
	&= \frac{cq}{2} \Expect \ntr(\mtx{Y}^{q-1}). \label{eqn:moment-right}
\end{align}
The second inequality derives from the hypothesis~\eqref{eqn:matrix-variance} that $\mtx{V}_{\mtx{Y}} \psdle c\mtx{Y}$.  Note that this bound requires the fact that $\mtx{Y}^{q-2}$ is positive semidefinite.

Combine the expression~\eqref{eqn:moment-left} for the matrix $\phi$-entropy with the upper bound~\eqref{eqn:moment-right} to achieve the estimate
$$
\Expect \ntr( \mtx{Y}^q ) - \big[\Expect \ntr( \mtx{Y}^{q-1}) \big]^{q/(q-1)}
	\leq \frac{cq}{2} \Expect \ntr(\mtx{Y}^{q-1}).
$$
Rewrite this bound, and invoke the numerical fact $1 + aq \leq (1+a)^q$ to obtain
\begin{align*}
\Expect \ntr( \mtx{Y}^q )
	&\leq \big[\Expect \ntr( \mtx{Y}^{q-1}) \big]^{q/(q-1)} \left(
	1 + \frac{cq/2}{\big[ \Expect \ntr(\mtx{Y}^{q-1}) \big]^{1/q-1}} \right) \\
	&\leq \big[\Expect \ntr( \mtx{Y}^{q-1}) \big]^{q/(q-1)} \left(
	1 + \frac{c/2}{\big[ \Expect \ntr(\mtx{Y}^{q-1}) \big]^{1/q-1}} \right)^q.
\end{align*}
Extract the $q$th root from both sides to reach
$$
\big[ \Expect \ntr( \mtx{Y}^q ) \big]^{1/q}
	\leq \big[\Expect \ntr( \mtx{Y}^{q-1}) \big]^{1/(q-1)} + \frac{c}{2}.
$$
We have compared the $q$th trace moment of $\mtx{Y}$ with the $(q-1)$th trace moment.  Proceeding by iteration, we arrive at
$$
\big[ \Expect \ntr( \mtx{Y}^q ) \big]^{1/q}
	\leq \Expect \ntr \mtx{Y} + \frac{q-1}{2} \cdot c.
$$
This observation completes the proof of Theorem~\ref{thm:moment-bound-intro}.
\end{proof}

\appendix

\section{Lemma~\ref{lem:lower-variation}, The General Case} \label{app:general}

In this appendix, we explain how to prove Lemma~\ref{lem:lower-variation} in full generality.  The argument calls for a simple but powerful result, known as the generalized Klein inequality~\cite[Prop.~3]{Petz1994}, which allows us to lift a large class of scalar inequalities to matrices.

\begin{prop}[Generalized Klein Inequality] \label{prop:gen-klein}
For each $k = 1, \dots, n$, suppose that $f_k : I_1 \to \R$ and $g_k : I_2 \to \R$ are functions on intervals $I_1$ and $I_2$ of the real line.  Suppose that
$$
\sum\nolimits_{k=1}^n f_k(a) \, g_k(b) \geq 0 \quad\text{for all $a \in I_1$ and $b \in I_2$.}
$$
Then, for each natural number $d$, 
$$
\sum\nolimits_{k=1}^n \ntr[ f_k(\mtx{A}) \, g_k(\mtx{B}) ] \geq 0
\quad\text{for all $\mtx{A} \in \Sym^d(I_1)$ and $\mtx{B} \in \Sym^d(I_2)$.}
$$
\end{prop}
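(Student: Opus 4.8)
The plan is to reduce the matrix statement directly to the scalar hypothesis by inserting spectral resolutions. Write $\mtx{A} = \sum_{i=1}^d a_i \mtx{P}_i$ and $\mtx{B} = \sum_{j=1}^d b_j \mtx{Q}_j$ for the spectral decompositions, where $\{\mtx{P}_i\}$ and $\{\mtx{Q}_j\}$ are families of orthogonal projectors and the eigenvalues satisfy $a_i \in I_1$ and $b_j \in I_2$ because $\mtx{A} \in \Sym^d(I_1)$ and $\mtx{B} \in \Sym^d(I_2)$. By Definition~\ref{defn:matrix-function}, $f_k(\mtx{A}) = \sum_i f_k(a_i)\mtx{P}_i$ and $g_k(\mtx{B}) = \sum_j g_k(b_j)\mtx{Q}_j$. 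Expanding the product and using the linearity of the normalized trace, I would obtain
\[
\sum_{k=1}^n \ntr[ f_k(\mtx{A})\, g_k(\mtx{B}) ]
	= \sum_{i,j=1}^d \Bigl( \sum_{k=1}^n f_k(a_i)\, g_k(b_j) \Bigr) \, \ntr[ \mtx{P}_i \mtx{Q}_j ].
\]

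It then remains to check that each summand on the right is nonnegative. The bracketed scalar is nonnegative by the hypothesis of the proposition, applied at the point $(a_i, b_j) \in I_1 \times I_2$. The coefficient $\ntr[\mtx{P}_i\mtx{Q}_j]$ is nonnegative because the trace of a product of two positive-semidefinite matrices is nonnegative; concretely, $\trace[\mtx{P}_i\mtx{Q}_j] = \trace[\mtx{P}_i\mtx{Q}_j\mtx{Q}_j\mtx{P}_i] = \fnormsq{\mtx{Q}_j\mtx{P}_i} \geq 0$, using idempotence and self-adjointness of the projectors together with the cyclicity of the trace. Summing nonnegative terms gives $\sum_k \ntr[ f_k(\mtx{A})\, g_k(\mtx{B}) ] \geq 0$, which is exactly the claim.

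There is no serious obstacle here; the entire content is the spectral bookkeeping above. The single point that warrants a word of care is that the intervals $I_1$ and $I_2$ are precisely what guarantees $(a_i, b_j) \in I_1 \times I_2$ for every eigenvalue pair that actually appears, so the scalar hypothesis is invoked only at admissible arguments. An equivalent alternative would be to diagonalize $\mtx{A}$, express $\mtx{B}$ in that eigenbasis, and compare the resulting quadratic forms entrywise, but the projector expansion is the most transparent route and requires no choice of basis.
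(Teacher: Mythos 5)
Your argument is correct: the paper does not prove this proposition (it is quoted from Petz's work), and the spectral-resolution expansion you give, reducing the trace to the sum $\sum_{i,j} \bigl( \sum_k f_k(a_i) g_k(b_j) \bigr) \ntr[\mtx{P}_i \mtx{Q}_j]$ with both factors nonnegative, is exactly the standard proof of that cited result. The key observation $\trace[\mtx{P}_i\mtx{Q}_j] = \fnormsq{\mtx{Q}_j\mtx{P}_i} \geq 0$ is verified correctly, so nothing is missing.
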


\begin{proof}[Proof of Lemma~\ref{lem:lower-variation}, General Case]
We retain the notation from Lemma~\ref{lem:lower-variation}.  In particular, we assume that $\mtx{Z}$ is a random positive-definite matrix for which $\norm{ \mtx{Z} }$ and $\norm{\phi(\mtx{Z})}$ are both integrable.  We also assume that $\mtx{T}$ is a random positive-definite matrix with $\norm{\mtx{T}}$ and $\norm{\phi(\mtx{T})}$ integrable.

For $n \in \mathbb{N}$, define the function $l_n(a) := (a \vee 1/n)\wedge n$, where $\vee$ denotes the maximum operator and $\wedge$ denotes the minimum operator.  Consider the random matrices $\mtx{Z}_n := l_n(\mtx{T})$ and $\mtx{T}_k := l_k(\mtx{T})$ for each $k, n \in \mathbb{N}$.  These matrices have eigenvalues that are bounded and bounded away from zero, so these entities satisfy the inequality~\eqref{eqn:ineq} we have already established.
$$
H_{\phi}(\mtx{Z}_n) \geq \Expect \ntr\big[
	(\psi(\mtx{T}_k) - \psi(\Expect \mtx{T}_k))(\mtx{Z}_n - \mtx{T}_k)
	+ \Expect \phi(\mtx{T}_k - \phi(\Expect \mtx{T}_k) \big].
$$
Rearrange the terms in this inequality to obtain
\begin{equation} \label{eqn:ineq-kn}
\Expect \ntr \mtx{\Gamma}(\mtx{Z}_n, \mtx{T}_k) \geq
	\ntr\big[ - \psi(\Expect \mtx{T}_k) (\Expect \mtx{Z}_n - \Expect \mtx{T}_k)
	- \phi(\Expect \mtx{T}_k) + \phi(\Expect \mtx{Z}_n) \big],
\end{equation}
where we have introduced the function
$$
\mtx{\Gamma}(\mtx{A}, \mtx{B}) := \phi(\mtx{A}) - \phi(\mtx{B}) - (\mtx{A} - \mtx{B}) \psi(\mtx{B})
\quad\text{for $\mtx{A}, \mtx{B} \in \Sym^d_{++}$.}
$$
To complete the proof of Lemma~\ref{lem:lower-variation}, we must develop the bound
\begin{equation} \label{eqn:lim-bound}
\Expect \ntr \mtx{\Gamma}(\mtx{Z}, \mtx{T})
	\geq \ntr\big[ - \psi(\Expect \mtx{T}) (\Expect \mtx{Z} - \Expect \mtx{T})
	- \phi(\Expect \mtx{T}) + \phi(\Expect \mtx{Z}) \big]
\end{equation}
by driving $k, n \to \infty$ in~\eqref{eqn:ineq-kn}.

Let us begin with the right-hand side of~\eqref{eqn:ineq-kn}.  We have the sure limit $\mtx{Z}_n \to \mtx{Z}$.  Therefore, the Dominated Convergence Theorem guarantees that $\Expect \mtx{Z}_n \to \Expect \mtx{Z}$ because $\norm{\mtx{Z}}$ is integrable and $\norm{\mtx{Z}_n} \leq \norm{\mtx{Z}}$.  Likewise, $\Expect \mtx{T}_k \to \Expect \mtx{T}$.  The functions $\phi$ and $\psi$ are continuous, so the limit of the right-hand side of~\eqref{eqn:ineq-kn} satisfies
\begin{multline} \label{eqn:ineq-kn-rhs}
\ntr\big[ - \psi(\Expect \mtx{T}_k) (\Expect \mtx{Z}_n - \Expect \mtx{T}_k)
	- \phi(\Expect \mtx{T}_k) + \phi(\Expect \mtx{Z}_n) \big] \\
	\to \ntr\big[ - \psi(\Expect \mtx{T}) (\Expect \mtx{Z} - \Expect \mtx{T})
	- \phi(\Expect \mtx{T}) + \phi(\Expect \mtx{Z}) \big].
\end{multline}
This expression coincides with the right-hand side of~\eqref{eqn:lim-bound}.

Taking the limit of the left-hand side of~\eqref{eqn:ineq-kn} is more involved because the function $\psi$ may grow quickly at zero and infinity.  We accomplish our goal in two steps.  First, we take the limit as $n \to \infty$.  Afterward, we take the limit as $k \to \infty$.

Introduce the nonnegative function
$$
\gamma(z,t) := \phi(z) - \phi(t) - (z-t) \psi(t)
\quad\text{for $z, t > 0$.}
$$
Boucheron et al.~\cite[p.~525]{Boucheron2005} establish that
\begin{equation} \label{eqn:gamma-sum}
\gamma(l_n(z), l_k(t)) \leq \gamma(1, l_k(t)) + \gamma(z, l_k(t))
\quad\text{for $z, t > 0$.}
\end{equation}
The generalized Klein inequality, Proposition~\ref{prop:gen-klein}, can be applied (with due diligence) to extend~\eqref{eqn:gamma-sum} to matrices.  In particular,
$$
\ntr \mtx{\Gamma}( \mtx{Z}_n , \mtx{T}_k )
	= \ntr \mtx{\Gamma}( l_n(\mtx{Z}), l_k(\mtx{T} ))
	\leq \ntr[ \mtx{\Gamma}(\Id, l_k(\mtx{T})) + \mtx{\Gamma}(\mtx{Z}, l_k(\mtx{T}))]
	= \ntr[ \mtx{\Gamma}(\Id, \mtx{T}_k) + \mtx{\Gamma}(\mtx{Z}, \mtx{T}_k)].
$$
Observe that the right-hand side of this inequality is integrable.  Indeed, all of the quantities involving $\mtx{T}_k$ are uniformly bounded because the eigenvalues of $\mtx{T}_k$ fall in the range $[k^{-1}, k]$ and the functions $\phi$ and $\psi$ are continuous on this interval.  The terms involving $\mtx{Z}$ may not be bounded, but they are integrable because $\norm{\mtx{Z}}$ and $\norm{\phi(\mtx{Z})}$ are integrable.  We may now apply the Dominated Convergence Theorem to take the limit:
\begin{equation} \label{eqn:n-lim}
\Expect \ntr \mtx{\Gamma}(\mtx{Z}_n, \mtx{T}_k)
	\to \Expect \ntr \mtx{\Gamma}(\mtx{Z}, \mtx{T}_k)
	\quad\text{as $n \to \infty$,}
\end{equation}
where we rely again on the sure limit $\mtx{Z}_n \to \mtx{Z}$ as $n \to \infty$.

Boucheron et al.~also establish that
$$
\gamma(z, l_k(t)) \leq \gamma(z, 1) + \gamma(z, t)
\quad\text{for $z, t > 0$.}
$$
The generalized Klein inequality, Proposition~\ref{prop:gen-klein}, ensures that
$$
\ntr \mtx{\Gamma}( \mtx{Z} , \mtx{T}_k ) \leq
	\ntr[ \mtx{\Gamma}(\mtx{Z}, \Id) + \mtx{\Gamma}(\mtx{Z}, \mtx{T}) ].
$$
We may assume that the second term on the right-hand side is integrable or else the desired inequality~\eqref{eqn:lim-bound} would be vacuous.  The first term is integrable because $\norm{ \mtx{Z} }$ and $\norm{\phi(\mtx{Z})}$ are integrable.  Therefore, we may apply the Dominated Convergence Theorem:
\begin{equation} \label{eqn:k-lim}
\Expect \ntr \mtx{\Gamma}(\mtx{Z}, \mtx{T}_k)
	\to \Expect \ntr \mtx{\Gamma}(\mtx{Z}, \mtx{T})
	\quad\text{as $k \to \infty$,}
\end{equation}
where we rely again on the sure limit $\mtx{T}_k \to \mtx{T}$ as $k \to \infty$.

In summary, the limits~\eqref{eqn:n-lim} and~\eqref{eqn:k-lim} provide that $\Expect \ntr \mtx{\Gamma}(\mtx{Z}_n, \mtx{T}_k) \to \Expect \ntr \mtx{\Gamma}(\mtx{Z}, \mtx{T})$ as $k, n \to \infty$.  In view of the limit~\eqref{eqn:ineq-kn-rhs}, we have completed the proof of~\eqref{eqn:lim-bound}.
\end{proof}

\section*{Acknowledgments}

RYC and JAT are with the Department of Computing and Mathematical Sciences, California Institute of Technology.  JAT gratefully acknowledges support from ONR awards N00014-08-1-0883 and N00014-11-1002, AFOSR award FA9550-09-1-0643, and a Sloan Research Fellowship.  JAT also wishes to thank the Moore Foundation.

\bibliographystyle{alpha}
\bibliography{ref}
\end{document}